\DeclareMathOperator*{\E}{\mathbb{E}}
\let\poly\relax
\DeclareMathOperator*{\poly}{poly}
\DeclareMathOperator*{\wt}{wt}
\DeclareMathOperator*{\probability}{\Pr}
\newcommand{\vol}{\text{vol}}
\newcommand{\codim}{\text{codim}}
\newcommand{\supp}{\text{supp}}
\renewcommand{\skew}{\textsc{Skew}\xspace}
\newcommand\F{\mathbb{F}}
\newcommand\R{\mathbb{R}}
\newcommand\st{\text{s.t.}\xspace}
\newcommand\eps{\epsilon}
\newcommand\norm[1]{\left\| #1 \right\|}
\newcommand\inorm[1]{\norm {#1 }_\infty}
\newcommand\pnorm[2]{\norm {#1 }_{#2}}
\newcommand\restr[2]{\ensuremath{\left.#1\right|_{#2}}}
\newcommand\expect[1]{\E\left[ #1 \right]}
\newcommand\expectarg[2]{\E_{#2}\left[ #1 \right]}
\newcommand\prob[1]{\probability\left( #1 \right)}
\newcommand\hatf{\widehat f}
\newcommand\hatpsi{\widehat{\psi}}
\newcommand\hatmu{\widehat{\mu}}
\newcommand\hatomega{\widehat{\omega}}
\newcommand\fsr{\textsc{FindSkew$^+$}\xspace}
\newcommand\fsn{\textsc{FindSkew$^-$}\xspace}
\newcommand\ffc{\textsc{FindFourierCoefficients}\xspace}
\newcommand\pc{\textsc{PreprocessCoefficients}\xspace}
\newcommand\dsc{\textsc{DeduceSubcubeCoefficients}\xspace}
\newcommand\Tr{\mathrm{Tribes}}
\def\C{\mathcal{C}}
\def\x{\mathbf{x}}
\def\y{\mathbf{y}}
\def\z{\mathbf{z}}
\newcommand{\TODO}[1]{}
\newcommand{\remove}[1]{}
\newcommand*{\ip}[1]{\left\langle #1 \right\rangle}
\newcommand{\defeq}{:=}
\newtheorem*{Conj*}{Conjecture}
\let\oldproofname=\proofname
\renewcommand{\proofname}{\rm\bf{\oldproofname:}}
\newcommand{\psc}{\mathrm{PIDScore}}
\newcommand{\D}{\mathcal{D}}
\newcommand{\eat}[1]{}
\newcommand{\rgta}{\ensuremath{\rightarrow}}
\newcommand{\zo}{\{0,1\}}
\newcommand{\pmo}{\ensuremath \{ \pm 1\}}
\theoremstyle{plain}
\newtheorem{theorem}{Theorem}
\newtheorem{lemma}{Lemma}
\newtheorem{fact}{Fact}
\newtheorem{corollary}[lemma]{Corollary}
\newtheorem{problem}{Problem}
\theoremstyle{definition}
\newtheorem{definition}{Definition}
\newlength{\continueindent}
\newcommand*{\ALG@customparshape}{\parshape 2 \leftmargin \linewidth \dimexpr\ALG@tlm+\continueindent\relax \dimexpr\linewidth+\leftmargin-\ALG@tlm-\continueindent\relax}
\apptocmd{\ALG@beginblock}{\ALG@customparshape}{}{\errmessage{failed to patch}}
\def\thm@space@setup{%
	\thm@preskip=\parskip \thm@postskip=0pt
}
\newcommand{\ALGtikzmarkcolor}{black}% customise this, if you want
\newcommand{\ALGtikzmarkextraindent}{4pt}% customise this, if you want
\newcommand{\ALGtikzmarkverticaloffsetstart}{-.5ex}% customise this, if you want
\newcommand{\ALGtikzmarkverticaloffsetend}{-.5ex}% customise this, if you want
\newcounter{ALG@tikzmark@tempcnta}
\newcommand\ALG@tikzmark@start{%
	\global\let\ALG@tikzmark@last\ALG@tikzmark@starttext%
	\expandafter\edef\csname ALG@tikzmark@\theALG@nested\endcsname{\theALG@tikzmark@tempcnta}%
	\tikzmark{ALG@tikzmark@start@\csname ALG@tikzmark@\theALG@nested\endcsname}%
	\addtocounter{ALG@tikzmark@tempcnta}{1}%
}
\def\ALG@tikzmark@starttext{start}
\newcommand\ALG@tikzmark@end{%
	\ifx\ALG@tikzmark@last\ALG@tikzmark@starttext
	% ignore this, the block was opened then closed directly without any other blocks in between (so just a \State basically)
	% don't draw a vertical line here
	\else
	\tikzmark{ALG@tikzmark@end@\csname ALG@tikzmark@\theALG@nested\endcsname}%
	\tikz[overlay,remember picture] \draw[\ALGtikzmarkcolor] let \p{S}=($(pic cs:ALG@tikzmark@start@\csname ALG@tikzmark@\theALG@nested\endcsname)+(\ALGtikzmarkextraindent,\ALGtikzmarkverticaloffsetstart)$), \p{E}=($(pic cs:ALG@tikzmark@end@\csname ALG@tikzmark@\theALG@nested\endcsname)+(\ALGtikzmarkextraindent,\ALGtikzmarkverticaloffsetend)$) in (\x{S},\y{S})--(\x{S},\y{E});%
	\fi
	\gdef\ALG@tikzmark@last{end}%
}
\apptocmd{\ALG@beginblock}{\ALG@tikzmark@start}{}{\errmessage{failed to patch}}
\pretocmd{\ALG@endblock}{\ALG@tikzmark@end}{}{\errmessage{failed to patch}}
\newcounter{algorithmicH}% New algorithmic-like hyperref counter
\let\oldalgorithmic\algorithmic
\renewcommand{\algorithmic}{%
	\stepcounter{algorithmicH}% Step counter
	\oldalgorithmic}% Do what was always done with algorithmic environment
\renewcommand{\theHALG@line}{ALG@line.\thealgorithmicH.\arabic{ALG@line}}
\title{Finding Skewed Subcubes Under a Distribution}
\author{
  Parikshit Gopalan\\
  VMware Research\\
  {\tt pgopalan@vmware.com}\and
  Roie Levin \\
  Carnegie Mellon University\\
  {\tt roiel@cs.cmu.edu}\thanks{Work done while an intern at VMware Research.} \and
  Udi Wieder\\
  VMware Research\\
  {\tt uwieder@vmware.com}
}
\date{}
\begin{document}
\thispagestyle{empty}
\maketitle

\begin{abstract}

Say that we are given samples from a distribution $\psi$ over an
$n$-dimensional space. We expect or desire $\psi$ to
behave like a product distribution (or a $k$-wise independent
distribution over its marginals for small $k$). We propose the problem of
enumerating/list-decoding all large subcubes where the distribution $\psi$ deviates
markedly from what we expect; we refer to such subcubes as skewed
subcubes. Skewed subcubes are certificates of dependencies between
small subsets of variables in $\psi$. We motivate this problem by
showing that it arises naturally in the context of algorithmic
fairness and anomaly detection.

In this work we focus on the special but important case where the
space is the Boolean hypercube, and the expected marginals are uniform.
We show that the obvious definition of skewed subcubes can lead to
intractable list sizes, and propose a better definition of a minimal
skewed subcube, which are subcubes whose skew cannot be attributed
to a larger subcube that contains it. Our main technical
contribution is a list-size bound for this definition and an
algorithm to efficiently find all such subcubes. Both the bound and
the algorithm rely on Fourier-analytic techniques, especially the
powerful hypercontractive inequality.

On the lower bounds side, we show that finding skewed subcubes is as
hard as the sparse noisy parity problem, and hence our algorithms
cannot be improved on substantially without a breakthrough on this
problem which is believed to be intractable. Motivated by this, we
study alternate models allowing query access to $\psi$ where finding
skewed subcubes might be easier.

%	We give a combinatorial bound on the number of skewed subcubes of a distribution, as a function of the well-spreadness of that distribution. If the maximum value the density function attains over the hypercube is less than $t2^{-n}$, we obtain the bound $k^{O(k)} \max(1, \ln(t))^k (e^2 \gamma)^{-k}$. This is always better than the trivial bound of $O(n^k)$, and often much tighter. Furthermore, we give an algorithm running in time $n^{0.8k} \cdot k^{O(k)} \max(1, \ln(\inorm{\psi}))^k (e^2 \gamma)^{-(k + O(1))}$ to enumerate all such cubes. We further show that our problem captures learning $k$-parity with noise which is conjectured to be hard to solve in time $n^{o(k)}$. Thus our algorithm runtime nearly achieves the product of an output size lower bound and a computation time lower bound.
\end{abstract}

\newpage

\section{Introduction}

Assume that we observe samples from a distribution $\psi$ over
points in $n$-dimensional space $\D^n$. Our prior belief is that each attribute has a marginal
distribution $\mu_i$ and that the various
attributes are nearly independent (or at least $k$-wise independent
for small $k$), hence $\psi$ is close to the product distribution $\mu
=\prod_i \mu_i$. Our goal is to find significant deviations between
our hypothesis $\mu$ and the observed distribution $\psi$, manifested
as significant dependencies between small sets of
variables. The distribution $\mu$ might represent either a prior model
for $\psi$, or it might be represent a target distribution that we
wish $\psi$ to be close to.  This problem arises naturally in several machine learning applications as we detail in Section~\ref{sec:motivation}, but first we formulate the problem with more detail.

To formulate a precise statement, we first define the notion of subcubes.
Assume that $\D$ is ordered and bounded, the two canonical examples are
$\D^n = \zo^n$ and $\D^n = [0,1]^n$. Let $K
\subseteq [n]$ be a set of $k$ coordinates. For $x \in \D^n$, $x_K$ denotes
the projection of $x$ onto coordinates in $K$. For each $j \in K$,
let $I_j \subsetneq \D$ be an interval in $\D$.
We call the  set of points $C = \{x \in \D^n: x_K \in \prod_{j \in K}I_j\}$
a subcube of codimension $k$. We have
\[ \mu(C) \defeq \Pr_{\x \sim \mu}[\x \in C] = \prod_{j \in K}\Pr_{\x_j \sim \mu_j}[\x_j \in
  I_j] =   \prod_{j \in K} \mu_{j}(I_j).\]
If we similarly define $\psi(C) \defeq \Pr_{\x \sim \psi}[\x \in C]$,
then our goal is to find subcubes such that $\left| \mu(C) - \psi(C)
\right| \geq \gamma$. Motivated by our applications, we add two more
desiderata to our problem formulation (that will be justified
shortly): we restrict to large subcubes, and we want algorithms that
enumerate all subcubes that satisfy our conditions.

One way to restrict to large subcubes is to only consider subcubes
with $\mu(C) \geq \eta$ for some $\eta \in [0,1]$. Alternately, we
could bound the codimension by $k$. The advantage of the latter is
that we only need that $\mu$ is $k$-wise independent for the equality
$\mu(C) = \prod_{j \in K} \mu_j(I_j)$ to hold. In the discrete case $\D^n =
\zo^n$, the two notions coincide since $\mu(C) = 2^{-k}$ for subcubes
of codimension $k$.

Rather then phrasing this as an optimization question where the goal
is to find the subcube that maximizes the deviation $\gamma$, our goal
will be to come up with a {\em list-decoding} style algorithm that
enumerates over all subcubes of codimension $k$ such that $\left|
\mu(C) - \psi(C) \right| \geq \gamma$.

In addition to being a natural algorithmic question in its own right,
this problem comes up in recent work in machine learning, on anomaly
detection and fairness.

\subsection{Motivation}\label{sec:motivation}
\subsection*{Fairness in Machine Learning}

Assume there is a base population $P$ of individuals, each described by $n$
attributes. We naturally view $P$ as inducing a distribution $\mu$ on
the attribute space $\D^n$. Suppose that small subsets of the attributes are nearly
independent, so that $\mu$ is close to being $k$-wise
independent for some $k$ which is small compared to $n$. We are given a distribution $\psi$
over this population. Our goal is to discover significant biases in the distribution
that are not present in the original population $P$. For instance the
population $P$ might be the set of students that apply to a
university, and $\psi$ might represent the set of successful
applicants. Or $P$ might be the training data for a machine
learning algorithm while $\psi$ represents the misclassified inputs.
The latter setting has received a fair amount of
attention in the context of algorithmic bias and fairness in Machine learning, where the
most commonly studied notion is that of intersectionality bias \cite{cabrera2019fairvis}: we are
interested in biases where we restrict the values of some small subset
of attributes, which are typically discrete. See for instance a recent study
showing that facial recognition software has higher error rates for
women of color \cite{BuolamwiniG18}. Our motivation for considering
subcubes is that it captures intersectionality in the discrete setting.

Enumerating over all subcubes is more appropriate than optimization
in this setting since not all intersectionalities might be equally important.
The fact that college applications submitted during certain
days of the week are less likely to be accepted might not be as
significant as the fact that certain zipcodes are less likely to be
accepted; even if the deviation is lower in the latter case.
We ask for algorithms that enumerate over all biased subcubes and leave it to subject experts to
decide how interesting these are, just as in list-decoding we do not
worry about how the receiver chooses from the list of possible
codewords returned by the decoder. Another reason to favor enumeration
is that in real-world datasets, we may not expect $\psi$ to be truly
$k$-wise independent; we might expect correlations
between certain sets of attributes. But even so, an
exhaustive list of significant correlations might lead us to discover
interesting new properties of the distribution and refine our model
for $\psi$. The restriction to subcubes of bounded codimension is
natural since intersectionalities of few attributes are more interesting.

\subsection*{Anomaly Detection}

Anomaly detection is a ubiquitous unsupervised learning problem \cite{ChandolaBK09}.
Isolation based methods for anomaly detection have proven to be
extremely effective in practice \cite{iForest, emmott2013systematic,
  GuhaMRS16}. Building on this, the recent work of \cite{GopalanSW19} proposes
an approach to anomaly detection based on a notion called Partial
Identification. It assigns a score denoted $\psc(x, P)$ to each
point $x \in P$ which measures how easy it is to distinguish $x$ from
other points in $P$. They give a heuristic to compute
$\psc(x, P)$, and show that the resulting anomaly detection algorithm
outperforms several popular anomaly detection
methods, across a broad range of benchmarks.

Formally, given a set of points $P \subseteq \D^n$ and a
subcube $C \in \D^n$, define the sparsity of $C$ as
\[ \rho(C) = \frac{\vol(C)}{|C \cap P|} \]
The PIDScore of a point $x \in P$ is the maximum  value of $\rho(C)$
over all subcubes that contain it.
\[ \psc(x, P) = \max_{C \ni x} \rho(C). \]
Anomalous points are those for which $\psc(x, P) \geq t$ for some
threshold $t$. Equivalently, it suffices to find all $C$ such that
$\rho(C) \geq t$, and then take all the points contained in them.

To relate this to our problem, let us take $\mu$ to be the uniform
measure over $\D^n$ and $\psi$ to be the measure induced by $P$.
Rescaling $\rho$ by a factor of $|P|/\vol(\D^n)$, we get
\[ \rho'(C) = \frac{\vol(C)}{\vol(\D^n)}\frac{|P|}{|C \cap P|}
  \approx \frac{\mu(C)}{\psi(C)}.\footnotemark\]
  \footnotetext{Actually $|C \cap P|/|P| =
  	\psi(C)$ only in
  	expectation, but since subcubes have small VC dimension, we get
  	tight concentration.}

If we also scale the threshold $t$ by the same factor, then the set of outliers
stays the same. But $\rho'(C) \geq t'$ implies $\psi(C) \leq
\mu(C)/t'$, hence
\[ \frac{\mu(C) - \psi(C)}{\mu(C)} \geq 1 - \frac{1}{t'}.\] Thus this is an
    instance of the problem that we consider, where our goal is to
    find non-empty subcubes that are underrepresented in $\psi$, when
    compared to $\mu$.
    Enumeration over all sparse subcubes is natural
    in this setting, since we wish to list all points with high scores.
%Since the goal is to maximize $\rho(C)$ and the trivial subcube $C =\D^n$
%gives $\rho(C) = \vol(\D^n)/n$, we only need to consider non-empty
%subcubes where $\vol(C) \geq \vol(\D^n)/n$.

\section{Our Results}

In this paper, we focus on the case when $\D^n = \pmo^n$ and
$\mu$ is the uniform distribution. We believe that several of our
techniques apply to more general product distributions.
A subcube of codimension $k$ is
obtained by restricting the values of some subset $K$ of
coordinates. For subcubes $C \subseteq D$ we refer to $C$ as a child
of $D$ and $D$ as a parent of $C$.

As a warm-up we first consider the following problem:
\begin{problem}
  {\bf Finding skewed subcubes.}
  Given sample access to a distribution $\psi$ over $\pmo^n$ and $\gamma \in (0, 2^k
  -1]$ find all subcubes $C$
  with codimension $j \leq k$ such that
  \[ \left|\frac{\Pr_{\x \sim \psi}[\x \in C] - 2^{-j}}{2^{-j}}\right| \geq
  \gamma. \]
\end{problem}

There is a trivial $\tilde O(n^k)$ algorithm that enumerates over all
subcubes. To beat this naive bound, we first need to bound the list-size of the output, or rather a
bound on the number of skewed subcubes. However, we show in
\cref{lem:comb_lb} that there exist distributions where the number of
skewed subcubes is $\Omega((n/k)^k)$, which is not far from the
trivial upper bound.

The proof of  \cref{lem:comb_lb} demonstrates that one source for the abundance of skewed subcubes is that skew is easily inherited by
children from their parents: if a subcube $C$ of codimension $j$ is skewed,
for every choice of $k -j$ additional coordinates, by simple
averaging, there is at least one  restriction that results in a skewed
subcube. So even if we consider the uniform distribution over points
with $x_1 =1$, there are $\Omega(n^{k-1})$ skewed subcubes by this
definition, while really the only interesting subcube is the $x_1 =1$
subcube. Our first contribution is a definition which captures only those subcubes that do not inherit their skew from a parent.

\begin{problem}
  {\bf Finding minimal skewed subcubes.}
  Given sample access to a distribution $\psi$ over $\pmo^n$, $\gamma \in (0, 2^k
  -1]$ and $\eps \in (0,1)$ find all subcubes $C$
  with codimension $j \leq k$ such that
  \[ \left|\frac{\Pr_{\x \sim \psi}[\x \in C] - 2^{-j}}{2^{-j}}\right| \geq
  \gamma. \]
  and for every parent $C' \supsetneq C$ of codimension $i$,
  \[ \left|\frac{\Pr_{\x \sim \psi}[\x \in C] - 2^{-i}}{2^{-i}}\right| \leq
  \gamma(1 - \eps). \footnotemark\]
\footnotetext{The formal definition of
	minimal skewed subcubes (Definition \ref{def:min_skew}) is a little more
	involved, we only care about those parents of $C$ which are skewed the
	same way as $C$.}
\end{problem}

We refer to such a subcube as a $(\gamma, \epsilon)$-minimal skewed
subcube. This notion is motivated by our applications: if we already know
that $\Pr_{\x \sim \psi}[(\x_1 = 1) \wedge (\x_2 = 1)] = 3/4$ (rather
than $1/4$), then knowing that
$\Pr_{\x \sim \psi}[(\x_1 = 1) \wedge (\x_2 = 1) \wedge (\x_3 =1)] =
3/8$ should not surprise us, given our prior.

A natural question to ask is whether focusing in minimal skewed
subcubes suffices to make the problem (or at least the list size) more
tractable. Our second contribution is a bound on the number of minimal
skewed subcubes which is independent of the dimension $n$. Instead
we have a dependence on the max norm of the probability
distribution defined below.

\begin{def}
  \label{def:max-norm}
  Given a distribution $\psi$ on $\pmo^n$, let $\inorm{\psi} \defeq
  2^n\cdot \max_{x \in \pmo^n}\Pr_{\x \sim \psi}[\x =x]$.
\end{def}
The parameter $\inorm{\psi}$  lies in the range $[1, 2^n]$ and is a
measure of how well-spread the distribution is. It is referred to as
the smoothness of a distribution in the literature on boosting, and is
closely related to min-entropy. The uniform distribution has $t=1$,
whereas $t = 2^n$ when the entire distribution is concentrated on a
single point.

\begin{theorem}\label{thm:main1}
  For a distribution $\psi$ on $\pmo^n$, the number of
  $(\gamma, \eps)$-minimal skewed subcubes of codimension at most $k$
  is bounded by  $k^{O(k)}\left(\ln(e
  \inorm{\psi})\poly(1/\eps,1/\gamma)\right)^k$.
\end{theorem}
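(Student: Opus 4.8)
My plan is to translate the subcube-probability deviation into a statement about Fourier coefficients of $\psi$, and then use hypercontractivity to bound how many "large" Fourier-mass events can coexist, exploiting minimality to avoid double-counting across parent-child chains. Write $\widehat\psi(S)$ for the Fourier coefficients of the density $2^n\psi$ (so $\widehat\psi(\emptyset)=1$); then for a subcube $C$ fixing coordinates in $K$ to a pattern $z\in\pmo^K$, one has $\Pr_{\x\sim\psi}[\x\in C] = 2^{-|K|}\sum_{S\subseteq K}\widehat\psi(S)\,z^S$. Hence the skew condition $|\Pr[\x\in C]-2^{-j}|\ge \gamma 2^{-j}$ is exactly $\bigl|\sum_{\emptyset\ne S\subseteq K}\widehat\psi(S)z^S\bigr|\ge \gamma$. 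The first step is to record this identity and observe that a skewed subcube must have some nonempty $S\subseteq K$ with $|\widehat\psi(S)|\ge \gamma/(2^{k}-1)$ (crudely) — but that alone gives a bound in terms of the number of large Fourier coefficients, which via Parseval and hypercontractivity (the level-$k$ inequality) is at most $\poly(1/\gamma)^{O(k)}\cdot\|\psi\|$-type quantities, not yet the $\ln\|\psi\|_\infty$ we want. So the real work is to replace $\|\psi\|_2^2 = \sum_S \widehat\psi(S)^2$, which can be as large as $\|\psi\|_\infty$, by its logarithm; this is where the parameter $\inorm{\psi}$ enters through a more careful use of hypercontractivity applied not to $\psi$ itself but to a truncation/noise-smoothed version $T_\rho\psi$, or via an entropy-style argument: the KL divergence of $\psi$ from uniform is at most $\ln\inorm{\psi}$, and each well-separated skewed subcube forces a definite amount of divergence.

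The core of the argument will be a charging/packing scheme using the minimality hypothesis. For a $(\gamma,\eps)$-minimal skewed subcube $C$ on coordinate set $K$ with pattern $z$, minimality says that no skewed-the-same-way parent exists, i.e., for every proper $K'\subsetneq K$ the partial sum $\sum_{\emptyset\ne S\subseteq K'}\widehat\psi(S)z^S$ is at most $\gamma(1-\eps)$ in the relevant direction. Subtracting, the "new" Fourier mass $\sum_{S:\, S\subseteq K,\ S\not\subseteq K'}\widehat\psi(S)z^S$ must be at least $\gamma\eps$ for every single deleted coordinate — intuitively, every coordinate in $K$ is genuinely needed, and there is a robust chunk of Fourier weight at the top level(s). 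I would make this precise by showing that each minimal skewed $C$ "owns" a Fourier coefficient (or a small bundle of them) of magnitude $\gtrsim \gamma\eps/k^{O(k)}$ supported on a set $S$ with $S\subseteq K$ and $|S|$ comparable to $|K|$, in a way that the owning map is $k^{O(k)}$-to-one (several subcubes may share a coefficient, but not too many, because the pattern $z$ on $S$ is essentially determined up to sign by which "way" the skew goes, and the remaining coordinates $K\setminus S$ are few). Summing $\widehat\psi(S)^2$ over the owned coefficients and invoking the hypercontractive level-$k$ inequality $\sum_{|S|\le k}\widehat\psi(S)^2 \le (\text{const})^k (\ln \inorm\psi)^{?}$ — or more carefully, that the Fourier weight of $\psi$ at levels $\le k$ that exceeds a threshold $\tau$ in $\ell_\infty$ has cardinality at most $(\ln(e\inorm\psi)/\tau^2)^{O(k)}$ after the standard $T_\rho$ trick — yields the claimed count $k^{O(k)}(\ln(e\inorm\psi)\,\poly(1/\eps,1/\gamma))^k$.

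Concretely, the steps in order: (1) Fourier-expand $\psi(C)$ and restate both the skew and the minimality conditions as inequalities on partial Fourier sums over $z$; (2) prove a "local" lemma that a minimal skewed subcube on $K$ forces, for each $i\in K$, a gap of $\gamma\eps$ between the sums over subsets containing $i$ versus not, and deduce that some nonempty $S\subseteq K$ has $|\widehat\psi(S)|\ge \gamma\eps\cdot k^{-O(k)}$ with the extra structural control that $K\setminus S$ is small and $z|_S$ is constrained; (3) bound the number of $(S, z|_S, K\setminus S)$-triples that can arise from distinct minimal subcubes by a crude counting ($k^{O(k)}$ choices of the small set $K\setminus S$ and $O(1)$ relevant sign patterns per $S$), reducing to counting the admissible $S$'s; (4) bound the number of such $S$ via hypercontractivity applied to a noise-damped $T_\rho\psi$, using $\|T_\rho\psi\|_2^2 \le \inorm\psi^{O(1)}$ crudely but then the level-$k$ weight being $\lesssim (\ln \inorm\psi)^{O(k)}$ after optimizing $\rho$ — this is the standard way $\ln\inorm\psi$ replaces $\inorm\psi$; (5) multiply the bounds. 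The main obstacle I anticipate is step (2)/(4): extracting from the combinatorial minimality condition a single Fourier coefficient that is both large \emph{and} high-degree (so that the counting in step 4 is over high levels, where hypercontractivity is strong), rather than merely large — a low-degree large coefficient could be shared by too many subcubes. Handling this likely needs an inductive peeling over codimension, charging $C$ against the coefficient witnessing its skew that is \emph{not} already "spent" by a skewed ancestor, which is exactly what minimality guarantees exists.
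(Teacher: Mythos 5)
Your overall strategy---Fourier-expanding the skew condition, using minimality to extract large coefficients, and counting those via the hypercontractive level-$k$ inequality---is the same as the paper's, and the level-$k$ bound you are unsure about in step (4) is exactly \cref{thm:level_k}: $W^{\leq k}(\psi)\leq e^2(\ln(e\inorm{\psi}))^k$, so that part of your worry is unfounded. The genuine gap is in steps (2)--(3). Subtracting the parent condition for $K'=K\setminus\{i\}$ only gives $\sum_{S\ni i,\,S\subseteq K}\hatpsi(S)\chi_S(z)\geq \eps\gamma$ for each $i\in K$, and this can be realized entirely by \emph{singletons}: take $\psi$ a product measure with bias $\approx\gamma/k$ on a set $M$ of $m\approx k\ln(\inorm{\psi})/\gamma$ coordinates and $\eps\approx 1/k$; then essentially every $k$-subset $K\subseteq M$ (with the all-ones pattern) is a minimal skewed subcube, yet the only coefficients it "owns" have $|S|=1$ and $|K\setminus S|=k-1$. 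The set $K\setminus S$ then ranges over $\binom{m-1}{k-1}\gg k^{O(k)}$ possibilities, so the owning map is not $k^{O(k)}$-to-one and the crude count in step (3) fails. There is no way to guarantee a single large high-degree coefficient; the paper's Tribes and BCH lower-bound constructions are of exactly this diffuse flavor.

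The correct resolution is the "inductive peeling" you gesture at in your final sentence, but it must be carried out on \emph{restrictions} of $\psi$, not on coefficients of $\psi$ itself. The paper's procedure \fsr grows a chain $D_0=\pmo^n\supsetneq D_1\supsetneq\cdots$ by, at each step, locating a nonempty $S_t\subseteq K\setminus R_t$ with $|\widehat{\psi_t}(S_t)|\geq \eps\sqrt{\gamma}/(k_t\binom{k_t}{|S_t|})$ where $\psi_t=\restr{\psi}{D_t}$, and fixing those bits; the existence of such a coefficient at \emph{every} level is \cref{lem:cond_skew}, which converts minimality into $\skew_{\restr{\psi}{D}}(C)\geq \eps\sqrt{\gamma}/2$ for every parent $D$ of $C$. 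The list size is then a product over the recursion of level-$s_t$ bounds for the restricted measures $\psi_t$, which requires lower-bounding $\ip{\psi,\mu_{D_t}}$ so that $\inorm{\psi_t}=\inorm{\psi}/\ip{\psi,\mu_{D_t}}$ does not blow up (this is the role of the FAIL test in \cref{line:fail}, and of the bound $\ip{\psi,\mu_{D_t}}\geq\eps\gamma$ in the negative case); your proposal has no analogue of this step. Multiplying the per-level counts of \cref{lem:guess_s} over the at most $k^k$ partitions $(s_1,\ldots,s_\ell)$ of $k$ then yields the stated bound.
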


For constant $\eps, \gamma$, the asymptotic dependence on $n$ is never worse than
$O(n^k)$, which happens when $\psi$ is concentrated on a point.
But when $\inorm{\psi} = O(1)$, the above bound is $O_k(1)$
and when  $\inorm{\psi} = \poly(n)$, the bound is
$O_k(\ln(n)^k)$ improving substantially over the  $O(n^k)$ bound.

There are two key elements in the proof of \cref{thm:main1}. We first use a
novel Fourier based algorithm to reduce the problem to that of
finding large, low-degree Fourier coefficients in a series of
restrictions of the distribution $\psi$ to various subcubes. We then
use the powerful hypercontractive inequality to bound the number of
such coefficients in any distribution in terms of $\inorm{\psi}$.
This latter bound generalizes the level-$k$ inequalities for
indicators of small sets in the Boolean hypercube \cite[Chapter
  9]{o2014analysis}, and the proof follows similar lines.
We also construct distributions showing that for various values of
$\eps, \gamma$, the dependency of $(\ln(\inorm{\psi}))^k$ is
optimal. The distributions are constructed using the
Tribes function and BCH codes.

We now turn to the algorithmic problem of finding the list of minimal
skewed subcubes. We observe that even when the list-size is constant,
there is a significant algorithmic barrier to a $n^{o(k)}$ algorithm,
namely the $k$-sparse noisy parity problem \cite{FeldmanGKP09,
  Valiant15}. In this problem, we are given points $x$ and labels $y$
which are the XOR of some $k$-subset $S$ with random noise of rate
$\eta$ added. There is a simple reduction from this problem to finding
skewed subcubes, if we consider the distribution of $(x,y) \in
\pmo^{n+1}$ the only skewed subcubes involve the coordinates $S \cup
\{n +1\}$.

\begin{restatable}{theorem}{noisyparity}
	\label{thm:noisy-parity}
  	For $\eta \in (0, 1/2)$, an algorithm that given a distribution $\psi$
        and $k$ can find a $(1- 2\eta, 1)$-minimal skewed subcube of
        co-dimension $k$ in time $T(n,k, \eta)$ can be used to solve
        the $k$-sparse noisy parity problem with noise rate $\eta$ in
        time $T(n,k, \eta)$.
\end{restatable}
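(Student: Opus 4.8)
The plan is to exhibit a trivial, sample-preserving reduction. Given access to an instance of $k$-sparse noisy parity --- uniform $x \in \pmo^{n}$ together with a label $y \in \pmo$ that equals $\chi_S(x)$ with probability $1-\eta$, for an unknown $S \subseteq [n]$ with $|S| = k$ --- I would let $\psi$ be the distribution of the concatenated vector $(x,y) \in \pmo^{n+1}$, with $y$ placed in coordinate $n+1$. A single draw from $\psi$ is produced by a single call to the noisy-parity oracle, so an algorithm running in time $T(n,k,\eta)$ on $\psi$ yields a noisy-parity solver in essentially the same time, up to the negligible cost of relabelling samples. The whole content of the proof is therefore the claim that the minimal skewed subcubes of $\psi$ reveal $S$.

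To establish this I would first compute $\psi(C)$ for an arbitrary subcube $C$ fixing a coordinate set $K$, and classify which $C$ are $\gamma$-skewed for $\gamma = 1-2\eta$. If $n+1 \notin K$ then $C$ only constrains the uniform block $x$, so $\psi(C) = 2^{-|K|}$ and $C$ is not skewed. If $n+1 \in K$, put $K_0 = K \setminus \{n+1\}$; conditioning on the prescribed values of $x_{K_0}$, either $S \subseteq K_0$, in which case $\chi_S(x)$ is already determined and coordinate $y$ takes its prescribed value with probability exactly $1-\eta$ or exactly $\eta$, giving relative skew exactly $\pm(1-2\eta) = \pm\gamma$; or $S \not\subseteq K_0$, in which case a coordinate of $S$ remains free, $\chi_S(x)$ and hence $y$ are uniform given the fixing, and $\psi(C) = 2^{-|K|}$ with no skew. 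So $C$ is $\gamma$-skewed precisely when $n+1 \in K$ and $S \subseteq K_0$.

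Next I would read off the minimal ones. Every $\gamma$-skewed $C$ has $|K| \ge k+1$. When $|K| = k+1$ we must have $K = S \cup \{n+1\}$; unfixing $n+1$ gives a parent missing coordinate $n+1$ (skew $0$), and unfixing a coordinate of $S$ gives a parent with $S \not\subseteq K_0$ (skew $0$), so no parent is skewed and $C$ is $(\gamma,1)$-minimal --- indeed any subcube fixing exactly $S \cup \{n+1\}$ has skew exactly $\gamma$ (meeting the threshold) and unskewed parents (satisfying the $\eps = 1$ clause vacuously). When $|K| > k+1$, $K_0 \supsetneq S$, and unfixing a coordinate of $K_0 \setminus S$ leaves a $\gamma$-skewed parent with the same skew sign, so $C$ is not $(\gamma,1)$-minimal. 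Hence the $(1-2\eta,1)$-minimal skewed subcubes are exactly those fixing the coordinate set $S \cup \{n+1\}$, all of codimension $k+1$; at least one exists, and the fixed set of any one of them is $S \cup \{n+1\}$. (If one prefers the codimension to be exactly $k$ as in the statement, the same analysis applied to the distribution of $x$ conditioned on the reported label being $+1$ --- obtained by rejection sampling, losing only a factor of two in sample complexity --- gives a distribution on $\pmo^n$ whose $(1-2\eta,1)$-minimal skewed subcubes are exactly those fixing $S$.)

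The reduction then just runs the assumed algorithm, takes the coordinate set $K$ of the subcube it returns, and outputs $S = K \setminus \{n+1\}$; correctness is inherited from the algorithm's guarantee and the running time is $T(n,k,\eta)$ up to lower-order terms. I do not anticipate a real obstacle here: the argument is elementary, and the only points that need care are (i) pinning the skew of the $S$-subcube to be \emph{exactly} $1-2\eta$ and that of all of its parents to be \emph{exactly} $0$, which is precisely what makes $(1-2\eta,1)$ the right parameter choice, and (ii) confirming via the classification above that no ``spurious'' minimal skewed subcube exists with a different fixed set, so that $S$ is recovered unambiguously.
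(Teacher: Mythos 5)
Your proposal is correct and follows essentially the same reduction as the paper: append the label as coordinate $n+1$, verify that the only skewed subcubes are those fixing $S\cup\{n+1\}$ (with skew exactly $\pm(1-2\eta)$ and all parents unskewed), and read $S$ off the returned subcube. Your case analysis is in fact slightly more careful than the paper's --- in particular in explicitly ruling out larger-codimension skewed subcubes as non-minimal and in addressing the codimension-$k$ versus $k+1$ bookkeeping --- but the argument is the same.
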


Given this reduction, there are two {\em lower bounds} on the running time
of any list-decoder: the list-size given in Theorem \ref{thm:main1}, and the running time of the best
known algorithm for the $k$-sparse noisy parity problem, which is
$O(n^{0.8k})$ due to \cite{Valiant15}. We give an
algorithm that nearly gets the sum of these two bounds.

\begin{theorem} \label{thm:intro_alg}
	For any measure $\psi$ on $\pmo^n$, integer $k \leq n$, and
        parameters $0 \leq \gamma \leq 2^k - 1$ and $0 \leq \eps \leq
        1$, there are algorithms that return all $(\gamma,
        \eps)$-minimal skewed subcubes of codimension at most $k$ in
        time
%	\[ \tilde O\left(n^{0.8k}\right) \cdot \frac{k^{O(k)}}{(\eps \gamma)^{2k+O(1)}}
%        \cdot \left( \ln \left( \frac{e \inorm{\psi}}{\eps \gamma}\right)\right)^k\]
%        or in time
%        \[
%%        O \left(n^{0.8k}\right) +
%%           \tilde O \left( n^{k/3}\right) \cdot \frac{k^{O(k)}}{(\eps
%%           	\gamma)^{4k + O(1)}} \cdot \left(\ln \left(\frac{e \inorm{\psi}}{\eps \gamma} \right) \right)^{2k}
%        \]
        \[\tilde O \left(n^{0.8 k}\right)+  
        \tilde O \left( n^{k/3}\right) \cdot \frac{k^{O(k)}}{(\eps \gamma)^{4/\lambda + 2k}} \left(\ln \left(\frac{e \inorm{\psi}}{\eps \gamma}\right)\right)^k \]
\end{theorem}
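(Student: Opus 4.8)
The plan is to reduce, via a Fourier identity, the task of finding (minimal) skewed children of a subcube to finding large degree-one Fourier coefficients of a restriction of $\psi$, to run this reduction as a level-by-level descent through the subcube lattice, and to augment it with a single call to Valiant's $k$-sparse noisy-parity algorithm to handle the ``all at once'' regime made unavoidable by \cref{thm:noisy-parity}. Write $\hat\psi(S)=\E_{\x\sim\psi}[\chi_S(\x)]$. A codimension-$j$ subcube $C$ on coordinate set $K$ with sign pattern $z$ has $2^j\Pr_{\x\sim\psi}[\x\in C]=\sum_{S\subseteq K}\bigl(\prod_{i\in S}z_i\bigr)\hat\psi(S)$, so its skew is $\mathrm{sk}(C)=\sum_{\emptyset\neq S\subseteq K}\bigl(\prod_{i\in S}z_i\bigr)\hat\psi(S)$; and if $C'\supsetneq C$ is the parent obtained by un-fixing a single coordinate $i$, then $1+\mathrm{sk}(C)=\bigl(1+\mathrm{sk}(C')\bigr)\bigl(1+z_i\,\widehat{\psi\mid C'}(\{i\})\bigr)$, where $\psi\mid C'$ denotes $\psi$ conditioned on $C'$. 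Each $\hat\psi(S)$ with $|S|\leq k$, and each degree-one coefficient of each restriction we touch, is estimated from i.i.d.\ samples of $\psi$ (a restriction's coefficient as a ratio of two empirical averages, so no rejection sampling is needed).

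The key structural fact I would establish is that the last coordinate completing a minimal skewed subcube always carries a large degree-one signal. If $C^*$ is $(\gamma,\eps)$-minimal with $\mathrm{sk}(C^*)\geq\gamma$ (the case $\mathrm{sk}(C^*)\leq-\gamma$, where $\gamma<1$, is symmetric, and is what distinguishes \fsr from \fsn) and $C'$ is any codimension-$(j-1)$ parent, then minimality gives $\mathrm{sk}(C')\leq\gamma(1-\eps)$, while $\Pr_{\x\sim\psi}[\x\in C^*]\leq\Pr_{\x\sim\psi}[\x\in C']$ gives $1+\mathrm{sk}(C')\geq\tfrac12(1+\mathrm{sk}(C^*))\geq\tfrac12$; the displayed identity then forces $\bigl|\widehat{\psi\mid C'}(\{i^*\})\bigr|=\tfrac{\mathrm{sk}(C^*)-\mathrm{sk}(C')}{1+\mathrm{sk}(C')}\geq\tfrac{\eps\gamma}{1+\gamma}=\Omega(\eps\gamma\,2^{-k})$. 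Accordingly, \fsr and \fsn maintain for each codimension $\ell<k$ a list $\cF_\ell$ of active subcubes; for each $C\in\cF_\ell$ they estimate $\widehat{\psi\mid C}(\{i\})$ for all free $i$, and whenever an estimate exceeds a threshold $\theta=\Theta(\eps\gamma\,2^{-k})$ they put both children $C\cap\{\x_i=\pm1\}$ into $\cF_{\ell+1}$. After $k$ levels every minimal skewed subcube whose canonical chain has a signal at every step is among the candidates; \dsc then estimates $\hat\psi(S)$ for the $\leq 2^k$ sets $S\subseteq K$ of each candidate, computes $\mathrm{sk}$ of the candidate and of all $\leq 2^k$ of its parents, and retains exactly the $(\gamma,\eps)$-minimal ones.

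Two quantities must be bounded. First, the sizes of the $\cF_\ell$: since $\sum_i\widehat{\psi\mid C}(\{i\})^2\leq\norm{\text{density of }\psi\mid C}_2^2$, iterating the hypercontractive (level-$k$) estimate underlying \cref{thm:main1} along the descent shows that the total number of subcubes ever made active is at most the list-size bound of \cref{thm:main1} times a branching factor, i.e.\ $k^{O(k)}(\eps\gamma)^{-O(k)}\bigl(\ln(e\inorm{\psi}/(\eps\gamma))\bigr)^k$ — each active subcube is either (a child of) a minimal skewed subcube, or lies on a chain toward one. Moreover every active $C$ has $\Pr_{\x\sim\psi}[\x\in C]=\Omega(2^{-k}\eps\gamma)$, since a parent of the minimal skewed subcube it leads to already has at least this much mass, so by Chernoff and a union bound all the estimates above, over all restrictions touched, hold to accuracy $\theta$ after $2^{O(k)}(\eps\gamma)^{-O(1)}\cdot\poly(n)\cdot|\bigcup_\ell\cF_\ell|$ samples. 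Second, a canonical chain may contain a ``dead zone'' of consecutive coordinates along which every restriction is flat — precisely the parity gadget of \cref{thm:noisy-parity}, which the descent cannot cross. A single call to \ffc, running Valiant's algorithm \cite{Valiant15}, lists in $\tilde O(n^{0.8k})$ all $T$ with $|T|\leq k$ and $|\hat\psi(T)|\geq\theta$; \pc then seeds $\cF_0$ with the subcubes supported on these $T$ (all sign patterns), which closes every chain whose dead zone is global. Any remaining candidate has at most $k/3$ coordinates undetermined at a dead zone (the rest fixed by the partial chain or by a Valiant-found heavy coefficient), so a three-block meet-in-the-middle, in the spirit of the three-way split in Valiant's construction, extends it across the dead zone in $\tilde O(n^{k/3})$ per candidate, the accuracy required of the matching step accounting for the $(\eps\gamma)^{-4/\lambda}$ factor. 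Summing the one global call and the per-candidate extensions gives the stated $\tilde O(n^{0.8k})+\tilde O(n^{k/3})\cdot k^{O(k)}(\eps\gamma)^{-4/\lambda-2k}\bigl(\ln(e\inorm{\psi}/(\eps\gamma))\bigr)^k$.

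The main obstacle is the clean division of labor between the greedy descent and the noisy-parity search: one must show that every minimal skewed subcube is reached either by a fully signal-carrying chain, or by grafting a bounded (length $\leq k/3$) blind extension onto such a chain seeded from a Valiant-enumerated heavy coefficient, with nothing slipping through — this is exactly where $\mathrm{sk}(C)=\sum_{\emptyset\neq S\subseteq K}(\prod_{i\in S}z_i)\hat\psi(S)$, minimality, and the fact that at least one $\hat\psi(S)$ with $S\subseteq K$ has magnitude $\Omega(\gamma\,2^{-k})$ must be combined, and where the accounting that the blind portion beyond \ffc's output never exceeds $k/3$ coordinates has to be carried out. The remaining ingredients — the restriction-robust level-$k$ branching bound (the argument for \cref{thm:main1} applied along the descent) and the uniform sample-complexity bound — are routine once this architecture is fixed.
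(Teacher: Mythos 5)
Your proposal correctly identifies the three ingredients the paper uses (a single global call to Valiant-style correlation search, the hypercontractive level-$k$ bound to control list sizes, and the restriction formula for Fourier coefficients), but the architecture you build from them has a genuine gap at its center. You descend one coordinate at a time, using only \emph{degree-one} coefficients of each restriction, and you correctly observe that this stalls on ``dead zones'' (the noisy-parity gadget). Your patch --- seed from Valiant-found heavy sets and cross any remaining dead zone by a three-block meet-in-the-middle --- rests on the claim that the blind portion of any chain never exceeds $k/3$ coordinates. That claim is asserted, not proved, and it is false in general: take $\psi$ to be a product of two independent noisy-parity distributions on disjoint coordinate sets of size about $k/2$ each; the global call finds each heavy set separately, but extending a subcube supported on one of them toward a skewed subcube supported on both requires crossing a dead zone of length about $k/2$, with no degree-one signal anywhere along the way. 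Since you explicitly defer exactly this accounting (``has to be carried out''), the proof is incomplete precisely where the difficulty lies.

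The paper's route avoids the problem entirely by never descending one coordinate at a time. Lemma~\ref{lem:cond_skew} shows that if $C$ is $(\gamma,\eps)$-minimal and $D\supsetneq C$ is \emph{any} parent, then $\skew_{\restr{\psi}{D}}(C)\geq\eps\sqrt{\gamma}/2$; by Lemma~\ref{lem:skew_fourier} this skew is a sum of at most $2^{k_t}$ Fourier coefficients of $\restr{\psi}{D}$ supported on the remaining fixed coordinates of $C$, so \emph{some nonempty set $S_t$ of arbitrary size} carries a coefficient of magnitude at least $\eps\sqrt{\gamma}/(k_t\binom{k_t}{|S_t|})$. The recursion (\fsr, \fsn) therefore jumps by whole heavy sets and never stalls. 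The second idea you are missing is that these heavy coefficients of restrictions need not be searched for afresh: by Lemma~\ref{lem:restrict}, $\widehat{\restr{\psi}{C}}(S)$ is a signed, rescaled sum of at most $2^{|J|}$ coefficients $\hatpsi(S\cup T)$, so any coefficient of the restriction exceeding $\tau$ forces some $\hatpsi(S\cup T)\geq\tau/4^{k}$ --- all of which already sit in the single precomputed list from \ffc and are retrieved by the \pc/\dsc data structure in time proportional to the branching bound. This is what confines the $n^{0.8k}$-type cost to one global call while keeping the per-node work at $\poly(n)\cdot 2^{O(k)}$, yielding the stated bound; your meet-in-the-middle extension, even if its $k/3$ premise held, is an extra mechanism the paper does not need.
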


Finally, to circumvent the noisy parity problem, we consider stronger
models where we have query access to the distribution $\psi$: in
addition to random samples, we can also query the value of $\psi(x)$
for any $x \in \pmo^n$. The noisy parity problem becomes trivial to solve once one has query
access. In this model, we are able to get an algorithm whose running time is
is $\poly(n, \inorm{\psi})$. Thus when $\inorm{\psi} < n^{\alpha k}$ for some $\alpha > 0$,
this improves over the trivial algorithm. We show some
dependence on $\inorm{\psi}$, possibly of the form $\ln(\inorm{\psi})^k$ is inherent even in
the query model, by constructing a distribution $\psi$ (with large
$\inorm{\psi}$) where the query model and random samples model are equivalent,
and where finding skewed subcubes lets us solve the $k$-sparse noisy
parity problem.

\subsection{Related Work}

In nearby work, \cite{AlonAK+07} study the problem of testing whether
a distribution is $\delta$ close in statistical distance to $(\eps,
k)$-wise uniform. In our language, a distribution $D$ is said to be
$(\eps, k)$-wise uniform if all subcubes of codimension $j \leq k$
have skew no greater than $2^j \eps$. They provide a sample complexity
upper bound of $O((k \log n) / \eps^2 \delta^2)$. They then provide
evidence, based on the conjectured hardness of finding planted
cliques, that no polynomial in $n$ time algorithm for this problem
exists (one can also base this hardness on sparse noisy parity, as
we do here). Indeed, their testing algorithm essentially reduces the
problem to the optimization version: find the subcube of codimension
$k$ such that the skew is maximized.

Fourier analytic techniques have found widespread use in a variety of
supervised learning problems under the uniform distribution
\cite{o2014analysis}. Our work differs from this in that the problem
we consider is an unsupervised learning problem, and that we use
Fourier analysis over the uniform distribution to reason about
the deviation from an arbitrary distribution. In this aspect, our work
is similar to the work of \cite{AlonAK+07, ODonnellZ18}.

Finally, there have been a line of recent results in machine learning
which have a list-decoding flavor to them, see for instance
\cite{CharikarSV17, KarmalkarKK19, raghavendra2020list}.

\medskip

\textbf{Outline of the paper.} Section \ref{sec:defs} introduces
definitions and notation. Section \ref{sec:fourier} contains Fourier analytic
results that are required for our results. Section
\ref{sec:comb} proves our main combinatorial bounds on the
number of skewed subcubes, and gives examples that show these bounds
are tight.  Section \ref{sec:algo} describes the efficient algorithm
for enumerating minimal skewed subcubes. Section \ref{sec:noisy-parity} gives
lower bounds due to the reduction from the noisy parity
problem. Section \ref{sec:mem-query} considers the problem in
the membership query model. Some proofs are deferred from the main
body to the Appendix \ref{sec:appendix}.

\section{Definitions}
\label{sec:defs}

In this section we present basic definitions and facts.
\paragraph{Distributions.}
We denote the $n$-dimensional Hamming cube by $\pmo^n$.
Given a
probability distribution $\psi$ on $\pmo^n$, it is convenient to
identify it with the probability measure $\psi: \pmo^n \rgta \R^{\geq 0}$ which satisfies
\[
\E_{\x \sim \pmo^n} [\psi(\x)] = 1.
\]
We write $\x \sim \psi$ to denote $\x$ is a random variable with the
distribution
\[
\Pr_{\x \sim \psi}[\x = x] = \frac{\psi(x)}{2^n}
\]
Henceforth, we will interchangeably refer to $\psi$ as a distribution
and a measure. We will use $\mu$ to denote the uniform distribution over
$\pmo^n$, where $\mu(x) =1$ for all $x \in \pmo^n$.
Given functions $f, g: \pmo^n \rgta \R$, we define their inner product by
$\ip{f,g} \defeq \E_{\x \sim \mu}[f(\x)g(\x)]$. We define $\pnorm{f}{p}
\defeq \E_{\x \sim \mu}[f(\x)^p]^{1/p}$. For two probability measures
$\psi, \theta$, we have
\[ \ip{\psi, \theta} = \sum_{x \in \pmo^n} \frac{\psi(x)\theta(x)}{2^n} =
  \E_{\x \sim \psi}[\theta(\x)] = \E_{\x \sim \theta}[\psi(\x)].\]

Let $A \subseteq \pmo^n$ and let $\alpha = |A|/2^n$ denote its
fractional density. We use $\mu_A$ to denote the uniform distribution
over $A$. The corresponding measure is defined as
\begin{align}
  \label{eq:mu_a}
  \mu_A(\x) =
  \begin{cases}
    \frac{1}{\alpha} & \text{if} \ x \in A\\
    0 & \text{otherwise}
  \end{cases}
\end{align}

For a distribution $\psi$, we define \[ \inorm{\psi} = \max_{x \in \pmo^n} \psi(x)\]
It follows from the definition that
\[ \inorm{\psi} = 2^n\max_{x \in \pmo^n}\Pr_{\x \sim \psi}[\x =
    x] =  \max_{x \in  \pmo^n}\frac{\Pr_{\x \sim \psi}[\x = x]}{\Pr_{\x \sim \mu}[\x = x]} \]
A bound on $\inorm{\psi}$ implies that no point is too likely.

\paragraph{Subcubes.}

A subcube is a subset of $\pmo^n$ obtained by fixing some subset of
bits to a particular value. Formally, a subcube $C \subseteq \pmo^n$
is specified by a pair $(K, y)$ where $K \subseteq [n]$ and
$y \in \pmo^{K}$. We have
\[ C = \{x \in \pmo^n \ \st \  x_i = y_i \ \forall \ i \in K \}. \]
We refer to coordinates in $K$ as the fixed coordinates of $C$, and
to the rest as the free coordinates of $C$. For $C = (K,y)$ we define
the codimension of $C$ to be $|K|$ and denote it $\codim(C)$.  We use
$\C^{\leq k}$ to denote the set of all subcubes of codimension at most $k$.
By Equation \eqref{eq:mu_a}, $\mu_C$ the uniform measure over $C$ is given by
  \begin{align*}
    \mu_C(x)    & = \begin{cases}
      2^{\codim(C)} & \ \text{if} \ x \in C\\
      0 & \ \text{if} \ x \not\in C\\
    \end{cases}
  \end{align*}

For subcubes $C = (K,y), D = (L,z)$, we have $D \subset C$ iff $K
\subset L$ and $z_i = y_i$ for all $i \in K$. We refer to $D$ as a
child of $C$ and $C$ as a parent of $D$.

\begin{definition}(Restriction)
  \label{def:restr}
For a distribution $\psi$ on $\pmo^n$ and a subcube $C \subseteq
\pmo^n$ such that $\psi$ assigns non-zero probability to $C$
we define $\restr{\psi}{C}: C \rightarrow  \R^{\geq 0}$, the
restriction of $\psi$ to $C$, as
\begin{align*}
  \restr{\psi}{C} (x)  =  \frac{\psi(x)}{\ip{\psi, \mu_C}}.
\end{align*}
\end{definition}
Since $\ip{\psi, \mu_C} = \expectarg{\psi(x)}{x \sim \mu_C}$, $\psi$
assings non-zero probaility to $C$ iff $\ip{\psi, \mu_C} > 0$.
The restriction is itself a legal probability measure; it satisfies $\E_{x
  \in C}[\restr{\psi}{C}]  =1$. This definition
immediately implies the relationship:
\begin{fact}
  \label{fact:inorm}
	For a distribution $\psi$ on $\pmo^n$ and a subcube $C \subset \pmo^n$
	\[\inorm{\restr{\psi}{C}} = \frac{\inorm{\psi}}{\ip{\psi, \mu_C}}\]
\end{fact}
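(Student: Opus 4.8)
The plan is to read the identity straight off Definition \ref{def:restr}: the restriction $\restr{\psi}{C}$ is obtained from $\psi$ by dividing through by a single positive constant, so its maximum density equals that of $\psi$ divided by the same constant. There is essentially no computation beyond recognizing this.

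Concretely, I would proceed in three short steps. First, pin down the normalizing constant: as the paper already records, $\ip{\psi,\mu_C} = \E_{\x \sim \mu}[\psi(\x)\mu_C(\x)] = \E_{\x \sim \mu_C}[\psi(\x)]$, and since $\mu_C$ is supported on $C$, the standing hypothesis of Definition \ref{def:restr} (that $\psi$ assigns non-zero probability to $C$) is exactly the statement $\ip{\psi,\mu_C} > 0$. In particular $\ip{\psi,\mu_C}$ is a strictly positive real number that does not depend on $x$. Second, unwind the two norms: by the definition of $\inorm{\cdot}$ we have $\inorm{\restr{\psi}{C}} = \max_x \restr{\psi}{C}(x)$, and substituting $\restr{\psi}{C}(x) = \psi(x)/\ip{\psi,\mu_C}$ from Definition \ref{def:restr} and pulling the $x$-independent positive factor $1/\ip{\psi,\mu_C}$ outside the maximum gives $\inorm{\restr{\psi}{C}} = \frac{1}{\ip{\psi,\mu_C}}\max_x \psi(x) = \frac{\inorm{\psi}}{\ip{\psi,\mu_C}}$, which is exactly the asserted equality. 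Third, as a consistency check that the left-hand side is the right object to be taking a max-norm of, I would note $\E_{\x \sim \mu_C}[\restr{\psi}{C}(\x)] = \E_{\x \sim \mu_C}[\psi(\x)]/\ip{\psi,\mu_C} = 1$, so $\restr{\psi}{C}$ is indeed a bona fide probability measure.

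I do not anticipate a substantive obstacle; the content is purely definitional, which is why the paper flags the relationship as immediate. The only two points deserving a sentence are (i) that $\ip{\psi,\mu_C}$ equals $\E_{\x \sim \mu_C}[\psi(\x)]$ and is therefore strictly positive under the hypothesis, so that the division is legitimate, and (ii) that this quantity is a single scalar independent of $x$, so that it commutes with the maximum. With both in hand, the equality $\inorm{\restr{\psi}{C}} = \inorm{\psi}/\ip{\psi,\mu_C}$ drops out in one line.
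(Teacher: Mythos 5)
Your computation is the one the paper has in mind, but it (and, to be fair, the paper's own statement of the Fact) silently passes over a domain issue that actually makes the claimed \emph{equality} false: when you ``pull the constant out of the max,'' the two maxima are not over the same set. By Definition~\ref{def:restr}, $\restr{\psi}{C}$ is a function on $C$, so $\inorm{\restr{\psi}{C}} = \max_{x \in C}\restr{\psi}{C}(x) = \frac{1}{\ip{\psi,\mu_C}}\max_{x \in C}\psi(x)$. Meanwhile $\inorm{\psi} = \max_{x \in \pmo^n}\psi(x)$ ranges over all of $\pmo^n$. These agree only when $\psi$ attains its maximum value somewhere on $C$, which need not happen. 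What is always true is the inequality
\[
\inorm{\restr{\psi}{C}} \;=\; \frac{\max_{x\in C}\psi(x)}{\ip{\psi,\mu_C}}\;\leq\;\frac{\max_{x\in\pmo^n}\psi(x)}{\ip{\psi,\mu_C}}\;=\;\frac{\inorm{\psi}}{\ip{\psi,\mu_C}}.
\]
A concrete counterexample to equality: on $\pmo^2$, let $\psi$ put probability $3/4$ on $(1,1)$ and $1/4$ on $(-1,1)$, and let $C=\{x_1=-1\}$. Then $\inorm{\psi}=3$, $\ip{\psi,\mu_C}=1/2$, so the right-hand side is $6$, but $\inorm{\restr{\psi}{C}}=2$. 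The fix is merely to replace ``$=$'' by ``$\leq$'' in the Fact, and this is harmless for the rest of the paper: the only places the Fact is invoked (the bound on $\inorm{\psi_t}$ in Lemmas~\ref{lem:guess_s} and~\ref{lem:guess_s_neg}) use precisely the $\leq$ direction. Your step (i) about positivity of $\ip{\psi,\mu_C}$ and step (iii) about the restriction being a legitimate probability measure are both correct and worth keeping; just add the observation that the restriction's max-norm only sees points of $C$, which downgrades the conclusion to an inequality.
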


\begin{restatable}{lemma}{prodrule}
  \label{lem:prod_rule}
Given subcubes $C$ and $D$ such that $D \subseteq C \subseteq \pmo^n$,
and a density function $\psi$, it holds that:
\[\ip{\psi, \mu_D} = \ip{\psi, \mu_C}\cdot \ip{\restr{\psi}{C}, \mu_D \big|_C}\]
\end{restatable}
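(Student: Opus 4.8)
The identity to prove is $\ip{\psi, \mu_D} = \ip{\psi, \mu_C}\cdot \ip{\restr{\psi}{C}, \mu_D \big|_C}$, which is essentially a ``chain rule'' for the probability mass assigned by $\psi$ to nested subcubes. The plan is to unwind each of the three inner products via its definition as an expectation, keeping careful track of which uniform measure each expectation is taken with respect to, and then simplify. The only genuine subtlety is bookkeeping: $\mu_D\big|_C$ denotes the restriction to $C$ of the \emph{uniform measure on $D$}, and one must check that this coincides with the uniform measure on $D$ viewed as a subcube of the smaller ambient cube $C$ (with the fixed coordinates of $C$ stripped off). Concretely, if $C = (K,y)$ and $D = (L,z)$ with $K \subseteq L$, then $\mu_D\big|_C$ is supported on $D$ and takes the value $2^{|L|}/\ip{\mu_C,\mu_D} = 2^{|L| - |K|} \cdot 1$ there, i.e.\ it is exactly the uniform \emph{measure} on $D$ relative to the $(n-|K|)$-dimensional cube $C$.

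The main computation is then short. Starting from the right-hand side, $\ip{\restr{\psi}{C}, \mu_D\big|_C} = \E_{x \sim \mu_C}\!\big[\restr{\psi}{C}(x)\cdot \mu_D\big|_C(x)\big]$. Substituting $\restr{\psi}{C}(x) = \psi(x)/\ip{\psi,\mu_C}$ from Definition~\ref{def:restr}, and using that $\mu_D\big|_C$ vanishes off $D$ and equals $2^{|L|-|K|}$ on $D$, this expectation becomes $\frac{1}{\ip{\psi,\mu_C}}\cdot \frac{1}{2^{n-|K|}}\sum_{x \in D} \psi(x)\cdot 2^{|L|-|K|} = \frac{1}{\ip{\psi,\mu_C}} \cdot 2^{|L|} \cdot \frac{1}{2^n}\sum_{x\in D}\psi(x)$. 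Recognizing $2^{|L|}\cdot 2^{-n}\sum_{x\in D}\psi(x) = \E_{x\sim\mu}[\psi(x)\mu_D(x)] = \ip{\psi,\mu_D}$ gives $\ip{\restr{\psi}{C},\mu_D\big|_C} = \ip{\psi,\mu_D}/\ip{\psi,\mu_C}$, and multiplying through by $\ip{\psi,\mu_C}$ yields the claim. (Implicitly this uses $\ip{\psi,\mu_C}>0$, which holds precisely because $\restr{\psi}{C}$ is defined; if $\ip{\psi,\mu_C}=0$ then $\ip{\psi,\mu_D}=0$ as well since $D\subseteq C$ and $\psi\geq 0$, and the identity degenerates.)

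The step I expect to require the most care is the identification $\mu_D\big|_C(x) = 2^{\codim(D)-\codim(C)}\ind[x\in D]$, i.e.\ computing the normalizing constant $\ip{\mu_C,\mu_D}$ correctly. This follows because $\mu_C$ is supported on $C$, $\mu_D$ equals $2^{|L|}$ on $D \subseteq C$, and $\mu_C$ equals $2^{|K|}$ on $C$, so $\ip{\mu_C,\mu_D} = \E_{x\sim\mu}[\mu_C(x)\mu_D(x)] = 2^{-n}\cdot 2^{|K|}\cdot 2^{|L|}\cdot|D| = 2^{|K|+|L|}\cdot 2^{|L|}\cdot 2^{-n}\cdot 2^{n-|L|}$... one must be disciplined about the cardinality $|D| = 2^{n-|L|}$ here; the clean way is simply to observe $\ip{\mu_C,\mu_D} = \E_{x\sim\mu_C}[\mu_D(x)] = \Pr_{x\sim\mu_C}[x\in D]\cdot 2^{|L|} = 2^{|K|-|L|}\cdot 2^{|L|} = 2^{|K|}$... wait—better still, note $\mu_D\big|_C(x) = \mu_D(x)/\ip{\mu_D,\mu_C}$ is a legal measure on $C$ summing to $1$ over $\mu_C$, is constant on $D$, and zero off $D$, which forces it to be the uniform measure on $D$; its value there is $2^{n-|K|}/|D| = 2^{n-|K|}/2^{n-|L|} = 2^{|L|-|K|}$. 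Everything else is routine substitution, so the proof is essentially a one-paragraph verification once this normalization is pinned down.
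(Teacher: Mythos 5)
Your proof is correct and follows essentially the same route as the paper's: both just unwind the three inner products from the definitions and cancel the normalization $\ip{\psi,\mu_C}$. The paper's version is a one-line factorization that silently uses the identification $\ip{\psi/\ip{\psi,\mu_C},\mu_D}=\ip{\restr{\psi}{C},\mu_D\big|_C}$, whereas you verify that identification explicitly (in particular that $\mu_D\big|_C$ equals $2^{\codim(D)-\codim(C)}$ on $D$), which is the only nontrivial bookkeeping in the argument.
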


\subsection{Skewed subcubes}

\begin{definition}[Skew]
  \label{def:skew}
	We define the \textbf{skew} of a subcube $C$ with respect to
        measure $\psi$ as
	\[ \skew_\psi(C) = \ip{\psi, \mu_C} - 1\]
\end{definition}

The next two lemmas state some simple facts about the skew of
subcubes. First we show the skew of a subcube measures the deviation
of the measure on the subcube from the uniform distribution.

\begin{restatable}{lemma}{skewbyprob}
	\label{lem:skew_by_prob}
	  Let $\codim(C) = k$. We have
	\begin{align*}
	  \skew_\psi(C) &= 2^k \Pr_{\x \sim \psi}[\x \in C]  -1 \\
          &= \frac{1}{\Pr_{x \sim \mu}[\x \in C]}\left( \Pr_{x
	  \sim \psi}[\x \in C] - \Pr_{x \sim \mu}[\x \in C]\right).
	\end{align*}
\end{restatable}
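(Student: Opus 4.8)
The statement is essentially a matter of unwinding the definitions of the inner product and of the measure $\mu_C$, so the plan is direct. Starting from Definition~\ref{def:skew}, I would write $\skew_\psi(C) = \ip{\psi,\mu_C} - 1$ and then expand the inner product using its definition $\ip{f,g} = \E_{\x\sim\mu}[f(\x)g(\x)]$. Since $\codim(C)=k$, the subcube has fractional density $\alpha = |C|/2^n = 2^{-k}$, so the measure $\mu_C$ equals $2^k$ on $C$ and $0$ off $C$. Substituting this in gives
\[
\ip{\psi,\mu_C} = \E_{\x\sim\mu}\big[\psi(\x)\,\mu_C(\x)\big] = \frac{2^k}{2^n}\sum_{x\in C}\psi(x) = 2^k\sum_{x\in C}\Pr_{\x\sim\psi}[\x=x] = 2^k\Pr_{\x\sim\psi}[\x\in C],
\]
where the penultimate equality uses the identification $\Pr_{\x\sim\psi}[\x=x]=\psi(x)/2^n$. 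This yields the first displayed equality $\skew_\psi(C) = 2^k\Pr_{\x\sim\psi}[\x\in C]-1$.

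For the second equality, I would simply note $\Pr_{\x\sim\mu}[\x\in C] = |C|/2^n = 2^{-k}$, so that $1/\Pr_{\x\sim\mu}[\x\in C] = 2^k$, and hence
\[
\frac{1}{\Pr_{\x\sim\mu}[\x\in C]}\Big(\Pr_{\x\sim\psi}[\x\in C]-\Pr_{\x\sim\mu}[\x\in C]\Big) = 2^k\Pr_{\x\sim\psi}[\x\in C] - 2^k\cdot 2^{-k} = 2^k\Pr_{\x\sim\psi}[\x\in C]-1,
\]
matching the first expression. Alternatively, one could invoke Lemma~\ref{lem:prod_rule} with $C=\pmo^n$, but the direct computation above is cleaner.

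There is no real obstacle here; the only point requiring a moment of care is keeping straight the distinction between $\psi$ (and $\mu_C$) as a \emph{measure} normalized so that $\E_{\x\sim\mu}[\psi(\x)]=1$ versus the associated probability distribution with point masses $\psi(x)/2^n$, and correspondingly that $\mu_C$ takes the value $2^k$ (not $2^{-k}$) on $C$. Once that bookkeeping is fixed, both equalities fall out immediately from the definitions.
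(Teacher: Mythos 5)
Your proof is correct and follows essentially the same route as the paper: both expand $\ip{\psi,\mu_C}$ using the fact that $\mu_C$ equals $2^k$ on $C$ (equivalently, that it is the uniform measure on a set of density $2^{-k}$), arrive at $2^k\Pr_{\x\sim\psi}[\x\in C]$, and then substitute $\Pr_{\x\sim\mu}[\x\in C]=2^{-k}$ for the second equality. The bookkeeping point you flag about measures versus probability distributions is exactly the one the paper's computation handles via the normalization $\sum_{x\in C}\psi(x)/2^{n-k}$.
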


\begin{corollary}
  For any distribution $\psi$,
  $\skew_\psi(C)$ lies in the range $[-1, 2^k -1]$.
\end{corollary}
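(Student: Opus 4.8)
The plan is to read this off directly from \cref{lem:skew_by_prob}, which has just been stated. Writing $k = \codim(C)$, that lemma gives the identity
\[
\skew_\psi(C) = 2^k \Pr_{\x \sim \psi}[\x \in C] - 1,
\]
so the entire content of the corollary is bounding the right-hand side. First I would observe that $\Pr_{\x \sim \psi}[\x \in C]$ is a genuine probability and hence lies in $[0,1]$; this is immediate from the definition of the probability measure associated with $\psi$, or equivalently from $\ip{\psi,\mu_C} = \E_{\x \sim \mu}[\psi(\x)\mu_C(\x)] \ge 0$ together with $\ip{\psi,\mu_C} = 2^k \Pr_{\x\sim\psi}[\x\in C]$. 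Substituting the extreme values $0$ and $1$ into the displayed identity yields $\skew_\psi(C) \in [2^k\cdot 0 - 1,\; 2^k\cdot 1 - 1] = [-1, 2^k-1]$, as claimed.

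The one point worth a sentence of care is well-definedness at the lower endpoint: $\skew_\psi(C) = \ip{\psi,\mu_C} - 1$ is defined for every subcube $C$ regardless of whether $\psi$ puts positive mass on $C$ (unlike the restriction $\restr{\psi}{C}$, which requires $\ip{\psi,\mu_C} > 0$), so the bound $\skew_\psi(C) \ge -1$ is meaningful and is attained exactly when $\psi$ assigns zero probability to $C$. The upper endpoint $2^k - 1$ is attained exactly when $C$ carries all the mass of $\psi$.

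I do not anticipate any real obstacle here — the corollary is an immediate consequence of \cref{lem:skew_by_prob} plus the trivial fact that probabilities lie in $[0,1]$; the only thing to be slightly careful about is keeping the non-negativity argument for $\ip{\psi,\mu_C}$ self-contained (it follows since $\psi$ and $\mu_C$ are both non-negative measures), which also re-derives the $\psi(x) \ge 0$ property used implicitly throughout.
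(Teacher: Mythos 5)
Your proof is correct and is exactly the argument the paper intends: the corollary is stated without an explicit proof precisely because it follows immediately from \cref{lem:skew_by_prob} together with the fact that $\Pr_{\x \sim \psi}[\x \in C] \in [0,1]$. Your additional remarks on well-definedness and when the endpoints are attained are accurate but not needed.
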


If $\skew_\psi(C) < 0$, we say that $C$ is negatively skewed while if
$\skew_\psi(C) > 0$ we say that it is positively skewed. An averaging
argument shows that the existence of negatively skewed subcubes
implies the existence of positively skewed subcubes and vice versa.

\begin{restatable}{lemma}{negposskew}
  \label{lem:neg_pos_skew}
  For any $K \subseteq [n]$, we have
  \[ \sum_{\substack{D = (K, w) \\ w \in \pmo^K}} \skew(D) = 0. \]
\end{restatable}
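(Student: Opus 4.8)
The plan is to reduce the sum directly to the normalization condition on $\psi$ using the definition of skew. Fix $K \subseteq [n]$ with $|K| = k$, and note that the subcubes $\{D_w = (K,w) : w \in \pmo^K\}$ form a partition of $\pmo^n$ into $2^k$ pieces: every $x \in \pmo^n$ lies in exactly one of them, namely $D_{x_K}$. I would expand $\sum_w \skew_\psi(D_w) = \sum_w \ip{\psi, \mu_{D_w}} - 2^k$ using Definition \ref{def:skew}, and then handle the inner-product sum by linearity: $\sum_w \ip{\psi, \mu_{D_w}} = \ip{\psi, \sum_w \mu_{D_w}}$.

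The key computation is to identify $\sum_w \mu_{D_w}$ as a constant function. By the formula for $\mu_C$, each $\mu_{D_w}$ equals $2^k$ on $D_w$ and $0$ off it; since the $D_w$ partition $\pmo^n$, at any point $x$ exactly one term of the sum is nonzero and contributes $2^k$. Hence $\sum_w \mu_{D_w} \equiv 2^k = 2^k \mu$. Therefore $\ip{\psi, \sum_w \mu_{D_w}} = 2^k \ip{\psi,\mu} = 2^k$, because $\psi$ is normalized so that $\E_{\x \sim \mu}[\psi(\x)] = \ip{\psi,\mu} = 1$. Plugging back in gives $\sum_w \skew_\psi(D_w) = 2^k - 2^k = 0$.

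Alternatively, and essentially equivalently, one can route through \cref{lem:skew_by_prob}: $\skew_\psi(D_w) = 2^k \Pr_{\x \sim \psi}[\x \in D_w] - 1$, so the sum telescopes to $2^k \sum_w \Pr_{\x \sim \psi}[\x \in D_w] - 2^k = 2^k \cdot 1 - 2^k = 0$, where $\sum_w \Pr_{\x\sim\psi}[\x \in D_w] = 1$ again because the $D_w$ partition the cube. I would present whichever of these two phrasings is cleanest given the lemmas already in hand.

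There is essentially no obstacle here; the only thing to be careful about is invoking the partition property correctly (that the $2^k$ subcubes obtained by fixing $K$ to all possible patterns are disjoint and cover $\pmo^n$) and the normalization $\ip{\psi,\mu}=1$ that is built into the measure convention for $\psi$. So this is a short, routine verification rather than a substantive argument.
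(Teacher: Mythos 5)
Your proposal is correct, and your second phrasing (via \cref{lem:skew_by_prob} and the fact that the $2^k$ subcubes $(K,w)$ partition $\pmo^n$) is exactly the argument the paper gives; the first phrasing via $\sum_w \mu_{D_w} \equiv 2^k$ is a trivially equivalent restatement. Nothing is missing.
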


Given a cube $C = (K, y)$ of codimension $k$, we can partition it into
$2^\ell$ subcubes of codimension $k + \ell$, where we pick a set $L$ of $\ell$
additional coordinates outside of $K$ to fix and enumerate over all
settings of these coordinates.

\begin{restatable}{lemma}{skewavg}
	\label{lem:skew_avg}
  If $\{C_1, \ldots, C_{2^\ell}\}$ is a partition of $C$, then
\[ \skew_\psi(C) = \frac{1}{2^\ell}\sum_{i=1}^{2^\ell}\skew_\psi(C_i). \]
\end{restatable}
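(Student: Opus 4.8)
The plan is to reduce the identity to a pointwise equality of measures, $\mu_C = \tfrac{1}{2^\ell}\sum_{i=1}^{2^\ell}\mu_{C_i}$, and then use bilinearity of $\ip{\cdot,\cdot}$ together with Definition \ref{def:skew}. First I would fix notation matching the setup preceding the lemma: if $\codim(C)=k$, then partitioning $C$ into $2^\ell$ subcubes means fixing $\ell$ extra free coordinates and enumerating their $2^\ell$ settings, so each $C_i$ has codimension $k+\ell$. Hence, by the explicit formula for the uniform measure on a subcube, $\mu_{C_i}(x) = 2^{k+\ell}\cdot\ind[x\in C_i]$ and $\mu_C(x) = 2^{k}\cdot\ind[x\in C]$ for every $x\in\pmo^n$.

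Next I would observe that since $\{C_1,\dots,C_{2^\ell}\}$ partitions $C$, each $x\in\pmo^n$ lies in exactly one block when $x\in C$ and in none otherwise, so $\sum_{i=1}^{2^\ell}\ind[x\in C_i] = \ind[x\in C]$. Multiplying by $2^{k+\ell}$ and dividing by $2^\ell$ gives $\tfrac{1}{2^\ell}\sum_i \mu_{C_i}(x) = 2^k\ind[x\in C] = \mu_C(x)$, which is the claimed measure identity. Applying $\ip{\psi,\cdot}$, which is linear in its second argument, yields $\ip{\psi,\mu_C} = \tfrac{1}{2^\ell}\sum_{i=1}^{2^\ell}\ip{\psi,\mu_{C_i}}$. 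Subtracting $1 = \tfrac{1}{2^\ell}\sum_{i=1}^{2^\ell}1$ from both sides and recalling $\skew_\psi(D)=\ip{\psi,\mu_D}-1$ gives $\skew_\psi(C) = \tfrac{1}{2^\ell}\sum_{i=1}^{2^\ell}\skew_\psi(C_i)$.

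There is essentially no obstacle here; the only point requiring mild care is that the $-1$ in the definition of skew distributes correctly through the average, which works precisely because the partition has exactly $2^\ell$ parts (one $-1/2^\ell$ per block). As an alternative route that avoids the measure identity, one can argue directly from \cref{lem:skew_by_prob}: by disjointness $\Pr_{\x\sim\psi}[\x\in C] = \sum_{i=1}^{2^\ell}\Pr_{\x\sim\psi}[\x\in C_i]$, and substituting the expressions $\skew_\psi(C) = 2^k\Pr_{\x\sim\psi}[\x\in C]-1$ and $\skew_\psi(C_i) = 2^{k+\ell}\Pr_{\x\sim\psi}[\x\in C_i]-1$ gives the claim after the same elementary bookkeeping with powers of two. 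I would present the first argument as the main proof since it is the cleanest.
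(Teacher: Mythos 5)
Your proof is correct and, modulo packaging, is the same argument as the paper's: the paper works directly with probabilities via \cref{lem:skew_by_prob}, writing $\Pr_{\x\sim\psi}[\x\in C]=\sum_i\Pr_{\x\sim\psi}[\x\in C_i]$ and tracking the factors $2^k$ versus $2^{k+\ell}$, which is exactly your stated ``alternative route.'' Your primary phrasing via the pointwise identity $\mu_C=\tfrac{1}{2^\ell}\sum_i\mu_{C_i}$ and linearity of $\ip{\psi,\cdot}$ is just a clean repackaging of the same computation, so there is nothing further to flag.
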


Having established basic properties of the skew function, we next turn
to bounding the number of subcubes with a given skew. We show that
this number may be quite large in the worst case.

\begin{lemma}
\label{lem:comb_lb}
Let $\gamma = 2^f - 1$ for $f \in \{1, \ldots, k\}$.
There exists a distribution
  $\psi$ such that there are $\Omega((n/k)^k)$ many
  subcubes of codimension $k$ with $\skew(C) \geq \gamma$.
\end{lemma}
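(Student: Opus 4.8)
The plan is to realize the lower bound with the most concentrated distribution possible, and to read off the inheritance phenomenon from it directly. First I would fix an arbitrary point $z \in \pmo^n$, say $z = (1,\ldots,1)$, and take $\psi$ to be the point mass at $z$: the measure with $\psi(z) = 2^n$ and $\psi \equiv 0$ elsewhere, so that $\Pr_{\x \sim \psi}[\x = z] = 1$. The single observation driving everything is that whenever a codimension-$k$ subcube $C$ contains $z$, all of $\psi$'s mass lies in $C$, hence $\Pr_{\x\sim\psi}[\x\in C] = 1$, and \cref{lem:skew_by_prob} gives $\skew_\psi(C) = 2^k - 1 \ge 2^f - 1 = \gamma$, using $f \le k$. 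More tellingly, for every $j \le k$ the codimension-$j$ subcube obtained by fixing any $j$ coordinates of $z$ has skew exactly $2^j-1$, so each extra coordinate fixed to agree with $z$ doubles $1 + \skew_\psi$; this is the ``skew is inherited by (here, amplified in) children'' behaviour in its starkest form, and it is what I would point to as the qualitative message of the lemma.

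Next I would count. The codimension-$k$ subcubes containing $z$ are in bijection with the $k$-subsets $K \subseteq [n]$ (the fixed coordinate set, whose values are forced to $z|_K$), so there are $\binom{n}{k} = \prod_{i=0}^{k-1}\frac{n-i}{k-i} \ge (n/k)^k$ of them, since $\frac{n-i}{k-i}\ge \frac nk$ for $0 \le i \le k-1 \le n-1$. This is $\Omega((n/k)^k)$ subcubes of codimension $k$ with $\skew_\psi(C) \ge \gamma$, which is all the lemma asks for. If one additionally wants the skew to be exactly $\gamma$ on each such subcube (and $\inorm{\psi}$ to be roughly $2^{n-k+f}$ rather than $2^n$), I would instead use the mixture $\psi = \beta\,\psi_z + (1-\beta)\mu$ with $\beta = (2^f-1)/(2^k-1) \in (0,1]$, where $\psi_z$ is the point mass at $z$; then $\Pr_{\x\sim\psi}[\x\in C] = \beta + (1-\beta)2^{-k}$ for each of the $\binom nk$ subcubes through $z$, so $\skew_\psi(C) = \beta(2^k-1) = \gamma$, while every other codimension-$k$ subcube has skew $-\beta < 0$.

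I do not expect a serious technical obstacle; the only thing to get right is the realization that the $(n/k)^k$ blow-up is \emph{forced} by a highly concentrated distribution ($\inorm{\psi}$ exponentially large), not a well-spread one. Far from being an artifact, this is exactly the point: it is why \cref{thm:main1} must pay a factor depending on $\inorm{\psi}$, and it is the phenomenon that motivates restricting attention to minimal skewed subcubes. It is worth remarking in the write-up that even mild concentration already yields abundance — for instance $\psi$ uniform over a fixed codimension-$f$ subcube has $\inorm{\psi} = 2^f$ yet still produces $\binom{n-f}{k-f}2^{k-f}$ subcubes of codimension $k$ with skew exactly $\gamma$ — so the profusion of (non-minimal) skewed subcubes is genuinely intrinsic to the naive definition rather than a quirk of the extreme example.
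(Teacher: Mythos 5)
Your proof is correct, and it takes a genuinely different (and more extreme) construction than the paper's. The paper takes $\psi = \mu_C$ for $C$ a subcube of codimension $t = (f/k)n$ with all fixed bits equal to $1$, and counts the subcubes $D$ obtained by fixing $f$ coordinates inside $[t]$ and $k-f$ outside, each of which has $\ip{\mu_C,\mu_D} = 2^f$ and hence skew exactly $\gamma = 2^f-1$; the count $\binom{t}{f}\binom{n-t}{k-f} \ge (n/k)^k$ then follows from the choice of $t$. Your point-mass example is the degenerate case $t=n$ of this family and gives an even larger count $\binom{n}{k}$ with an easier verification, at the cost of producing skew $2^k-1$ rather than skew exactly $\gamma$ — which is fine for the lemma as literally stated ($\skew(C)\ge\gamma$), and your mixture $\beta\,\psi_z + (1-\beta)\mu$ cleanly recovers the "exactly $\gamma$" version if that reading is wanted. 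What the paper's interpolated construction buys is a family of examples in which, for each target $\gamma = 2^f-1$, the witnessing subcubes sit at precisely that skew level while the distribution is uniform over a large set rather than a single point; what yours buys is brevity and the observation that one distribution witnesses the bound for all $f$ simultaneously. Your closing remark about the uniform distribution on a codimension-$f$ subcube yielding $\Theta(n^{k-f})$ skewed subcubes matches the discussion the paper itself gives right after the lemma, so it is a natural thing to include.
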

\begin{proof}
  Let $C$ be the subcube where the first $t \geq f$ bits are fixed to $1$, and let $\mu_C$ be the uniform distribution over it.
  Consider any subcube $D$ where we choose $f$ indices from $[t]$ and $k - f$ indices from $[n]
  \setminus [t]$, and set them to $1$.
  We have
  \begin{align*}
    \ip{\mu_C, \mu_D} & = \E_{\mu_C}[\mu_D]  = 2^k\Pr_{\x \sim \mu_C}[\x \in D]
    = 2^k\frac{1}{2^{k -f}} = 2^f \geq 1 + \gamma
  \end{align*}
since a point from $\mu_C$ lies in $D$ iff the $k -f$ bits from
$[n]\setminus [t]$ are all set to $1$. We now optimize the choice of
$t$. Let $\alpha = f/k$ for $\alpha \leq 1$. We choose $t = \alpha n$
(ignoring floors and ceilings which will not affect the asymptotics). The
number of choices for $D$ is given by
\[ \binom{t}{f}\cdot \binom{n -t}{k -f}  = \binom{\alpha n}{\alpha
  k}\cdot \binom{(1- \alpha)n}{(1 - \alpha)k} \geq
  \left(\frac{n}{k}\right)^{\alpha k}\cdot \left(\frac{n}{k}\right)^{(1
    - \alpha)k} \geq  \left(\frac{n}{k}\right)^k. \qedhere \]
\end{proof}

While the above bound is proved for positive skew, Lemma
\ref{lem:neg_pos_skew} can be used to derive a similar bound for negative
skew.  Given that this bound is not too far from the trivial upper bound of
$\binom{n}{k}$, we need to refine our notion of skew, and also to
restrict the set of distributions we consider.

\subsection{Minimal skewed subcubes}

Lemma \ref{lem:skew_avg} tells us that if there exists $C = (J, y)$
such that $|J| = j < k$ and $\skew(C) \geq \gamma$, then for any $L \subseteq [n]
\setminus J$ of size $k - j$, there exists some further restriction of
bits in $L$ such that the resulting subcube $D \subseteq C$ has $\skew(D) \geq
\gamma$. This suggests that we ought to ignore subcubes such as $D$ that can
be viewed as {\em inheriting} skew from some parent $C$, and instead
focus on subcubes whose skew is larger than any parent.
One technical issue is that we now need to handle the case of positive and negative
skew separately. This motivates the following definitions.

\begin{definition}
\label{def:min_skew}
  Let $\gamma \in(0,2^k -1]$ and $\eps \in (0,1]$. A subcube $C
    \subseteq \pmo^n$ is a $(\gamma, \eps)$-minimally
  skewed subcube if $\skew(C) \geq \gamma$ and for all its parent
  subcubes $D \supsetneq C$, we have
  \begin{align}
    \label{eq:min_skew}
    \skew_\psi(D) \leq (1 - \eps)\gamma.
  \end{align}
Let $\gamma \in (0,1]$ and $\eps \in (0,1]$. A subcube $C \subseteq
  \pmo^n$ is a $(-\gamma, \eps)$-minimally skewed subcube if
  $\skew(C) \leq -\gamma$ and for all its parent  subcubes $D \supsetneq C$, we have
  \begin{align}
    \label{eq:min_skew_neg}
    \skew_\psi(D) \geq -(1 - \eps)\gamma.
  \end{align}
\end{definition}
Note that our convention is to always use $\gamma >0$ for
the magnitude of the skew, and specify its sign explicitly. Note that
the allowable values of $\gamma$ are different for the case of
positive and negative skew. We restrict $\eps \in (0,1]$. The case
  $\eps = 1$ corresponds to the case where every
  subcube of $C$ has no skew.

The crux of this definition is  that minimal skew cannot be
inherited from a parent. Given a minimal skewed subcube $C$, and a
parent $D \supsetneq C$, we show that $C$ has noticeable skew in the
restriction $\restr{\psi}{D}$.

\begin{lemma}
\label{lem:cond_skew}
If $C$ is a $(\gamma, \eps)$-minimal skewed subcube
and $D \supseteq C$ is a parent of $C$, then
\[ \skew_{\restr{\psi}{D}}(C) \geq \frac{\eps \sqrt{\gamma}}{2}. \]
If $C$ is a $(-\gamma, \eps)$-minimal skewed subcube
and $D \supseteq C$ is a parent of $C$, then
\[ \skew_{\restr{\psi}{D}}(C) \leq -\eps\gamma. \]
\end{lemma}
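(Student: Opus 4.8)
The plan is to reduce both bounds to a single identity for $\skew_{\restr{\psi}{D}}(C)$ and then substitute the defining inequalities of \cref{def:min_skew}. Since $C \subseteq D$, applying \cref{lem:prod_rule} (with its ``$C$'' taken to be our parent $D$ and its ``$D$'' our child $C$) gives $\ip{\psi,\mu_C} = \ip{\psi,\mu_D}\cdot\ip{\restr{\psi}{D},\mu_C|_D}$, and by \cref{def:skew,def:restr} the factor $\ip{\restr{\psi}{D},\mu_C|_D}$ is exactly $1 + \skew_{\restr{\psi}{D}}(C)$, the skew of $C$ read inside the ambient cube $D$ (one can also verify this by a one-line calculation with counting measures). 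Hence, writing $s_C = \skew_\psi(C)$ and $s_D = \skew_\psi(D)$,
$$\skew_{\restr{\psi}{D}}(C) \;=\; \frac{\ip{\psi,\mu_C}}{\ip{\psi,\mu_D}}-1 \;=\; \frac{1+s_C}{1+s_D}-1 .$$
This is legal because $\psi$ is a nonnegative measure putting positive mass on $C$ (if not, $s_C = -1$, which contradicts $s_C \ge \gamma > 0$ in the positive case, while in the negative case the left side equals $-1 \le -\eps\gamma$ and we are done), so it puts positive mass on $D \supseteq C$ and $\ip{\psi,\mu_D} > 0$.

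Given this identity, the negative case is immediate. By \cref{def:min_skew}, $s_C \le -\gamma$, and since $D$ is a parent of $C$, $s_D \ge -(1-\eps)\gamma$; moreover $\eps,\gamma\in(0,1]$ force $0 < 1-(1-\eps)\gamma \le 1$, so
$$\skew_{\restr{\psi}{D}}(C)\;\le\;\frac{1-\gamma}{1-(1-\eps)\gamma}-1\;=\;\frac{-\eps\gamma}{1-(1-\eps)\gamma}\;\le\;-\eps\gamma .$$
For the positive case, \cref{def:min_skew} gives $s_C \ge \gamma$ and $s_D \le (1-\eps)\gamma$, and since $1 + s_D > 0$,
$$\skew_{\restr{\psi}{D}}(C)\;\ge\;\frac{1+\gamma}{1+(1-\eps)\gamma}-1\;=\;\frac{\eps\gamma}{1+(1-\eps)\gamma}.$$

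The one delicate point — where the square root enters — is passing from $\frac{\eps\gamma}{1+(1-\eps)\gamma}$ to the stated $\frac{\eps\sqrt{\gamma}}{2}$. After cancelling $\eps$ this asks for $2\sqrt{\gamma} \ge 1+(1-\eps)\gamma$, i.e.\ $(1-\eps)\gamma - 2\sqrt{\gamma} + 1 \le 0$ viewed as a quadratic in $\sqrt{\gamma}$, which is an AM--GM-type statement bounding the denominator $1+(1-\eps)\gamma$ by $2\sqrt{(1-\eps)\gamma}\le 2\sqrt{\gamma}$. I expect this last inequality to need either that $\gamma$ lies in the intended working range (it degrades for $\gamma$ very close to $0$, and for $\gamma$ large when $\eps$ is bounded away from $1$), or a sharper use of minimality: \cref{lem:skew_avg} applied to $D$ forces $s_D \ge (1+\gamma)2^{-(\codim(C)-\codim(D))} - 1$, and playing this lower bound against the minimality upper bound $s_D \le (1-\eps)\gamma$ both constrains $\gamma$ for parents $D$ near $C$ and can be used to improve the denominator estimate, while parents $D$ of small codimension are handled directly from the displayed bound. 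Fusing the two regimes into the single clean estimate $\frac{\eps\sqrt{\gamma}}{2}$ is the step I expect to require the most care; everything else is the routine substitution above.
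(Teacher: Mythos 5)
Your derivation of the identity $\skew_{\restr{\psi}{D}}(C) = \frac{1+s_C}{1+s_D}-1$ via \cref{lem:prod_rule}, and your entire treatment of the negative case, coincide with the paper's argument. In the positive case you also arrive at the same intermediate bound $\frac{\eps\gamma}{1+(1-\eps)\gamma}$ as the paper. Your hesitation about the last step is well founded: you have in fact located an error in the paper's proof. The paper finishes with the chain
\[
\frac{\eps\gamma}{1+(1-\eps)\gamma} \;\geq\; \frac{\eps\gamma}{2\sqrt{(1-\eps)\gamma}} \;\geq\; \frac{\eps\sqrt{\gamma}}{2},
\]
attributing the first inequality to AM--GM. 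But AM--GM gives $1+(1-\eps)\gamma \geq 2\sqrt{(1-\eps)\gamma}$, i.e.\ the left-hand denominator is the \emph{larger} one, so that inequality points the wrong way (equality only when $(1-\eps)\gamma=1$). Exactly as you suspect, the stated conclusion $\skew_{\restr{\psi}{D}}(C)\geq \eps\sqrt{\gamma}/2$ is false outside the window $\frac{1}{1+\sqrt{\eps}} \leq \sqrt{\gamma} \leq \frac{1}{1-\sqrt{\eps}}$: for $\eps=1$ and $\gamma=1/100$ a parent with $\skew_\psi(D)=0$ gives $\skew_{\restr{\psi}{D}}(C)=\gamma=0.01$, below the claimed $0.05$.

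So your proof is not missing an idea; the missing step cannot be supplied because the statement as written does not hold. The clean, universally valid consequence of your computation is $\skew_{\restr{\psi}{D}}(C) \geq \frac{\eps\gamma}{1+\gamma} \geq \frac{\eps}{2}\min(\gamma,1)$, and this (rather than your suggested repair via \cref{lem:skew_avg}, which the paper does not attempt) is the natural patch: it simply replaces $\eps\sqrt{\gamma}$ by $\eps\min(\gamma,1)$ in the coefficient thresholds of \cref{lem:complete} and in the resulting list-size and running-time bounds for positive skew (e.g.\ \cref{thm:comb_pos}), without affecting their qualitative form.
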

\begin{proof}
  We first consider the case when $\gamma > 0$.
  By \cref{lem:prod_rule}
  \begin{align}
    \label{eq:lb_ip}
	\ip{\restr{\psi}{D}, \restr{\mu_C}{D}}  & =
	\frac{\ip{\psi, \mu_C}}{\ip{\psi, \mu_D}}
	= \frac{1 + \skew_\psi(C)}{1 + \skew_\psi(D)}
	\geq \frac{1 + \gamma}{ 1 + (1 - \eps)\gamma}.
	\end{align}
	We have
	\begin{align*}
	\skew_{\restr{\psi}{D}}(C) &= \ip{\restr{\psi}{D},
		\restr{\mu_C}{D}} - 1 \geq
	\frac{\eps\gamma}{1 + (1 - \eps)\gamma} \geq
        \frac{\eps\gamma}{2\sqrt{(1 - \eps)\gamma}} \geq \frac{\eps \sqrt{\gamma}}{2}
	\end{align*}
	where the first inequality is by Equation \eqref{eq:lb_ip} and
        the second is by the AM-GM inequality.

        Next we consider the case where $\gamma < 0$.
	By \cref{lem:prod_rule}
	\begin{align*}
	\ip{\restr{\psi}{D}, \restr{\mu_C}{D}}  & =
	\frac{\ip{\psi, \mu_C}}{\ip{\psi, \mu_D}}
	= \frac{1 + \skew_\psi(C)}{1 + \skew_\psi(D)}
	\leq \frac{1 - \gamma}{ 1 - (1 - \eps)\gamma}.
	\end{align*}
        Hence
        \begin{align*}
          \skew_{\restr{\psi}{D}}(C) &= \ip{\restr{\psi}{D}, \restr{\mu_C}{D}}  -1
          \leq \frac{1 - \gamma}{ 1 - (1 - \eps)\gamma} - 1
          = \frac{\eps\gamma}{1 - (1 - \eps)\gamma} \leq - \eps\gamma. \qedhere
       \end{align*}
\end{proof}

\section{Fourier Analysis}
\label{sec:fourier}

Given $S \subseteq [n]$, let $\chi_S: \pmo^n \rgta \pmo$ be given by
$\chi_S(x) = \prod_{i \in S}x_i$. These functions form a basis so we
can write $\psi = \sum_S \widehat \psi(S) \chi_S$, where the  Fourier
coefficients of $\psi$ are given by
\begin{align*}
  \hatpsi(S) = \E_{\x \sim \mu}[\psi(\x)\chi_S(\x)] = \sum_{x \in \pmo^n}\frac{\psi(x)\chi_S(x)}{2^n}
  = \sum_{x \in \pmo^n}\Pr_{\x \sim \psi}[\x = x]\chi_S(x) = \E_{\x \sim \psi}[\chi_S(\x)]
\end{align*}
which is simply the bias of $\chi_S$ under the distribution
$\psi$. Thus we have
\[ \psi(x)  = \sum_{S \subseteq  [n]}\hatpsi(S)\chi_S(x) \]
where $\hatpsi(\emptyset) = \E_{\x \sim \psi}[\psi(x)] = 1$.
Given two distributions $\psi$ and $\omega$, their inner product is
given by
\begin{align*}
  \ip{\psi, \omega} = \E_{\x \sim \pmo^n}[\psi(\x)\omega(\x)]
  = \sum_{x \in \pmo^n}\frac{\psi(x)\omega(x)}{2^n}
  = 1 + \sum_{\emptyset \neq S \subseteq [n]}\hatpsi(S)\hatomega(S)
\end{align*}

\subsubsection*{Skew implies heavy low-degree coefficients}

We show that large skew in the subcube $(K, y)$
implies non-trivial Fourier mass on subsets of $K$.

\begin{lemma}
\label{lem:skew_fourier}
  For $C = (K,y)$,
  \begin{align*}
  \skew_\psi(C)	&= \sum_{\emptyset \neq S \subseteq K} \hatpsi(S)\chi_S(y).
  \end{align*}
\end{lemma}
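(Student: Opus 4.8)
The plan is to compute the Fourier expansion of the measure $\mu_C$ explicitly and then apply the Plancherel-type inner product formula already derived in this section. Recall that $\skew_\psi(C) = \ip{\psi, \mu_C} - 1$ by \cref{def:skew}, so it suffices to understand $\ip{\psi, \mu_C}$ in Fourier terms.

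First I would write $\mu_C$, for $C = (K, y)$, as the product $\mu_C(x) = \prod_{i \in K}(1 + y_i x_i)$. This identity is immediate from a case check: if $x_i = y_i$ for all $i \in K$ then each factor equals $1 + y_i^2 = 2$ (using $y_i \in \pmo$), so the product is $2^{\codim(C)}$, matching the stated formula for $\mu_C$ on $C$; and if $x_i \neq y_i$ for some $i \in K$ then the corresponding factor is $1 - 1 = 0$, so the product vanishes, matching $\mu_C = 0$ outside $C$. Expanding the product over subsets gives
\[ \mu_C(x) = \prod_{i \in K}(1 + y_i x_i) = \sum_{S \subseteq K} \prod_{i \in S} y_i x_i = \sum_{S \subseteq K} \chi_S(y)\,\chi_S(x), \]
so that $\widehat{\mu_C}(S) = \chi_S(y)$ for every $S \subseteq K$ and $\widehat{\mu_C}(S) = 0$ otherwise. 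In particular $\widehat{\mu_C}(\emptyset) = 1$, consistent with $\mu_C$ being a probability measure.

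Next I would plug this into the inner product identity established earlier in \cref{sec:fourier}, namely $\ip{\psi, \omega} = 1 + \sum_{\emptyset \neq S \subseteq [n]} \hatpsi(S)\hatomega(S)$ (which uses $\hatpsi(\emptyset) = \hatomega(\emptyset) = 1$). Taking $\omega = \mu_C$ and using that $\widehat{\mu_C}$ is supported on subsets of $K$ yields
\[ \ip{\psi, \mu_C} = 1 + \sum_{\emptyset \neq S \subseteq K} \hatpsi(S)\,\chi_S(y). \]
Subtracting $1$ gives exactly $\skew_\psi(C) = \sum_{\emptyset \neq S \subseteq K} \hatpsi(S)\chi_S(y)$, as claimed.

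There is essentially no hard step here: the only thing to verify carefully is the product formula for $\mu_C$ and the bookkeeping of the empty-set term when converting between $\ip{\psi,\mu_C}$ and $\skew_\psi(C)$. If one prefers to avoid invoking the $\ip{\psi,\omega}$ formula, an equivalent route is to compute directly $\ip{\psi,\mu_C} = \E_{\x \sim \mu}[\psi(\x)\mu_C(\x)] = \sum_{S \subseteq K}\chi_S(y)\,\E_{\x\sim\mu}[\psi(\x)\chi_S(\x)] = \sum_{S \subseteq K}\chi_S(y)\hatpsi(S)$ using the definition of $\hatpsi(S)$, and then peel off the $S = \emptyset$ term which contributes $\hatpsi(\emptyset) = 1$.
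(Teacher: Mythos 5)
Your proof is correct and follows essentially the same route as the paper: expand $\mu_C(x) = \prod_{i \in K}(1 + x_i y_i)$ into its Fourier form and apply the inner-product identity $\ip{\psi,\omega} = 1 + \sum_{\emptyset \neq S}\hatpsi(S)\hatomega(S)$. Your write-up just spells out the case check for the product formula in more detail than the paper does.
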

\begin{proof}
  Given $C =(K, y)$, $\mu_C$ the uniform measure over $C$ is given by
  \[ \mu_C(x) = \prod_{i \in K}(1 + x_iy_i) = 1 + \sum_{\emptyset \neq S
    \subseteq K}\chi_S(y)\chi_S(x). \]
Hence we have
  \begin{align*}
\ip{\psi, \mu_C} = 1 + \sum_{\emptyset \neq S \subseteq
	[n]}\hatpsi(S)\hatmu_A(S) =  1 + \sum_{\emptyset \neq S
  \subseteq K}\hatpsi(S)\chi_S(y)
  \end{align*}
  from which the claim follows.
\end{proof}

Given the above lemma, our approach is to reduce bounding the number of
skewed subcubes to bounding the number of large Fourier coefficients
of  $\psi$ at level $k$.

We define
\[ W^{\leq k}(\psi) = \sum_{\substack{S \subseteq [n] \\ |S| \leq k}} \hatpsi(S)^2.\]
A trivial bound is
obtained from  Parseval's identity:
\begin{align*}
 W^{\leq k}(\psi) \leq  \sum_{S \subseteq [n]} \hatpsi(S)^2   &=  \E_{\x \sim \pmo^n}[\psi(\x)^2]
 \leq \inorm{\psi}\E_{\x \sim \pmo^n}[\psi(\x)] = \inorm{\psi}
\end{align*}
If we restrict the summation to sets $S$ of cardinality at most $k$,
then a much stronger bound of $O(\ln(\inorm{\psi})^k)$ holds, it is proved
using the powerful HyperContractivity Theorem. These bounds
generalize the Level-k inequalities for the Fourier spectrum of
small-sets, indeed the proof is identical.

For a $f: \pmo^n \rgta \R$ and $0 \leq \rho \leq 1$ set
\begin{align*}
T_\rho f = \sum_S \rho^{|S|}\widehat f(S)\chi_S
\end{align*}

$T_\rho$ is known as the noise operator. Recall that for $p > 0$ we have $\norm{f}_p = \E[f^p]^{1/p}$. The hypercontractive inequality quantifies the extent to which the noise operater reduces the norm of a function. See for instance.
\cite[Chapter 2]{o2014analysis} for a detailed exposition.

\begin{theorem}
\label{thm:hc}
  Let $f: \pmo^n \rgta \R$ and $\rho \in [0,1]$. Then
  \[ \norm{T_\rho f}_2 \leq \norm{f}_{1 + \rho^2}. \]
\end{theorem}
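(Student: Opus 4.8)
The plan is to prove the inequality first for $n=1$ — the ``two‑point inequality'' — and then lift it to all $n$ by induction on $n$, using the fact that the noise operator acts on each coordinate separately. The inductive step will be a standard application of Minkowski's integral inequality, and it goes through precisely because the target exponent $2$ is at least the source exponent $p := 1+\rho^2$; this is the abstract reason hypercontractivity tensorizes. (Combined with monotonicity of $L^q$‑norms this also yields $\norm{T_\rho f}_2 \le \norm{f}_q$ for every $q \ge 1+\rho^2$, and the two‑point estimate below is attained at $a=b$, so the bound is tight.)

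For $n=1$, writing $f(x)=a+bx$ we have $T_\rho f = a + \rho b x$, so the claim becomes: for all $a,b\in\R$,
\[ \bigl(a^2 + \rho^2 b^2\bigr)^{1/2} \;\le\; \left(\frac{|a+b|^p + |a-b|^p}{2}\right)^{1/p}, \qquad p = 1+\rho^2 \in [1,2]. \]
Both sides are homogeneous of degree $1$ in $(a,b)$ and invariant under sign flips of $a$ and of $b$, so I would reduce to $a=1$, $0\le b\le 1$: the case $a=0$ is immediate from $\rho\le 1$, and the case $b>1$ reduces to the case $b\le 1$ by rescaling by $1/b$ and again using $\rho\le 1$. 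Raising both sides to the $p$‑th power, it then suffices to prove
\[ (1 + \rho^2 b^2)^{p/2} \;\le\; \tfrac12\bigl((1+b)^p + (1-b)^p\bigr). \]

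To bound the left‑hand side I would invoke the elementary inequality $(1+t)^\theta \le 1+\theta t$ (valid for $\theta\in[0,1]$ and $t\ge -1$) with $\theta = p/2$ and $t = \rho^2 b^2$, together with $\tfrac{p}{2}\rho^2 = \tfrac{p(p-1)}{2} = \binom{p}{2}$, giving $(1+\rho^2 b^2)^{p/2} \le 1 + \binom{p}{2}b^2$. For the right‑hand side I would expand via the generalized binomial series, $\tfrac12\bigl((1+b)^p + (1-b)^p\bigr) = \sum_{k\ge 0}\binom{p}{2k}b^{2k}$ (valid for $|b|\le 1$, with $b=1$ handled by continuity). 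Cancelling the $k=0,1$ terms, the claim reduces to $\sum_{k\ge 2}\binom{p}{2k}b^{2k}\ge 0$, and this is where $1\le p\le 2$ is used: writing $\binom{p}{2k}$ as the product of $p,\ p-1,\ p-2,\ \dots,\ p-2k+1$ divided by $(2k)!$, the first two factors are nonnegative while the remaining $2k-2$ factors are each $\le 0$, so — there being an even number of them — every $\binom{p}{2k}$ with $k\ge 1$ is $\ge 0$. This is the only place a genuine computation happens and the only place the exact relation $p=1+\rho^2$ is exploited, so I expect it to be the main obstacle; the reduction to $n=1$ and the Minkowski step are routine once the direction of each inequality is tracked carefully.

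For the inductive step, assume the bound on $\pmo^{n-1}$ and decompose $f = g + x_n h$ with $g = \E_{x_n}[f]$, $h = \E_{x_n}[x_n f]$, both functions of $x' = (x_1,\dots,x_{n-1})$; then $T_\rho^{(n)} f = G + \rho x_n H$ where $G = T_\rho^{(n-1)} g$ and $H = T_\rho^{(n-1)} h$. Averaging over $x_n$ first (the cross term vanishes since $x_n^2 = 1$), $\norm{T_\rho^{(n)} f}_2 = \bigl\| (G^2 + \rho^2 H^2)^{1/2}\bigr\|_{L^2_{x'}}$; by the two‑point inequality applied pointwise in $x'$ (to $a = G(x')$, $b = H(x')$) this is at most $\bigl\| \norm{G + x_n H}_{L^p_{x_n}}\bigr\|_{L^2_{x'}}$. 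Since $p\le 2$, Minkowski's integral inequality permits interchanging the outer $L^2_{x'}$ with the inner $L^p_{x_n}$, bounding this by $\bigl\| \norm{G + x_n H}_{L^2_{x'}}\bigr\|_{L^p_{x_n}}$. Finally, for each fixed value of $x_n$ we have $G + x_n H = T_\rho^{(n-1)}(g + x_n h)$, so the inductive hypothesis gives $\norm{G + x_n H}_{L^2_{x'}} \le \norm{g + x_n h}_{L^p_{x'}}$; substituting and applying Fubini recovers $\norm{f}_{L^p(\pmo^n)}$, completing the induction.
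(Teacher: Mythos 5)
Your proof is correct, and it is worth noting at the outset that the paper does not prove Theorem \ref{thm:hc} at all --- it quotes the hypercontractive inequality as a known result with a pointer to O'Donnell's book. What you have written is a correct, self-contained rendition of the standard textbook argument: the two-point inequality for $n=1$, reduced by homogeneity and sign symmetry to $a=1$, $0\le b\le 1$ (with the $b>1$ case handled by the observation $(1-\rho^2)(1-1/b^2)\ge 0$), then verified by bounding the left side by $1+\binom{p}{2}b^2$ via Bernoulli's inequality and recognizing the right side as the even part of the binomial series, whose higher coefficients $\binom{p}{2k}$, $k\ge 2$, are nonnegative precisely because $1\le p\le 2$ leaves an even number of nonpositive factors; and then tensorization by induction on $n$, where the exchange of the mixed $L^2_{x'}L^p_{x_n}$ and $L^p_{x_n}L^2_{x'}$ norms via Minkowski's integral inequality goes in the right direction exactly because $p\le 2$. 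All of these steps check out.

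One parenthetical aside is wrong, though it plays no role in the argument: the two-point inequality is \emph{not} attained at $a=b$ when $1<p<2$. There the left side equals $\sqrt{p}\,|a|$ while the right side equals $2^{(p-1)/p}|a|$, and $\sqrt{p}<2^{(p-1)/p}$ strictly on $(1,2)$ (the two agree only at $p=1$ and $p=2$). The optimality of the exponent $p=1+\rho^2$ is instead witnessed in the regime $b/a\to 0$, where the two sides agree to second order in $b/a$. You may want to either fix or delete that remark.
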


We use the hypercontractive inequality to bound the mass of the low level coefficients.

\begin{theorem}
  \label{thm:level_k}
  Let $\psi$ be a distribution. Then
  \[ W^{\leq k}(\psi) \leq e^2 (\ln(e \inorm{\psi}))^k. \]
\end{theorem}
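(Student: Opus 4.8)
The plan is to mimic the classical proof of the level-$k$ inequalities for indicators of small sets \cite{o2014analysis}, with the one adaptation that we are applying it to the measure $\psi$ itself rather than to an indicator. First I would invoke the hypercontractive inequality (Theorem~\ref{thm:hc}) with $f = \psi$ and a noise rate $\rho \in (0,1]$ to be chosen at the end. On the left-hand side, since $\rho \le 1$ we have $\rho^{2|S|} \ge \rho^{2k}$ for every $S$ with $|S| \le k$, so
\[ \norm{T_\rho \psi}_2^2 = \sum_{S} \rho^{2|S|}\hatpsi(S)^2 \ \ge\ \rho^{2k}\sum_{|S| \le k}\hatpsi(S)^2 \ =\ \rho^{2k}\, W^{\leq k}(\psi), \]
and therefore $W^{\leq k}(\psi) \le \rho^{-2k}\norm{T_\rho \psi}_2^2 \le \rho^{-2k}\norm{\psi}_{1+\rho^2}^2$.

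The step that departs from the set-indicator case is the upper bound on $\norm{\psi}_{1+\rho^2}$, and this is the one place that actually uses $\inorm{\psi}$. Here I would use that $\psi(x) \le \inorm{\psi}$ pointwise together with $\E_{\x \sim \mu}[\psi(\x)] = 1$:
\[ \E_{\x \sim \mu}\!\left[\psi(\x)^{1+\rho^2}\right] \ =\ \E_{\x \sim \mu}\!\left[\psi(\x)^{\rho^2}\cdot \psi(\x)\right] \ \le\ \inorm{\psi}^{\rho^2}\,\E_{\x \sim \mu}[\psi(\x)] \ =\ \inorm{\psi}^{\rho^2}, \]
so that $\norm{\psi}_{1+\rho^2}^2 = \E[\psi^{1+\rho^2}]^{2/(1+\rho^2)} \le \inorm{\psi}^{2\rho^2/(1+\rho^2)} \le \inorm{\psi}^{2\rho^2}$, the last inequality using $\inorm{\psi} \ge 1$. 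Combining with the previous paragraph gives, for every $\rho \in (0,1]$,
\[ W^{\leq k}(\psi) \ \le\ \rho^{-2k}\,\inorm{\psi}^{2\rho^2}. \]

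It then remains to optimize over $\rho$, which is routine and naturally splits into two cases. When $2\ln\inorm{\psi} \ge k$ (in particular $\inorm{\psi} > 1$), I would set $\rho^2 = k/(2\ln\inorm{\psi}) \in (0,1]$; then $\rho^{-2k} = (2\ln\inorm{\psi}/k)^k$ and $\inorm{\psi}^{2\rho^2} = e^{k}$, giving $W^{\leq k}(\psi) \le (2e/k)^k(\ln\inorm{\psi})^k$, and one checks $(2e/k)^k \le e^2$ for all integers $k \ge 1$ since $t \mapsto t\ln(2e/t)$ attains its maximum value $2$ at $t = 2$. When $2\ln\inorm{\psi} < k$, I would instead take $\rho = 1$, which yields the Parseval bound $W^{\leq k}(\psi) \le \norm{\psi}_2^2 \le \inorm{\psi}$; writing $u = \ln\inorm{\psi} \le k/2$, one verifies $\inorm{\psi} = e^u \le e^2(1+u)^k$ by using $(1+u)^k \ge (1+u)^{2u}$ and the elementary inequality $u \le 2 + 2u\ln(1+u)$ (trivial for $u \le 2$, and for $u > 2$ we have $2u\ln(1+u) > 2u\ln 3 > u$). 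In both cases the bound is at most $e^2(\ln(e\inorm{\psi}))^k$, since $\ln\inorm{\psi} \le \ln(e\inorm{\psi})$. The only genuine subtlety beyond the textbook argument is the measure norm estimate in the second paragraph; keeping $\ln(e\inorm{\psi})$ rather than $\ln\inorm{\psi}$ is what makes the statement degrade gracefully as $\inorm{\psi} \to 1$, and it is precisely what lets the $\rho = 1$ case go through.
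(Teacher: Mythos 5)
Your proof is correct and follows essentially the same route as the paper: apply the hypercontractive inequality to $\psi$, bound $\norm{\psi}_{1+\rho^2}$ via the pointwise bound $\psi \le \inorm{\psi}$ together with $\E[\psi]=1$, and optimize over $\rho$. The only (cosmetic) difference is in the optimization: you use the textbook choice $\rho^2 = k/(2\ln\inorm{\psi})$ with a two-case analysis, whereas the paper takes $\rho = \min(1, 1/\sqrt{\ln\inorm{\psi}})$, which makes $\norm{T_\rho\psi}_2 \le e$ uniformly and yields $W^{\leq k}(\psi) \le e^2\max(1,\ln\inorm{\psi})^k \le e^2(\ln(e\inorm{\psi}))^k$ without any case split.
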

\begin{proof}
  By \cref{thm:hc}, we have
  \begin{align*}
    \norm{T_\rho \psi }_2 & \leq \norm{\psi}_{1 + \rho^2}\\
    & = \left(\E_{\x \sim \pmo^n}[\psi(x)^{\rho^2} \psi(x)]\right)^{1/(1 +
      \rho^2)}\\
    & \leq \inorm{\psi}^{\rho^2/(1 + \rho^2)}\E_{\x \sim \pmo^n}[
      \psi(x)]^{1/(1 + \rho^2)}\\
    & =  \exp\left(\frac{\ln(\inorm{\psi})\rho^2}{1 + \rho^2}\right)
  \end{align*}
  where we used Holder's inequality with $p = \infty$, $q =1$.
  Taking $\rho = \min(1, 1/\sqrt{\ln(\inorm{\psi})})$, we have
   \[\norm{T_\rho \psi }_2  \leq \exp(1/(1 + \rho^2)) \leq e\]
   But note that
   \[ \norm{T_\rho \psi }_2  \geq (\rho^{2k} W^{\leq k}(\psi))^{1/2}\]
   Hence we conclude that
   \[W^{\leq k}(\psi)) \leq e^2(1/\rho)^{2k} = e^2 \max(1, \ln \inorm{\psi})^k \leq e^2(\ln(e \inorm{\psi}))^k. \qedhere\]
\end{proof}

We also need a bound for the Fourier mass at level $k$ where we do not
count coordinates from some set $J \subseteq [n]$ in the degree of a
coefficient.
\[ W^{\leq k}(\psi, J) = \sum_{T \subseteq J} \sum_{\substack{S
      \subseteq [n]\setminus J \\ |S| \leq k}}\hatpsi(S \cup T)^2.\]

\begin{restatable}{corollary}{corlevel}
  \label{cor:level_k}
  For $J \subseteq [n]$ and a distribution $\psi$ over $\pmo^n$,
  \[ W^{\leq k}(\psi, J)   \leq 2^{|J|}e^2 (\ln(e \inorm{\psi}))^k. \]
\end{restatable}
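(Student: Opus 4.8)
The plan is to reduce to \cref{thm:level_k} by peeling off, for each subset $T \subseteq J$, a single function on the coordinates outside $J$ that carries exactly the terms $\hatpsi(S\cup T)^2$. Write a point of $\pmo^n$ as a pair $(x,y)$ with $x \in \pmo^{[n]\setminus J}$ and $y \in \pmo^{J}$. For each $T\subseteq J$ define $\psi^T\colon \pmo^{[n]\setminus J}\to\R$ by
\[ \psi^T(x) \;=\; \E_{\y\sim\pmo^{J}}\big[\psi(x,\y)\,\chi_T(\y)\big]. \]
A one-line Fourier computation (using $\chi_S(x)\chi_T(y)=\chi_{S\cup T}(x,y)$ for disjoint $S,T$) gives $\widehat{\psi^T}(S)=\hatpsi(S\cup T)$ for every $S\subseteq [n]\setminus J$, hence $W^{\leq k}(\psi^T)=\sum_{|S|\leq k,\,S\subseteq[n]\setminus J}\hatpsi(S\cup T)^2$ and therefore $W^{\leq k}(\psi,J)=\sum_{T\subseteq J}W^{\leq k}(\psi^T)$. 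It thus suffices to prove $W^{\leq k}(\psi^T)\leq e^2(\ln(e\inorm{\psi}))^k$ for each of the $2^{|J|}$ sets $T$, and sum.

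The function $\psi^T$ is in general not a probability measure (it can be negative and need not average to $1$), so \cref{thm:level_k} does not apply as a black box. However, it is dominated pointwise by something with the right norm: since $\psi\geq 0$ and $|\chi_T|\equiv 1$, the triangle inequality gives $|\psi^T(x)|\leq \E_{\y}[\psi(x,\y)]$, and then Jensen applied to the convex map $t\mapsto t^{1+\rho^2}$ yields
\[ \norm{\psi^T}_{1+\rho^2}\;\leq\;\norm{\psi}_{1+\rho^2}\;\leq\;\inorm{\psi}^{\rho^2/(1+\rho^2)}, \]
the second inequality being the H\"older step ($p=\infty$, $q=1$) already used in the proof of \cref{thm:level_k}. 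With this bound, one runs that proof verbatim on $\psi^T$: hypercontractivity (\cref{thm:hc}) gives $\norm{T_\rho\psi^T}_2\leq\norm{\psi^T}_{1+\rho^2}\leq\inorm{\psi}^{\rho^2/(1+\rho^2)}$, which is at most $e$ for the choice $\rho=\min(1,1/\sqrt{\ln\inorm{\psi}})$, while $\norm{T_\rho\psi^T}_2^2\geq\rho^{2k}W^{\leq k}(\psi^T)$; rearranging and using $\rho^{-2k}=\max(1,\ln\inorm{\psi})^k\leq(\ln(e\inorm{\psi}))^k$ gives $W^{\leq k}(\psi^T)\leq e^2(\ln(e\inorm{\psi}))^k$.

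I do not expect a genuine obstacle here: the only steps needing care are the identity $\widehat{\psi^T}(S)=\hatpsi(S\cup T)$ and the pointwise domination $|\psi^T|\leq\E_{\y}[\psi(\cdot,\y)]$, both immediate, after which everything is a black-box appeal to the proof of \cref{thm:level_k}. What I would specifically avoid is the tempting but lossy alternative of restricting $\psi$ to the $2^{|J|}$ subcubes that fix the coordinates of $J$ and invoking \cref{thm:level_k} on each restriction: a restriction $\restr{\psi}{C}$ can have highly concentrated mass, so $\inorm{\restr{\psi}{C}}$ is not controlled by $\inorm{\psi}$, and that route loses a factor of order $\inorm{\psi}/2^{|J|}$. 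Working with $\psi^T$, whose $(1+\rho^2)$-norm is bounded directly by that of $\psi$, sidesteps this.
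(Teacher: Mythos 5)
Your proof is correct and follows essentially the same route as the paper: both decompose $W^{\leq k}(\psi,J)$ over $T\subseteq J$ via the signed averages $\psi^{T}(x)=\E_{\y\sim\pmo^{J}}[\psi(x,\y)\chi_T(\y)]$, observe $\widehat{\psi^{T}}(S)=\hatpsi(S\cup T)$, and rerun the hypercontractive argument of \cref{thm:level_k} on each $\psi^{T}$. The only cosmetic difference is how the norm of $\psi^{T}$ is controlled (you bound $\norm{\psi^{T}}_{1+\rho^2}\leq\norm{\psi}_{1+\rho^2}$ by pointwise domination and Jensen, while the paper bounds $\pnorm{\psi^{T}}{1}\leq 1$ and $\inorm{\psi^{T}}\leq\inorm{\psi}$ separately and notes the original proof uses only these); both yield the same conclusion.
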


\subsubsection*{Projection, Extension, Restriction}

Given $x \in \pmo^n$ and a set of coordinates $P \subseteq [n]$, let
$x_{P}$ denote the projection of $x$ onto coordinates in $P$.
Given a distribution $\psi$ over $\pmo^n$ and a set of coordinates $P
\subseteq [n]$, let $\psi_{P}$ denote the marginal distribution over
the set $P$. The Fourier expansion is especially convenient for
marginals, we simply restrict the sum to subsets of $P$.

\begin{restatable}{lemma}{marginal}
  \label{lem:marginal}
For $P \subseteq [n]$ and a distribution $\psi$ over $\pmo^n$, the
restriction $\psi_{P}$ is given by
\[ \psi_{P}(y)  = \sum_{S \subseteq  P}\hatpsi(S)\chi_S(y). \]
\end{restatable}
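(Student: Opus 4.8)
The plan is to identify the Fourier coefficients of the marginal measure $\psi_{P}$ directly, using the fact that passing to a marginal commutes with evaluating characters supported on $P$. First I would recall that, as a measure on $\pmo^{P}$, the marginal is normalized so that $\E_{\y \sim \pmo^P}[\psi_P(\y)] = 1$; this forces
\[ \psi_{P}(y) = 2^{|P|}\Pr_{\x \sim \psi}[\x_{P} = y] = 2^{|P|}\sum_{x \in \pmo^n :\, x_{P} = y}\frac{\psi(x)}{2^n}. \]
In particular $\psi_{P}$ is a genuine real-valued function on $\pmo^{P}$, so it has a Fourier expansion $\psi_{P} = \sum_{S \subseteq P}\widehat{\psi_{P}}(S)\chi_{S}$, and it suffices to prove $\widehat{\psi_{P}}(S) = \hatpsi(S)$ for every $S \subseteq P$.

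For this I would compute, exactly as in the derivation of $\hatpsi(S) = \E_{\x \sim \psi}[\chi_S(\x)]$ in the text,
\[ \widehat{\psi_{P}}(S) = \E_{\y \sim \pmo^P}[\psi_{P}(\y)\chi_{S}(\y)] = \E_{\y \sim \psi_{P}}[\chi_{S}(\y)]. \]
Since $S \subseteq P$, the character $\chi_{S}$ depends only on coordinates in $P$, so $\chi_{S}(\y)$ with $\y \sim \psi_{P}$ has the same distribution as $\chi_{S}(\x_{P}) = \chi_{S}(\x)$ with $\x \sim \psi$; hence $\E_{\y \sim \psi_{P}}[\chi_{S}(\y)] = \E_{\x \sim \psi}[\chi_{S}(\x)] = \hatpsi(S)$. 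Combined with the previous paragraph this gives $\psi_{P}(y) = \sum_{S \subseteq P}\hatpsi(S)\chi_{S}(y)$.

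Alternatively, one can bypass the change-of-variables for expectations and expand directly: substitute $\psi = \sum_{S \subseteq [n]}\hatpsi(S)\chi_{S}$ into the displayed formula for $\psi_{P}(y)$, swap the two sums, and evaluate $\sum_{x :\, x_{P} = y}\chi_{S}(x)$ by splitting $\chi_{S} = \chi_{S \cap P}\cdot\chi_{S \setminus P}$; summing over the free coordinates outside $P$ annihilates every term with $S \not\subseteq P$ and contributes a factor $2^{n-|P|}$ otherwise, which cancels against the $2^{|P|}/2^n$ normalization. Either way there is no real obstacle; the only points needing care are the $2^{|P|}$ normalization in the definition of the marginal as a measure, and keeping the character index restricted to subsets of $P$ throughout.
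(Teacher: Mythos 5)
Your proof is correct. Your primary argument and the paper's proof are two sides of the same orthogonality computation, but they are organized differently. The paper writes $\psi_P(y) = \E_{\z \sim \mu_{\bar P}}[\psi(y \circ \z)]$, substitutes the Fourier expansion of $\psi$, and observes that averaging $\chi_S$ over the uniformly random free coordinates kills every term with $S \not\subseteq P$. You instead fix $S \subseteq P$ and compute the Fourier coefficient of the marginal directly, $\widehat{\psi_P}(S) = \E_{\y \sim \psi_P}[\chi_S(\y)] = \E_{\x \sim \psi}[\chi_S(\x)] = \hatpsi(S)$, leaning on the bias identity for Fourier coefficients of measures established earlier in the text; this buys you a proof with no explicit sum over the complement $\bar P$, at the cost of having to note that the marginal's Fourier support is automatically contained in $2^P$. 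Your sketched ``alternative'' route (expand $\psi$, split $\chi_S = \chi_{S\cap P}\cdot\chi_{S\setminus P}$, and sum out the free coordinates) is precisely the paper's proof. Both handle the one genuinely delicate point correctly, namely the $2^{|P|}$ normalization of the marginal as a measure.
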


Coversely we can {\em extend} a distribution $\psi'$ defined on
$\pmo^P$ for  $P \subseteq [n]$ to all of $\pmo^n$ while preserving
its important properties.

\begin{restatable}{lemma}{extension}
\label{lem:extension}
  Let $P \subsetneq  [n]$. Let $\psi'$ be a distribution on
$\pmo^P$. Define a distribution $\psi$ on $\pmo^n$ by $\psi(x) =
\psi'(x_P)$. Then
\begin{enumerate}
  \item $\psi$ is the product distribution of $\psi'$ with the uniform
    distribution on $\pmo^{\bar{P}}$.
  \item $\inorm{\psi} = \inorm{\psi'}$.
  \item $C$ is a minimal skewed subcube under $\psi$
    iff it is a minimal skewed subcube under
    $\psi'$.
\end{enumerate}
\end{restatable}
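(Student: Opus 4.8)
The plan is to prove the three claims in order, each essentially by unwinding definitions together with the Fourier characterizations already established. Write $\bar P = [n] \setminus P$, and recall $\psi(x) = \psi'(x_P)$.

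For part (1), I would argue directly at the level of probabilities: for $x \in \pmo^n$, $\Pr_{\x \sim \psi}[\x = x] = \psi(x)/2^n = \psi'(x_P)/2^n = \left(\psi'(x_P)/2^{|P|}\right)\cdot 2^{-|\bar P|} = \Pr_{\x' \sim \psi'}[\x' = x_P]\cdot \Pr_{\x'' \sim \mu_{\bar P}}[\x'' = x_{\bar P}]$, which is exactly the statement that $\psi$ is the product of $\psi'$ with the uniform distribution on $\pmo^{\bar P}$. Equivalently, one can observe that $\psi(x) = \psi'(x_P)$ depends only on $x_P$, so every Fourier coefficient $\hatpsi(S)$ with $S \not\subseteq P$ vanishes, and for $S \subseteq P$ we get $\hatpsi(S) = \hatpsi'(S)$ by \cref{lem:marginal} applied to $P$; a product distribution with uniform marginals on $\bar P$ is precisely one whose Fourier support lies in subsets of $P$. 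Part (2) is then immediate: $\inorm{\psi} = \max_x \psi(x) = \max_x \psi'(x_P) = \max_{y \in \pmo^P}\psi'(y) = \inorm{\psi'}$, since as $x$ ranges over $\pmo^n$, $x_P$ ranges over all of $\pmo^P$.

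For part (3), the key computation is to relate $\skew_\psi$ of a subcube to $\skew_{\psi'}$ of its ``projection.'' Given a subcube $C = (K, y)$ in $\pmo^n$, I would split $K = K_P \cup K_{\bar P}$ with $K_P = K \cap P$, $K_{\bar P} = K \cap \bar P$, and use \cref{lem:skew_fourier}: $\skew_\psi(C) = \sum_{\emptyset \neq S \subseteq K}\hatpsi(S)\chi_S(y)$. Since $\hatpsi(S) = 0$ unless $S \subseteq P$, only sets $S \subseteq K_P$ contribute, and $\hatpsi(S) = \hatpsi'(S)$ for such $S$; hence $\skew_\psi(C) = \sum_{\emptyset \neq S \subseteq K_P}\hatpsi'(S)\chi_S(y_{K_P}) = \skew_{\psi'}(C_P)$, where $C_P = (K_P, y_{K_P})$ is the subcube of $\pmo^P$ obtained by keeping only the fixed coordinates that lie in $P$. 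Thus the skew of any subcube of $\pmo^n$ equals the skew of its $P$-projection. In particular the skew is unchanged by adding or removing coordinates from $\bar P$ to the fixed set, so any minimal skewed subcube must fix \emph{no} coordinate outside $P$ (otherwise a parent obtained by unfixing such a coordinate would have the same skew, violating Equation~\eqref{eq:min_skew} or~\eqref{eq:min_skew_neg}); and for subcubes with $K \subseteq P$, the map $C \mapsto C$ identifies subcubes of $\pmo^n$ with subcubes of $\pmo^P$, preserves the parent relation, and preserves skew. Therefore $C$ is a $(\pm\gamma,\eps)$-minimal skewed subcube under $\psi$ iff it is one under $\psi'$.

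The main obstacle, such as it is, is being careful in part (3) about the direction ``minimal under $\psi \Rightarrow$ minimal under $\psi'$'': one must rule out minimal skewed subcubes of $\psi$ that fix coordinates outside $P$, which is exactly where the minimality requirement (comparison against \emph{all} parents, including those that unfix a $\bar P$-coordinate) is used, together with the observation that such an unfixing leaves the skew invariant. Once that is in hand the correspondence is a bijection and the remaining conditions match term by term. Everything else is definitional bookkeeping, and I would not expect to need any tool beyond \cref{lem:marginal}, \cref{lem:skew_fourier}, and Definition~\ref{def:min_skew}.
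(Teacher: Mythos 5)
Your proposal is correct and follows essentially the same route as the paper: the heart of part (3) in both is the observation that freeing a coordinate outside $P$ leaves the skew unchanged, so a minimal skewed subcube can fix no $\bar P$-coordinate, after which the correspondence with subcubes of $\pmo^P$ is immediate. The only cosmetic difference is that you derive this invariance from the Fourier characterization (\cref{lem:skew_fourier} plus the vanishing of $\hatpsi(S)$ for $S \not\subseteq P$), whereas the paper uses the direct probability computation $\Pr_{\x\sim\psi}[\x\in C] = \Pr_{\x\sim\psi}[\x\in D]/2$; you also spell out the converse direction that the paper dismisses as easy.
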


Finally, we derive an expression for the Fourier expansion of
$\restr{\psi}{C}$ in terms of the coefficients of $\psi$.

\begin{restatable}{lemma}{restriction}
  \label{lem:restrict}
  Let $C = (J, z)$. Then
  \[ \restr{\psi}{C}(x) = \sum_{S \subseteq [n]\setminus J} \chi_S(x) \frac{\sum_{T
    \subseteq J} \hatpsi(S \cup T)\chi_T(z)}{\ip{\psi, \mu_C}}. \]
\end{restatable}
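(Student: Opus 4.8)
The statement to prove is \cref{lem:restrict}, the Fourier expansion of $\restr{\psi}{C}$.

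\textbf{Plan.} The approach is a direct computation starting from the definition $\restr{\psi}{C}(x) = \psi(x)/\ip{\psi,\mu_C}$ (Definition \ref{def:restr}), substituting the Fourier expansion $\psi(x) = \sum_{U \subseteq [n]} \hatpsi(U)\chi_U(x)$, and then reorganizing the sum according to how each set $U$ splits across the fixed coordinates $J$ and the free coordinates $[n]\setminus J$. The denominator $\ip{\psi,\mu_C}$ is just a positive scalar (it is $1 + \skew_\psi(C)$, nonzero by hypothesis that $\psi$ assigns nonzero probability to $C$), so it can be carried along untouched; the real content is rewriting the numerator $\psi(x)$ restricted to $x \in C$.

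\textbf{Key steps.} First I would write every $U \subseteq [n]$ uniquely as $U = S \cup T$ with $S = U \cap ([n]\setminus J) \subseteq [n]\setminus J$ and $T = U \cap J \subseteq J$, so that $\chi_U(x) = \chi_S(x)\chi_T(x)$. Next, the crucial observation: for $x \in C = (J,z)$, the coordinates in $J$ are pinned to $x_i = z_i$, hence $\chi_T(x) = \prod_{i \in T} x_i = \prod_{i \in T} z_i = \chi_T(z)$ for every $T \subseteq J$. Substituting this in,
\begin{align*}
\psi(x) = \sum_{S \subseteq [n]\setminus J}\sum_{T \subseteq J} \hatpsi(S\cup T)\chi_S(x)\chi_T(z) = \sum_{S \subseteq [n]\setminus J} \chi_S(x)\left(\sum_{T \subseteq J}\hatpsi(S\cup T)\chi_T(z)\right)
\end{align*}
for all $x \in C$. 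Dividing by $\ip{\psi,\mu_C}$ gives exactly the claimed formula. One should note that the resulting expression, viewed as a function of $x$, only depends on the free coordinates $x_{[n]\setminus J}$ (since only $\chi_S$ with $S \subseteq [n]\setminus J$ appear), which is consistent with $\restr{\psi}{C}$ being constant along the fixed directions inside $C$; outside $C$ the formula is not claimed to agree with $\restr{\psi}{C}$, which is only defined on $C$ anyway.

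\textbf{Main obstacle.} There is essentially no obstacle here; this is a routine bookkeeping lemma. The only points requiring a modicum of care are (i) justifying that $\ip{\psi,\mu_C} \neq 0$ so that the division is legitimate --- this is precisely the standing hypothesis in Definition \ref{def:restr} --- and (ii) being careful that the identity $\chi_T(x) = \chi_T(z)$ uses $x \in C$, so the derived Fourier expansion is an expansion of $\restr{\psi}{C}$ as a function on the domain $C$ (equivalently, as a function of the free coordinates), not an identity on all of $\pmo^n$. With those caveats noted, the computation above is the complete proof.
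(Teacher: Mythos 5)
Your proof is correct and follows essentially the same route as the paper: expand $\psi$ in its Fourier basis, split each index set $U$ uniquely as $S \cup T$ with $S \subseteq [n]\setminus J$ and $T \subseteq J$, use that $\chi_T(x) = \chi_T(z)$ for $x \in C$, regroup, and divide by $\ip{\psi,\mu_C}$. The paper phrases the same computation using the notation $x \circ z$ for a point of $C$, but the argument is identical.
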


\section{A Combinatorial Bound for minimal skewed subcubes}
\label{sec:comb}

In this section, we show bounds on the number of minimal skewed
subcubes that is dimension independent.

\begin{theorem}[Combinatorial Bound for Positive Skew]
  \label{thm:comb_pos}
  For any measure $\psi$ on $\pmo^n$, integer $k \leq n$, and $\gamma
  \in (0, 2^k -1]$ and $\eps \in (0,1]$, the number of
  $(\gamma, \eps)$-minimal skewed subcubes of codimension at most $k$
      is bounded by
      \[k^{O(k)} \left(\frac{1}{\eps^2\gamma}\ln(e \inorm{\psi})\right)^k.\]
\end{theorem}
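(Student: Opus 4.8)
The plan is to reduce the task of bounding $(\gamma,\eps)$-minimal skewed subcubes to counting large Fourier coefficients, via Lemma \ref{lem:cond_skew} and Theorem \ref{thm:level_k}. I would build the list of minimal skewed subcubes incrementally by codimension: a minimal skewed subcube $C=(K,y)$ of codimension $j\le k$ is obtained by first choosing its "ancestor chain" and then the final free coordinate. Concretely, fix a minimal skewed subcube $C$ and write $K=\{i_1,\dots,i_j\}$; let $D=(K\setminus\{i_j\}, y')$ be its codimension-$(j-1)$ parent. By Lemma \ref{lem:cond_skew}, $\skew_{\restr{\psi}{D}}(C)\ge \eps\sqrt{\gamma}/2$, and by Lemma \ref{lem:skew_fourier} applied to the single-coordinate extension of $C$ inside $D$, this skew equals $\widehat{\restr{\psi}{D}}(\{i_j\})\cdot y_{i_j}$ (together with possibly the empty set, but the empty coefficient is $1$ and cancels). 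So the last fixed coordinate $i_j$ of any minimal skewed subcube refining $D$ must be a coordinate on which $\restr{\psi}{D}$ has a degree-$1$ Fourier coefficient of magnitude at least $\eps\sqrt{\gamma}/2$, and the sign of $y_{i_j}$ is then forced.

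Next I would bound, for any fixed subcube $D$, the number of coordinates $i$ with $|\widehat{\restr{\psi}{D}}(\{i\})| \ge \eps\sqrt{\gamma}/2$: by Parseval this is at most $W^{\le 1}(\restr{\psi}{D}) / (\eps\sqrt{\gamma}/2)^2 = \tfrac{4}{\eps^2\gamma} W^{\le 1}(\restr{\psi}{D})$, and by Theorem \ref{thm:level_k} (with $k=1$) this is at most $\tfrac{4}{\eps^2\gamma}\, e^2 \ln(e\inorm{\restr{\psi}{D}})$. Here is the one subtlety: $\inorm{\restr{\psi}{D}}$ can be larger than $\inorm{\psi}$ by the factor $1/\ip{\psi,\mu_D}$ (Fact \ref{fact:inorm}); but $D$ is itself on the ancestor chain of a skewed subcube, so $\skew_\psi(D')\ge\gamma$ fails only at $D$ itself is not the issue — rather, I want to restrict attention to parents $D$ that are skewed "the same way," in which case $\ip{\psi,\mu_D}=1+\skew_\psi(D)\ge 1$, so $\inorm{\restr{\psi}{D}}\le \inorm{\psi}$ and the bound $\tfrac{4e^2}{\eps^2\gamma}\ln(e\inorm{\psi})$ holds uniformly. (This is exactly the reason Definition \ref{def:min_skew}'s footnote restricts to parents skewed the same way.)

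Putting it together via a union over chains: there is a single codimension-$0$ cube $\pmo^n$; given a valid chain up to codimension $i$, the number of ways to extend by one more fixed coordinate-and-value is at most $N:=\tfrac{4e^2}{\eps^2\gamma}\ln(e\inorm{\psi})$, since the coordinate comes from the $\le N$ heavy degree-$1$ coefficients of the current restriction and its value is then forced. Hence the number of chains of length $j\le k$ is at most $N^j \le N^k$, and every minimal skewed subcube of codimension $\le k$ arises as the end of such a chain. This gives a bound of $\left(\tfrac{4e^2}{\eps^2\gamma}\ln(e\inorm{\psi})\right)^k$; absorbing the constant $4e^2$ into the $k^{O(k)}$ factor (or simply noting $(4e^2)^k = 2^{O(k)}$) yields the claimed $k^{O(k)}\left(\tfrac{1}{\eps^2\gamma}\ln(e\inorm{\psi})\right)^k$.

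The main obstacle I anticipate is handling the ancestor chains cleanly: a minimal skewed subcube of codimension $j$ has $j$ parents of codimension $j-1$, not a unique one, and I must make sure that the incremental counting does not require every intermediate cube on the chosen chain to be skewed (it need not be). The fix is to choose, for each minimal skewed subcube $C$, one particular chain — e.g. order the fixed coordinates and always drop the last — and argue the extension step only using Lemma \ref{lem:cond_skew} applied to the immediate parent $D$ of the final cube, where we do have a same-direction skewed parent relation by minimality; intermediate cubes on the chain only need the heavy-coefficient property relative to the cube one level up, and the $\inorm{}$ control must be re-derived at each level. I would need to verify that at each level the relevant restriction $\restr{\psi}{D}$ still has $\inorm{\restr{\psi}{D}} \le \inorm{\psi}$, which holds as long as each cube on the chain has nonnegative skew; if some intermediate cube has negative skew the argument needs a small additional case analysis, and that bookkeeping — rather than any deep inequality — is where the real work lies.
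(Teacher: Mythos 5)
There is a genuine gap in your proposal, and it is in the step you treated as routine: the claim that every link of the ancestor chain corresponds to a heavy \emph{degree-one} Fourier coefficient of the current restriction. Lemma \ref{lem:cond_skew} applied to an intermediate cube $D_t$ on the chain gives $\skew_{\restr{\psi}{D_t}}(C)\ge \eps\sqrt{\gamma}/2$, and Lemma \ref{lem:skew_fourier} expresses this skew as a sum of $\widehat{\restr{\psi}{D_t}}(S)\chi_S(y)$ over \emph{all} nonempty subsets $S$ of the remaining fixed coordinates of $C$. When $D_t$ is the immediate parent of $C$ this sum has a single degree-one term, which is the one case your argument covers; but when $j-t\ge 2$ the largeness of the sum only forces some subset $S$ of size up to $j-t$ to carry a heavy coefficient, not any singleton. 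Concretely, take $\psi$ to be the distribution of $(\x,\chi_S(\x))$ for $|S|=k$ (the noiseless parity instance from Section \ref{sec:noisy-parity}), or the dual-BCH distribution: these are $k$-wise (resp.\ $(k-1)$-wise) independent, so for every intermediate cube $D_t$ of codimension $\le k-2$ on any chain, \emph{every} degree-one coefficient of $\restr{\psi}{D_t}$ on the relevant coordinates is exactly zero. Your candidate set of extensions at each intermediate step is therefore empty, the chains you enumerate never reach the minimal skewed subcubes, and the union bound does not bound their number.

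This is precisely why the paper's algorithm \fsr{} fixes an entire subset $S_t$ of size up to $k_t$ at each recursive step, with the threshold $\eps\sqrt{\gamma}/(k_t\binom{k_t}{|S_t|})$ obtained by splitting the skew sum by level; the counting then runs over partitions $\{s_t\}$ of $k$ and uses Theorem \ref{thm:level_k} at level $s_t$ rather than level $1$, which is where the $k^{O(k)}$ overhead comes from. Your secondary concern about controlling $\inorm{\restr{\psi}{D_t}}$ is real but minor: intermediate cubes need not be positively skewed, so $\ip{\psi,\mu_{D_t}}\ge 1$ can fail; the paper handles this with the test $\ip{\psi,\mu_{D_t}}\ge(1+\gamma)2^{-k_t}$ (which every ancestor of a $\gamma$-skewed subcube of codimension $\le k$ passes), yielding $\inorm{\psi_t}\le 2^{k_t}\inorm{\psi}$ and costing only another $k^{O(k)}$ inside the logarithm. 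To repair your write-up you would essentially have to replace single-coordinate extensions by subset extensions, at which point you recover the paper's argument.
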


\begin{theorem}[Combinatorial Bound for Negative Skew]
  \label{thm:comb_neg}
  For any measure $\psi$ on $\pmo^n$, integer $k \leq n$, and $
  \gamma \in (0,1]$ and $\eps \in(0, 1]$, the number of
  $(-\gamma, \eps)$-minimal skewed subcubes of codimension at most $k$
  is bounded by
        \[k^{O(k)} \left(\frac{1}{\eps^2\gamma^2}\ln\left(\frac{e
          \inorm{\psi}}{\eps\gamma}\right) \right)^k.\]
\end{theorem}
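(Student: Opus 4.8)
The plan is to run essentially the same argument as for \cref{thm:comb_pos}, re-tuned for negative skew. The driver is \cref{lem:cond_skew}: for a $(-\gamma,\eps)$-minimal skewed subcube $C=(K,y)$ and \emph{any} parent $D$, one has $\skew_{\restr{\psi}{D}}(C)\le -\eps\gamma$. Viewing $C$ inside $\restr{\psi}{D}$ and applying \cref{lem:skew_fourier}, this skew equals $\sum_{\emptyset\ne S\subseteq R}\widehat{\restr{\psi}{D}}(S)\chi_S(y)$, where $R$ is the set of coordinates $C$ fixes and $D$ does not (so $|R|\le k$); since there are at most $2^k$ terms, some coefficient $\widehat{\restr{\psi}{D}}(S)$ with $S\subseteq R$ has magnitude at least $\eps\gamma/2^k$, and its sign determines $y_S$. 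So I would reduce counting minimal skewed subcubes to analyzing a tree of subcubes, rooted at $\pmo^n$, where at a node $E$ with budget $b=k-\codim(E)$ and $\ip{\psi,\mu_E}\ge\eps\gamma$ one attaches, for every $S$ disjoint from $E$'s fixed set with $1\le|S|\le b$ and $|\widehat{\restr{\psi}{E}}(S)|\ge\delta:=\eps\gamma/2^k$ and every $z\in\pmo^S$, the child $E\cap(S,z)$; nodes with $\ip{\psi,\mu_E}<\eps\gamma$ are leaves. The pruning is harmless because, by \eqref{eq:min_skew_neg}, every strict parent $D$ of a $(-\gamma,\eps)$-minimal subcube has $\ip{\psi,\mu_D}=1+\skew_\psi(D)\ge 1-(1-\eps)\gamma\ge\eps\gamma$ (as $\gamma\le1$), so a pruned node is never a strict parent of a minimal skewed subcube; the minimal skewed subcubes themselves are recorded whenever the tree reaches them, and the first sentence guarantees the tree does reach all of them (the ``correct'' block $S\subseteq R$ always fits in the budget since $|R|=\codim(C)-\codim(E)\le b$).

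It then remains to bound the number of nodes, and the key point---the reason to branch on whole blocks $S$ rather than single coordinates---is that at a node of codimension $c$ we only need coefficients of degree $s\le k-c$, and by \cref{fact:inorm} together with the pruning rule $\inorm{\restr{\psi}{E}}=\inorm{\psi}/\ip{\psi,\mu_E}\le\inorm{\psi}/(\eps\gamma)$. Thus \cref{thm:level_k}, applied with $k$ replaced by $s$, bounds the number of heavy degree-$s$ coefficients at such a node by $W^{\le s}(\restr{\psi}{E})/\delta^2\le e^2(\ln(e\inorm{\psi}/(\eps\gamma)))^s/\delta^2$, each of which spawns $2^s$ children. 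Multiplying branching factors along a root-to-leaf path, the block sizes consumed add up to at most $k$, so the exponents of $\ln(e\inorm{\psi}/(\eps\gamma))$ telescope to at most $k$ while the $2^s$ and $1/\delta^2$ factors contribute at most $2^k/\delta^{2k}$; summing over the at most $2^k$ path shapes and over codimensions $\le k$ gives a node count of the promised form $k^{O(k)}\bigl(\tfrac1{\eps^2\gamma^2}\ln(\tfrac{e\inorm{\psi}}{\eps\gamma})\bigr)^k$. As a sanity check on the shape of the bound relative to \cref{thm:comb_pos}: the clean $-\eps\gamma$ in \cref{lem:cond_skew} (versus $\eps\sqrt\gamma/2$ in the positive case) turns the $1/(\eps^2\gamma)$ cost per level into $1/(\eps\gamma)^2$, and the fact that \eqref{eq:min_skew_neg} only lower-bounds $\skew_\psi(D)$---so a parent's measure can be as small as $\approx\eps\gamma$---turns $\ln(e\inorm{\psi})$ into $\ln(e\inorm{\psi}/(\eps\gamma))$.

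Two steps look like they will need the most care. The first is the counting bookkeeping above: it is essential to charge a node of codimension $c$ by the level-$s$ inequality for the degree $s$ it actually branches on, not by the level-$k$ inequality, since the naive estimate ``(per-node branching)$^{\text{depth}}$'' would push the exponent of $\ln$ up to $k^2$. The second, and the genuinely delicate one, is improving the heaviness threshold from $\eps\gamma/2^k$ to $\eps\gamma/\poly(k)$ (mirroring how the positive case runs with threshold $\asymp\eps\sqrt\gamma/\poly(k)$ rather than $\eps\sqrt\gamma/2^k$), since the $1/\delta^{2k}$ factor is exactly what must be $k^{O(k)}$ rather than $2^{O(k^2)}$ for the stated bound; this presumably comes from using \cref{lem:cond_skew} against a parent that differs from $C$ in only a few coordinates (so $|R|$ is small) together with \cref{cor:level_k}, and getting this reduction to go through inside the recursion is the hard part.
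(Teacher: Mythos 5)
Your architecture is the same as the paper's: recurse on parents of $C$, use \cref{lem:cond_skew} (negative case) to get $\skew_{\restr{\psi}{D}}(C)\le-\eps\gamma$ in each restriction, expand that skew via \cref{lem:skew_fourier}, branch on heavy low-degree coefficients of $\restr{\psi}{D}$, prune nodes with $\ip{\psi,\mu_E}<\eps\gamma$ (which is exactly how the paper gets $\inorm{\restr{\psi}{E}}\le\inorm{\psi}/(\eps\gamma)$ and hence the $\ln(e\inorm{\psi}/(\eps\gamma))$ factor via \cref{thm:level_k}), and multiply branching factors along root-to-leaf paths. However, there is a genuine gap exactly where you flagged it: with the heaviness threshold $\delta=\eps\gamma/2^{k}$ obtained by dividing the skew equally among all $2^{k}$ (really $2^{|R|}$) Fourier terms, the product $\prod_t 1/\delta^2$ over a path of length up to $k$ contributes $2^{O(k^2)}/(\eps\gamma)^{2k}$, and $2^{O(k^2)}$ is not $k^{O(k)}$. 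As written, your argument proves a strictly weaker bound than the theorem claims, and your guess for how to repair it (comparing against a parent differing in few coordinates, plus \cref{cor:level_k}) is not what is needed.

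The fix is a local two-stage averaging, not a change of parent. Write $\skew_{\restr{\psi}{D}}(C)=\sum_{k'=1}^{k_t}\sum_{|S|=k',\,S\subseteq R}\widehat{\restr{\psi}{D}}(S)\chi_S(y)$ and first average over the $k_t\le k$ degree classes: some degree $k'$ contributes at least $\eps\gamma/k_t$ in total, and that class has only $\binom{k_t}{k'}$ terms, so some single coefficient of degree $k'$ has magnitude at least $\eps\gamma/\bigl(k_t\binom{k_t}{k'}\bigr)$ --- a degree-\emph{dependent} threshold that is $\eps\gamma/\poly(k)$ only in the aggregate sense that matters. The count of such coefficients at a node is then $W^{\le s_t}(\restr{\psi}{D})\cdot\bigl(k_t\binom{k_t}{s_t}\bigr)^2/(\eps\gamma)^2\le e^2\bigl(\ln(e\inorm{\psi}/(\eps\gamma))\bigr)^{s_t}k^{2s_t+2}/(\eps\gamma)^2$ (this is \cref{lem:guess_s_neg}), and the crucial bookkeeping point is that along a path the binomial factors satisfy $\prod_t\binom{k_t}{s_t}^2\le\prod_t k^{2s_t}=k^{2\sum_t s_t}\le k^{2k}$, because the block sizes $s_t$ sum to at most $k$ --- the same telescoping you already use for the $\ln$ exponents. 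With that substitution the rest of your counting goes through verbatim and yields the stated $k^{O(k)}\bigl(\tfrac{1}{\eps^2\gamma^2}\ln(e\inorm{\psi}/(\eps\gamma))\bigr)^k$.
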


We now outline our approach for proving these bounds.
\begin{enumerate}

\item   We give an algorithm to enumerate all minimal skewed
  subcubes, given the list of large, low-degree
  Fourier coefficients in an adaptively chosen sequence of restrictions of the original
  distribution $\psi$. The algorithm
  recursively `grows' skewed subcubes by finding heavy Fourier coefficients and
  restricting the bits in that coefficient, and showing that this
  algorithm discovers all minimal skewed subcubes.

\item We bound the number of large low-degree Fourier
  coefficients of $\psi$ using Theorem \ref{thm:level_k}.

\end{enumerate}

The details of the algorithm in Step 1 are different for the cases of
positive and negative skew, so we present them in Subsections
\ref{subsec:pos} and \ref{subsec:neg}. To go from a combinatorial
bound to an efficient algorithm, we need to make Step 2
algorithmic. We will consider this problem under different learning
models in Sections \ref{sec:algo} and \ref{sec:mem-query}.

\subsection{Positive skew}
\label{subsec:pos}

We first present an algorithm $\fsr$ for enumerating minimal skewed
subcubes where the skew is positive.

To prove the combinatorial bound, we allow the algorithm to make
certain {\em guesses} in \cref{line:guess_s,line:guess_z}.
We think of the set of all possible outputs over all possible guesses as the list
that is returned by the algorithm. In Lemma \ref{lem:complete}, we will show that all minimal
skewed subcubes are contained in this list.  We bound the list size in
Lemma \ref{lem:comb_bound}. Together, these complete the proof of
Theorem \ref{thm:comb_pos}.

        We start the recursion with $R_0 = \emptyset$ and $z_0 = \emptyset$ the
        null string. The routine either returns FAIL or returns
        $S_t \subset [n]$ and $z_t \in \pmo^{S_t}$ such that $(R_t,
        z_t)$ is a $\gamma$-skewed subcube. The
        algorithm also takes as inputs the input the distribution $\psi$,
        a bound $k$ on the codimension, and skew parameters $\gamma
        \in (0,2^k -1]$ and  $\eps \in (0,1]$.
        These stay constant through the recursion, so
        we suppress the dependence on them.
        Consider the list of all possible choices returned by the
        algorithm.
%        Note that by \cref{lem:neg_pos_skew}, one can run this algorithm with parameters $\gamma' = \gamma/2^k$ and $\eps' = \eps/2^k$

	\begin{algorithm}[ht]
		\caption{$\fsr(R_{t}, y_{t})$}
		\label{alg:find_skew_recursive}
		\begin{algorithmic}[1]
		  \State Let $D_{t} = (R_{t}, y_{t})$.  Let $\psi_t =
                  \restr{\psi}{D_{t}}$. Let $k_{t} = k - |R_t|$.
		\If{$\skew_\psi(D_{t})  > \gamma(1 - \eps)$}\\
			\Return $D_t$ \label{line:return}
		\EndIf
		\If{$\ip{\psi, \mu_{D_t}} < (1 + \gamma)\cdot 2^{-k_t}$}\label{line:test}\\
			\Return FAIL \label{line:fail}
			\EndIf
                \State Pick $S_t$ such that $|S_t| \leq k_{t}$ and \label{line:guess_s}
                  \begin{align}
                    \label{eq:guess_s}
                    |\widehat{\psi_t}(S_t)| \geq
                        \frac{\eps\sqrt{\gamma}}{ k_{t} \cdot
                          \binom{k_{t}}{|S_t|}}.
                  \end{align}
                  \State Pick $z_t \in \pmo^{S_t}$. \label{line:guess_z}
                  \State \Return $\fsr(R_t \cup S_t, y_t \circ z_t)$.
		\end{algorithmic}
	\end{algorithm}

We need some notation for the analysis. Let the sequence of subcubes
produced by the algorithm be $D_0 \supsetneq D_1 \cdots \supsetneq D_\ell$.
Let $s_t = |S_t|$.

        \begin{lemma}
          \label{lem:complete}
          For every $(\gamma,\eps)$-minimal skewed subcube $C$ with $\codim(C) \leq  k$
          there are sequences of choices of $S_t$ and  $z_t$ (in
          \cref{line:guess_s,line:guess_z}) so that $C$ is returned by
          $\fsr$.
        \end{lemma}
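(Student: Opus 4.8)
The plan is to fix a target $(\gamma,\eps)$-minimal skewed subcube $C = (K, y^*)$ and exhibit a single run of $\fsr$ — i.e.\ a sequence of guesses in \cref{line:guess_s,line:guess_z} — that returns it. I would prove by induction on the recursion depth $t$ the invariant that $R_t \subseteq K$ and $y_t = y^*_{R_t}$, so that the current subcube $D_t = (R_t, y_t)$ always satisfies $D_t \supseteq C$, and that the recursion terminates by returning exactly $C$. The base case $R_0 = \emptyset$, $y_0$ the null string is immediate.

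For the inductive step I would assume the invariant at depth $t$ with $R_t \subsetneq K$ and check three things: the algorithm does not return at \cref{line:return}, it does not return FAIL at \cref{line:fail}, and an admissible guess carries the invariant to depth $t+1$. The no-return part: since $R_t \subsetneq K$ and $y_t = y^*_{R_t}$, the subcube $D_t$ is a \emph{proper} parent of $C$, so minimality (\cref{def:min_skew}) gives $\skew_\psi(D_t) \le (1-\eps)\gamma$ and the test at \cref{line:return} fails. The no-FAIL part uses \cref{lem:prod_rule}: $\ip{\psi,\mu_C} = \ip{\psi,\mu_{D_t}}\cdot\ip{\restr{\psi}{D_t}, \restr{\mu_C}{D_t}}$, and since $\restr{\mu_C}{D_t}$ is pointwise at most $2^{\codim(C) - \codim(D_t)} = 2^{|K|-|R_t|} \le 2^{k_t}$ while $\restr{\psi}{D_t}$ averages to $1$, the second factor is $\le 2^{k_t}$; combined with $\ip{\psi,\mu_C} = 1 + \skew_\psi(C) \ge 1+\gamma$ this forces $\ip{\psi,\mu_{D_t}} \ge (1+\gamma)2^{-k_t}$, so the test at \cref{line:test} does not fire.

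For the guess itself I would invoke \cref{lem:cond_skew} on the proper parent $D_t \supsetneq C$ to get $\skew_{\restr{\psi}{D_t}}(C) \ge \eps\sqrt{\gamma}/2$, then identify $C$ inside $D_t$ with the subcube of $\pmo^{[n]\setminus R_t}$ that fixes the coordinates $K\setminus R_t$ to $y^*_{K\setminus R_t}$ and apply \cref{lem:skew_fourier} to $\psi_t = \restr{\psi}{D_t}$, obtaining $\skew_{\psi_t}(C) = \sum_{\emptyset \ne S \subseteq K\setminus R_t}\widehat{\psi_t}(S)\,\chi_S(y^*_{K\setminus R_t})$. A double averaging argument — first over the at most $k_t$ nonempty size classes, then over the at most $\binom{k_t}{|S|}$ sets of a given size — produces a nonempty $S_t \subseteq K\setminus R_t$ with $|\widehat{\psi_t}(S_t)|$ meeting the threshold \eqref{eq:guess_s} (and $|S_t| \le |K \setminus R_t| \le k_t$, so $S_t$ is a legal guess). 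Guessing $z_t = y^*_{S_t}$ in \cref{line:guess_z} gives $R_{t+1} = R_t \cup S_t \subseteq K$ and $y_{t+1} = y^*_{R_{t+1}}$, which preserves the invariant and strictly enlarges $R_t$. Since $R_t$ grows monotonically inside the finite set $K$, after finitely many steps $R_\ell = K$, i.e.\ $D_\ell = C$; there $\skew_\psi(C) \ge \gamma > (1-\eps)\gamma$ (using $\eps > 0$), so the run returns $C$ at \cref{line:return}, and this value propagates back up through the tail calls.

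The hard part will be the guessing step: being careful that in the restricted distribution $\psi_t$ the relevant Fourier coefficients are indexed by subsets of $K\setminus R_t$ (not of $K$), transporting the skew-to-Fourier identity correctly across the restriction via \cref{lem:skew_fourier}, and verifying that the two-step averaging bound actually clears the threshold of \eqref{eq:guess_s} with the stated constants — in particular matching the $\eps\sqrt{\gamma}/2$ from \cref{lem:cond_skew} against the $k_t\binom{k_t}{|S_t|}$ normalization. The remaining pieces — the invariant bookkeeping, the no-return argument from minimality, and the product-rule computation ruling out a FAIL — are routine.
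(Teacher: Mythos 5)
Your proposal is correct and follows essentially the same route as the paper's proof: rule out the return and FAIL branches using minimality and a lower bound on $\ip{\psi,\mu_{D_t}}$, then combine \cref{lem:cond_skew} with \cref{lem:skew_fourier} and a two-stage averaging over size classes and sets to find an admissible $S_t \subseteq K\setminus R_t$, setting $z_t = y^*_{S_t}$. The only cosmetic difference is that you derive the no-FAIL bound via \cref{lem:prod_rule} and an $\ell_\infty$ bound on $\restr{\mu_C}{D_t}$ rather than the paper's direct probability comparison; these are equivalent, and the constant-matching you flag in the averaging step is a (harmless) factor-of-two slack already present in the paper's own argument.
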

        \begin{proof}
        For every $C = (K^*, z^*)$ that is a $(\gamma, \eps)$-minimal
        skewed subcube where $\codim(C) \leq k$, we will show that for every $t$, if
        $D_{t} \supsetneq C$ is parent of $C$, and is not equal to $C$
        there is a choice of $S_t, z_t$ that leads to a parent $D_{t+1}$ of $C$ with a larger
        codimension. Since $t \leq \codim(D_t) \leq \codim(C) \leq k$,
        in $\ell \leq k$ steps
        we must have $D_t = C$, at which point we return at
        \cref{line:return}. Thus the claim implies the lemma.

        At $t = 0$, we have $D_0 =\pmo^n$ so the parent
        condition holds trivially. Assume that we have $D_t \supsetneq C$.

        By the definition of a minimal skewed subcube, $\skew(D_{t})
        \leq (1- \eps)\gamma$, hence the procedure will not return at
        \cref{line:return}.

        Next we show that $\ip{\psi, \mu_{D_t}} \geq (1 +
          \gamma)2^{-k_t}$, the algorithm
          will not return FAIL at \cref{line:fail}:
        \begin{align*}
          \ip{\psi, \mu_{D_t}} & = \Pr_{\x \sim \psi}[\mu_{D_t}(\x)] =
          2^{k - k_t} \Pr_{\x \sim \psi}[\x \in D_t] \\
          & \geq 2^{k - k_t}\Pr_{\x \sim \psi}[\x \in C]\\
          & \geq 2^{-k_t}\ip{\psi, \mu_{C_t}}\\
          & \geq (1 + \gamma)2^{-k_t}.
          \end{align*}
        The first inequality holds because $D_t \supset C$, the
        second because $\codim(C) \leq k$ and
        the last because we assume that $\skew(C) \geq \gamma$.

        Recall that $\psi_t = \restr{\psi}{D_{t}}$, and let $K_{t} = K
        \setminus S_{t}$. By \cref{lem:skew_fourier},
        \begin{align*}
		\skew_{\psi_t}(C) =  \sum_{\emptyset \neq S
                    \subseteq K_{t}}\hatpsi_t(S)\chi_S(y) = \sum_{k'}
                  \sum_{\substack{\emptyset \neq S \subseteq K_{t} \\ |S|
                      = k'}}\hatpsi_t(S)\chi_S(y)
	\end{align*}
        By \cref{lem:cond_skew},
        $\skew_{\psi_t}(C) \geq \eps\sqrt{\gamma}/2$
	which implies that for some $k' \leq k_{t}$, we have
		\[\sum_{\substack{\emptyset \neq S \subseteq K_{t}
                    \\ |S| = k'}}\hatpsi_t(S)\chi_S(y) \geq
                \eps\sqrt{\gamma}/k_{t}\]
		which in turn implies that for at least one $\emptyset
                \neq S_t \subseteq K_{t}$, we have
                \[ |\hatpsi_t(S_t)| \geq
                \eps\sqrt{\gamma} / \left(k_{t}\cdot \binom{k_{t}}{k'} \right).\]
        Assume that we pick this $S_t$ in \cref{line:guess_s} and $z_t
        = \restr{z^*}{S}$ in \cref{line:guess_z}. This ensures
        that $D_{t+1}$ is a parent of $C$ of larger codimension.
        \end{proof}

We next bound the number of all possible outputs of the algorithm. The
crux of the argument is to bound the number of large low-degree
Fourier coefficients using \cref{thm:level_k}. This in turn requires a
bound on the infinity norm of $\psi_t$ which comes from passing the
test in Line \ref{line:test}.

          \begin{lemma}\label{lem:guess_s}
            The number of choices for $S_t$ satisfying Equation
            \eqref{eq:guess_s} is bounded by
            \[ \frac{e^2}{\eps^2\gamma} (\ln(2^{k_t} e \inorm{\psi}))^{s_t}k_t^{4s_t + 2}.\]
          \end{lemma}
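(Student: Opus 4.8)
The plan is to fix the size $s_t=|S_t|$ and bound the number of subsets $S$ of the free coordinates of $D_t$ with $|S|=s_t$ that pass the test in Equation~\eqref{eq:guess_s}, by combining a second-moment counting argument with the level-$k$ inequality \cref{thm:level_k}. The starting observation is that by the time the recursion reaches \cref{line:guess_s} the algorithm has not returned FAIL at \cref{line:fail}, so $\ip{\psi,\mu_{D_t}}\ge(1+\gamma)2^{-k_t}\ge 2^{-k_t}>0$; in particular $\psi_t=\restr{\psi}{D_t}$ is a genuine probability measure (on the $n-|R_t|$ free coordinates of $D_t$). Feeding this lower bound into \cref{fact:inorm} gives the crucial infinity-norm estimate
\[ \inorm{\psi_t}=\frac{\inorm{\psi}}{\ip{\psi,\mu_{D_t}}}\le 2^{k_t}\inorm{\psi}. \]
This is precisely what the threshold in \cref{line:test} is there to guarantee: without it $\psi_t$ could be arbitrarily spiky and the number of heavy Fourier coefficients unbounded.

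Next I would invoke \cref{thm:level_k} for the distribution $\psi_t$, but with the parameter $s_t$ rather than $k_t$, to obtain
\[ \sum_{|S|=s_t}\widehat{\psi_t}(S)^2\le W^{\le s_t}(\psi_t)\le e^2\bigl(\ln(e\inorm{\psi_t})\bigr)^{s_t}\le e^2\bigl(\ln(2^{k_t}e\inorm{\psi})\bigr)^{s_t}. \]
Using $s_t$ (the true cardinality of $S_t$) in the exponent, rather than the coarser bound $k_t$, is exactly what yields the $(\ln(2^{k_t}e\inorm{\psi}))^{s_t}$ factor in the claimed bound; replacing $s_t$ by $k_t$ here would be lossy whenever $\inorm{\psi}$ is large.

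Finally, every $S$ with $|S|=s_t$ counted by the lemma satisfies, by Equation~\eqref{eq:guess_s},
\[ \widehat{\psi_t}(S)^2\ge\frac{\eps^2\gamma}{k_t^2\binom{k_t}{s_t}^2}, \]
so the number of such $S$ is at most the ratio
\[ \frac{W^{\le s_t}(\psi_t)}{\eps^2\gamma/\bigl(k_t^2\binom{k_t}{s_t}^2\bigr)}\le\frac{e^2}{\eps^2\gamma}\bigl(\ln(2^{k_t}e\inorm{\psi})\bigr)^{s_t}k_t^2\binom{k_t}{s_t}^2. \]
Bounding $\binom{k_t}{s_t}\le k_t^{s_t}$ gives $k_t^2\binom{k_t}{s_t}^2\le k_t^{2s_t+2}\le k_t^{4s_t+2}$, which establishes the stated bound. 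There is no real obstacle here: the only two points that need attention are deriving the infinity-norm bound from the FAIL test and applying \cref{thm:level_k} at level $s_t$; the rest is bookkeeping, and in fact the argument delivers the slightly stronger exponent $k_t^{2s_t+2}$.
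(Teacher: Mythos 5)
Your proof is correct and follows essentially the same route as the paper's: bound $\inorm{\psi_t}\le 2^{k_t}\inorm{\psi}$ via \cref{fact:inorm} and the test at \cref{line:test}, apply \cref{thm:level_k} at level $s_t$, and divide the resulting Fourier-weight bound by the squared threshold from Equation~\eqref{eq:guess_s}. Your final bookkeeping in fact gives the slightly sharper exponent $k_t^{2s_t+2}$, which the paper loosens to $k_t^{4s_t+2}$; this makes no difference downstream.
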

          \begin{proof}
          We bound  $\inorm{\psi_t}$ as
          \[\inorm{\psi_t} = \inorm{\restr{\psi}{D_t}} \leq \frac{\inorm{\psi}}{\ip{\psi,
              \mu_{D_t}}} \leq  \frac{\inorm{\psi}}{(1+ \gamma)2^{-k_t}} \leq \inorm{\psi}2^{k_t}\]
           where the first inequality is from \cref{fact:inorm} and we
           have $\ip{\psi, \mu_{D_t}} \geq (1+ \gamma)2^{-k_t}$ since
           we check for this condition in \cref{line:fail}. We now
           use \cref{thm:level_k} which gives
           \[ W^{\leq s_t}(\psi) \leq e^2 (\ln(e 	\inorm{\psi}\cdot
             2^{k_t}))^{s_t}. \]
           Hence the number of choices for $S_t$  satisfying \eqref{eq:guess_s} is bounded by
           \begin{align*}
             W^{\leq s_t}(\psi_t)\left(\frac{k_t\binom{k_t}{s_t}}{\eps\sqrt{\gamma}}\right)^2
             &\leq \frac{e^2}{\eps^2\gamma} (\ln(2^{k_t} e \inorm{\psi}))^{s_t}k_t^{4s_t + 2}. \qedhere
          \end{align*}
          \end{proof}

  \begin{lemma}
		\label{lem:comb_bound}
		The total number of subcubes of codimension $k$ output
                by $\fsr$ is at most:
		\[k^{O(k)} \left(\frac{\ln(e \inorm{\psi})}{\eps^2\gamma}\right)^k.\]
	\end{lemma}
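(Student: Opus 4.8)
Let me think about what we need. We need to bound the total number of outputs of $\fsr$ over all guesses. The recursion tree has depth $\ell \le k$, and at each node we branch on the choice of $S_t$ (bounded in Lemma \ref{lem:guess_s}) and then on $z_t \in \pmo^{S_t}$ (a factor of $2^{s_t}$). So the total number of leaves is at most the product, over a root-to-leaf path, of (number of choices for $S_t$) $\times 2^{s_t}$, summed over all paths. The key structural fact is that $\sum_t s_t = |R_\ell| \le k$, since the $S_t$ are disjoint and accumulated into $R_t$, and $|R_\ell| \le \codim$ of the returned subcube $\le k$.

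So the plan: write the total count as a sum over compositions $(s_1, \dots, s_\ell)$ of an integer $\le k$ into at most $k$ parts. Wait—I need to be careful: different nodes at the same level with the same $s_t$ sequence can still have different intermediate subcubes. Let me reconsider. Actually the cleanest approach: the total number of outputs is bounded by a recursion $N(m) \le 1 + \sum_{s=1}^{m} (\text{\# choices of } S_t \text{ with } |S_t|=s) \cdot 2^s \cdot N(m - s)$ where $m = k_t$ is the remaining codimension budget; here the "$1+$" accounts for the possibility of returning at \cref{line:return}. Using Lemma \ref{lem:guess_s}, the number of choices of $S_t$ of size $s$ is at most $\frac{e^2}{\eps^2\gamma}(\ln(2^{k}e\inorm{\psi}))^{s} k^{4s+2}$ (bounding $k_t \le k$ and $2^{k_t} \le 2^k$). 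Note $\ln(2^k e \inorm\psi) = O(k \ln(e\inorm\psi))$ — actually $\ln(2^k e \inorm\psi) \le k\ln 2 + 1 + \ln\inorm\psi \le (k+1)\ln(e\inorm\psi)$ roughly, so this is $(k \cdot \ln(e\inorm\psi))^s$ up to constants folded into $k^{O(k)}$.

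Now unroll the recursion. Each leaf corresponds to a sequence $s_1, \dots, s_\ell$ with $\sum s_t \le k$; the weight of that leaf is $\prod_{t=1}^\ell \big[\frac{e^2}{\eps^2\gamma} (C k \ln(e\inorm\psi))^{s_t} k^{4s_t+2} \cdot 2^{s_t}\big]$. Since $\sum_t s_t \le k$ and $\ell \le k$, this product is at most $\big(\frac{e^2}{\eps^2\gamma}\big)^{k} \cdot (Ck\ln(e\inorm\psi))^{k} \cdot k^{4k} \cdot k^{2k} \cdot 2^{k}$. Finally, the number of such sequences is at most $\sum_{\ell \le k} (\text{compositions of } \le k \text{ into } \ell \text{ parts}) \le 2^{O(k)}$. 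Collecting everything: the $\frac{1}{\eps^2\gamma}$ appears to the $k$-th power, the $\ln(e\inorm\psi)$ appears to the $k$-th power, and every other factor ($k^{O(k)}$, $2^{O(k)}$, $C^k$, $e^{2k}$) is absorbed into $k^{O(k)}$. This yields the claimed bound
\[
k^{O(k)}\left(\frac{\ln(e\inorm\psi)}{\eps^2\gamma}\right)^k.
\]

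The main obstacle — really the only subtle point — is making the "sum over recursion leaves" bookkeeping rigorous: one must argue that the $S_t$'s along any path are disjoint (true, since $R_{t+1} = R_t \cup S_t$ and $S_t \subseteq K_t = K \setminus$ previously-fixed coordinates, or more simply since $S_t \subseteq [n]\setminus R_t$ is forced by $|S_t| \le k_t$ together with the recursion structure) so that $\sum_t s_t \le k$, and that bounding each per-node branching factor by its worst case ($k_t \le k$) and then multiplying is valid. Everything else is routine arithmetic: converting $\ln(2^{k_t} e\inorm\psi)$ to $O(k\ln(e\inorm\psi))$ and sweeping all $k$-dependent and absolute constants into $k^{O(k)}$.
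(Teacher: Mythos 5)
Your proposal is correct and follows essentially the same route as the paper: fix the composition $(s_1,\dots,s_\ell)$ of $k$ (the paper counts at most $k^k$ of these, you count $2^{O(k)}$ — either works), multiply the per-level branching factors from Lemma \ref{lem:guess_s} and the $2^{s_t}$ choices of $z_t$, use $\sum_t s_t \le k$ to collect $(\ln(2^{k}e\inorm{\psi}))^{k} \le (2k)^k(\ln(e\inorm{\psi}))^k$, and absorb everything else into $k^{O(k)}$. The disjointness of the $S_t$'s along a path, which you flag as the one subtle point, is immediate from the recursion structure ($S_t \subseteq K_t$, the not-yet-fixed coordinates), exactly as the paper implicitly uses.
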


	\begin{proof}
Since $\sum_{t \leq \ell} s_t = k$, the sequence $\{s_t\}_{t=1}^\ell$ is a partition of
$k$, and there are at most $k^k$ of them. Let us fix the sequence.
The number of choices for $S_t$ is bounded by
Lemma \ref{lem:guess_s}.  Since $z_t \in \pmo^{S_t}$, the number of
choices for $z$ is $2^{s_t}$.  Taking the product over all $t$,
the number of possible outputs for $\fsr$ is bounded by
\[  \prod_{t=1}^\ell \frac{e^2}{\eps^2 \gamma} (\ln(2^{k_t}e \inorm{\psi}))^{s_t}k_t^{4s_t + 2}\cdot 2^{s_t} \]

                  We can bound
                  \[ \prod_{t=1}^\ell (\ln(2^{k_t} e \inorm{\psi}))^{s_t}
                  \leq  \ln(2^k  e  \inorm{\psi})^k \leq  (k  +  \ln(e
                  \inorm{\psi}))^k \leq (2k)^k (\ln(e \inorm{\psi}))^k.\]
                  \[\prod_{t=1}^\ell k_t^{4s_t +2}2^{s_t} \leq k^{5k +2}.\]
                  Including the $k^k$ choices  for $s_1, \ldots, s_t$,
                  the output list size is bounded by
                  \[ \left(\frac{e^2}{\eps^2\gamma}\right)^k (\ln(e
                  \inorm{\psi}))^k    k^{7k    +   2}    =    k^{O(k)}
                  \left(\frac{\ln(e
                    \inorm{\psi})}{\eps^2\gamma}\right)^k. \qedhere \]
        \end{proof}

Together Lemma \ref{lem:complete} and Lemma \ref{lem:comb_bound}
complete the proof of Theorem \ref{thm:comb_pos}.

\subsection{Negative Skew}
\label{subsec:neg}

We now present an algorithm \fsn for the negative
skewed case. The algorithm takes as input $\gamma \in (0,1]$ and $[\eps \in
(0,1]$ and the goal is to list all $(-\gamma, \eps)$-minimal
negatively skewed subcubes.

	\begin{algorithm}[ht]
		\caption{$\fsn(R_{t}, y_{t})$}
		\label{alg:find_skew_neg}
		\begin{algorithmic}[1]
		  \State Let $D_{t} = (S_{t}, z_{t})$.  Let $\psi_t =
                  \restr{\psi}{D_{t}}$. Let $k_{t} = k - |S_{t}|$.
		\If{$\skew_\psi(D_{t})  < -\gamma(1 - \eps)$}\\
			\Return $D_t$ \label{line:return_neg}
		\EndIf
                \State Pick $S_t$ such that $|S_t| \leq k_{t}$ and \label{line:guess_s_neg}
                \begin{align}
                  \label{eq:large_neg}
                    |\widehat{\psi_t}(S_t)| \geq
                        \frac{\eps\gamma}{ k_{t} \cdot \binom{k_{t}}{|S_t|}}.
                  \end{align}
                  \State Pick $z_t \in \pmo^{S_t}$. \label{line:guess_z_neg}
                  \State \Return $\fsn(R_t \cup S_t, y_t \circ z_t)$.
		\end{algorithmic}
	\end{algorithm}

        The main differences from \fsr are that once the skew is less
        than $-\gamma(1 - \eps)$, we can return. Thus we can combine
        the Return statement (Line \ref{line:return}, and the
        the check in Line \ref{line:fail}. Also, the bound on the
        coefficient size in Equation\eqref{eq:large_neg} now reflects
        the bound for the negative skew case in Lemma \ref{lem:cond_skew}.

We have the following claim about the
correctness of \fsn.

        \begin{lemma}
          \label{lem:neg_complete}
          For every $(\gamma,\eps)$-minimal skewed subcube $C$ with $\codim(C) \leq  k$
          there are choices of subsets $S_t$ and $z_t$ (in
          \cref{line:guess_s,line:guess_z}) so that $C$ is returned by
          $\fsn$.
        \end{lemma}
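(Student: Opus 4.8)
The plan is to follow the proof of \cref{lem:complete} almost verbatim, replacing the positive-skew quantities by their negative-skew counterparts from \cref{lem:cond_skew,lem:skew_fourier} and simply dropping the FAIL-branch reasoning (\fsn has no such branch). Fix a $(-\gamma,\eps)$-minimal skewed subcube $C = (K^*, z^*)$ with $\codim(C) = |K^*| \le k$, and let $D_t = (R_t, y_t)$, $k_t = k - |R_t|$, denote the subcube at the start of the $t$-th recursive call. I will show by induction on $t$ that there is a sequence of guesses in \cref{line:guess_s_neg,line:guess_z_neg} under which every $D_t$ is a parent of $C$ (that is, $R_t \subseteq K^*$ and $y_t = \restr{z^*}{R_t}$), and moreover $\codim(D_t) = |R_t|$ strictly increases so long as $D_t \neq C$. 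Since $\codim(D_t) \le \codim(C) \le k$, after at most $k$ steps the recursion reaches $D_\ell = C$; there $\skew_\psi(C) \le -\gamma < -(1-\eps)\gamma$ because $\eps > 0$, so the test in \cref{line:return_neg} fires and $C$ is returned. The base case $t = 0$ is immediate, as $D_0 = \pmo^n \supseteq C$.

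For the inductive step, suppose $D_t \supsetneq C$ is a proper parent. \fsn does not return prematurely: by \cref{def:min_skew}, every proper parent of $C$ satisfies $\skew_\psi(D_t) \ge -(1-\eps)\gamma$, so the condition $\skew_\psi(D_t) < -(1-\eps)\gamma$ fails; this also yields $\ip{\psi, \mu_{D_t}} = 1 + \skew_\psi(D_t) \ge 1 - (1-\eps)\gamma > 0$ (using $(1-\eps)\gamma \le 1-\eps < 1$), so $\psi_t \defeq \restr{\psi}{D_t}$ is a well-defined measure. Set $K_t = K^* \setminus R_t$, which is non-empty since $D_t \supsetneq C$, and note $|K_t| = |K^*| - |R_t| \le k - |R_t| = k_t$. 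By the negative-skew case of \cref{lem:cond_skew}, $\skew_{\psi_t}(C) \le -\eps\gamma$, and by \cref{lem:skew_fourier} applied to $\psi_t$ and the subcube of $D_t$ that fixes the coordinates of $K_t$ to $\restr{z^*}{K_t}$,
\[ \skew_{\psi_t}(C) \;=\; \sum_{\emptyset \neq S \subseteq K_t} \widehat{\psi_t}(S)\chi_S(z^*) \;=\; \sum_{k'=1}^{|K_t|}\ \sum_{\substack{S \subseteq K_t \\ |S| = k'}} \widehat{\psi_t}(S)\chi_S(z^*). \]
Since $|\skew_{\psi_t}(C)| \ge \eps\gamma$ and the outer sum has at most $|K_t| \le k_t$ terms, some level $k' \le k_t$ contributes at least $\eps\gamma/k_t$ in absolute value; and that level contains $\binom{|K_t|}{k'} \le \binom{k_t}{k'}$ sets, so at least one non-empty $S_t \subseteq K_t$ has $|S_t| = k' \le k_t$ and $|\widehat{\psi_t}(S_t)| \ge \eps\gamma/(k_t \binom{k_t}{|S_t|})$, which is exactly requirement \eqref{eq:large_neg}. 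Guessing this $S_t$ in \cref{line:guess_s_neg} and $z_t = \restr{z^*}{S_t}$ in \cref{line:guess_z_neg} produces $D_{t+1} = (R_t \cup S_t, y_t \circ z_t)$: this agrees with $C$ on all its fixed coordinates (they lie in $K^*$ and carry the values of $z^*$), and $\codim(D_{t+1}) = |R_t| + |S_t| > |R_t|$, so the invariant is maintained and the codimension has grown. This completes the induction and hence the lemma.

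I do not anticipate any genuine obstacle: the argument is a direct transcription of the positive-skew case. The one spot deserving care is the bookkeeping that keeps each guessed $S_t$ both non-empty and inside $K_t$ — non-emptiness (which is what makes the codimension strictly increase) comes precisely from the level-by-level split, since $\emptyset$ is excluded from the Fourier expansion of the skew, while containment in $K_t$ (which keeps $D_{t+1}$ a parent of $C$) is enforced by restricting attention to $S \subseteq K_t$. A secondary subtlety is that $\restr{\psi}{D_t}$ must be legitimate; as shown above this follows from the minimality bound on the parent's skew (which stays strictly above $-1$), rather than from $C$ itself, a distinction that only matters in the borderline case $\gamma = 1$ where $C$ could have skew exactly $-1$.
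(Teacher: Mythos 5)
Your proposal is correct and is essentially the paper's own argument: the paper proves this lemma by noting that for any proper parent $D_t$ of $C$ the minimality condition keeps $\skew_\psi(D_t) \geq -(1-\eps)\gamma$ so the return test does not fire prematurely, and then declares the rest identical to the positive-skew proof of \cref{lem:complete}, which is exactly the transcription you carry out (using the $-\eps\gamma$ bound from \cref{lem:cond_skew} and the level-by-level pigeonhole to match \eqref{eq:large_neg}). Your extra checks — that the restriction $\restr{\psi}{D_t}$ is well-defined and that the test does fire once $D_t = C$ — are sound and merely make explicit what the paper leaves implicit.
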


        We prove this by showing that for every $t$, if
        $D_{t} \supsetneq C$ is parent of $C$, and is not equal to $C$
        there is a choice of $S_t, z_t$ that gives a parent $D_{t+1}$ of $C$ with a larger
        codimension. Indeed, we know that for any parent of $C$,
        inner product $\ip{\psi, \mu_{D_t}}$ is large enough to pass
        the test in Line \ref{line:return_neg}. The rest of the proof
        is identical to that of Lemma \ref{lem:complete} for
        the case of positive skew, so we do not repeat it.

		The crux of the proof is to bound the number of choices for $S_t$ satisfying Equation
		\eqref{eq:large_neg}.
        \begin{lemma}\label{lem:guess_s_neg}
			The number of choices for $S_t$ satisfying Equation
			\eqref{eq:large_neg} is bounded by
			\[ \frac{e^2}{\eps^2\gamma^2} \left(\ln\left(\frac{e
				\inorm{\psi}}{\eps \gamma}\right)\right)^{s_t}k^{2s_t +2}.\]
		\end{lemma}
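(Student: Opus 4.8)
The plan is to mirror the proof of Lemma \ref{lem:guess_s} for the positive-skew case, adapting it to the negative threshold. The key difference is that \fsn lacks an explicit FAIL test (Line \ref{line:fail} of \fsr), so I cannot directly conclude $\ip{\psi, \mu_{D_t}} \geq (1+\gamma)2^{-k_t}$. Instead, I would first establish a lower bound on $\ip{\psi, \mu_{D_t}}$ using the fact that $D_t$ is a parent of a negatively skewed subcube: since $\skew_\psi(D_t) \geq -(1-\eps)\gamma$ (because \fsn has not yet returned at Line \ref{line:return_neg}), we get $\ip{\psi, \mu_{D_t}} = 1 + \skew_\psi(D_t) \geq 1 - (1-\eps)\gamma \geq \eps\gamma$. (When we are at a node that does satisfy the return condition, it is output directly and $S_t$ is never chosen, so this bound suffices for every node at which we reach Line \ref{line:guess_s_neg}.)

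Next I would bound $\inorm{\psi_t}$. By Fact \ref{fact:inorm}, $\inorm{\psi_t} = \inorm{\restr{\psi}{D_t}} = \inorm{\psi}/\ip{\psi, \mu_{D_t}} \leq \inorm{\psi}/(\eps\gamma)$. Applying Theorem \ref{thm:level_k} to $\psi_t$ gives
\[ W^{\leq s_t}(\psi_t) \leq e^2\left(\ln\left(\frac{e\inorm{\psi}}{\eps\gamma}\right)\right)^{s_t}. \]
Since each $S_t$ with $|S_t| = s_t$ satisfying \eqref{eq:large_neg} contributes at least $\left(\eps\gamma/(k_t\binom{k_t}{s_t})\right)^2$ to $W^{\leq s_t}(\psi_t)$, the number of such $S_t$ is at most
\[ W^{\leq s_t}(\psi_t)\left(\frac{k_t\binom{k_t}{s_t}}{\eps\gamma}\right)^2 \leq \frac{e^2}{\eps^2\gamma^2}\left(\ln\left(\frac{e\inorm{\psi}}{\eps\gamma}\right)\right)^{s_t}k_t^{4s_t+2}. \]
Using $k_t \leq k$ and absorbing the difference $k_t^{4s_t+2} \leq k^{4s_t+2}$ — and, if the stated exponent $k^{2s_t+2}$ is to be matched exactly rather than $k^{4s_t+2}$, noting that $\binom{k_t}{s_t} \leq k_t^{s_t}$ is the loose step and a tighter accounting via $\binom{k_t}{s_t} \leq (ek_t/s_t)^{s_t}$ together with the partition constraint recovers the claimed power — yields the bound in the lemma statement.

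The main obstacle is the absence of the FAIL guard in \fsn: I must be careful that the lower bound $\ip{\psi, \mu_{D_t}} \geq \eps\gamma$ genuinely holds at every node where Line \ref{line:guess_s_neg} is executed, which relies on the invariant (used implicitly in Lemma \ref{lem:neg_complete}) that the recursion only descends through parents of some target minimal skewed subcube, or more simply on the observation that if the return condition at Line \ref{line:return_neg} fails then $\skew_\psi(D_t) \geq -(1-\eps)\gamma$ by negation of the test — the latter is unconditional and is the cleaner route. Everything else is a routine substitution into Theorem \ref{thm:level_k}, exactly paralleling Lemma \ref{lem:guess_s}.
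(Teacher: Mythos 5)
Your proposal is correct and follows the paper's proof essentially verbatim: the lower bound $\ip{\psi,\mu_{D_t}} \geq 1-(1-\eps)\gamma \geq \eps\gamma$ from the negation of the return test, the resulting bound $\inorm{\psi_t}\leq \inorm{\psi}/(\eps\gamma)$ via Fact \ref{fact:inorm}, and the Parseval count against $W^{\leq s_t}(\psi_t)$ via Theorem \ref{thm:level_k} are exactly the paper's steps. Your worry about the exponent is unnecessary: $\bigl(k_t\binom{k_t}{s_t}\bigr)^2 \leq k_t^2\cdot k_t^{2s_t} = k_t^{2s_t+2} \leq k^{2s_t+2}$ directly, so the stated power is recovered without any tighter accounting.
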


		\begin{proof}
			To pass Line
			\ref{line:return_neg}, it must hold that
			$\skew_\psi(D_t) \geq -\gamma(1 - \eps)$, hence
			\[ \ip{\psi,
				\mu_{D_t}} \geq 1 - \gamma(1 - \eps) \geq \eps\gamma \]
			since $\gamma \leq 1$.
			So we bound  $\inorm{\psi_t}$ as
			\[\inorm{\psi_t} = \inorm{\restr{\psi}{D_t}} = \frac{\inorm{\psi}}{\ip{\psi,
					\mu_{D_t}}} \leq \frac{\inorm{\psi}}{\gamma\eps}.\]
			Using \cref{thm:level_k} gives
			\[ W^{\leq s_t}(\psi_t) \leq e^2\left(\ln\left(\frac{e
				\inorm{\psi}}{\eps \gamma}\right)\right)^{s_t}. \]
			Hence the number of choices for $S_t$ satisfying Equation
			\eqref{eq:large_neg} is bounded by
			\begin{align*}
			\frac{W^{\leq s_t} \cdot  \left(k_{t} \cdot
				\binom{k_{t}}{|S_t|}\right)^2}{\eps^2\gamma^2}
			& \leq  e^2\left(\ln\left(\frac{e \inorm{\psi}}{\eps
				\gamma}\right)\right)^{s_t}\frac{\left(k_{t} \cdot
				\binom{k_{t}}{|S_t|}\right)^2}{\eps^2\gamma^2} \\
			& \leq \frac{e^2}{\eps^2\gamma^2} \left(\ln\left(\frac{e
				\inorm{\psi}}{\eps \gamma}\right)\right)^{s_t}k^{2s_t +2}. \qedhere
			\end{align*}
			\end{proof}

		We can now conclude as before.
          \begin{lemma}
		\label{lem:comb_bound_neg}
	The total number of subcubes of codimension $k$ output by
        $\fsn$ is bounded by
\[k^{O(k)}\left(\frac{1}{\eps^2\gamma^2}\ln\left(\frac{e
                \inorm{\psi}}{\eps \gamma}\right)\right)^k.\]
          \end{lemma}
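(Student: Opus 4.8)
The plan is to mirror the argument for the positive-skew case in Lemma \ref{lem:comb_bound}, but using the tighter coefficient-count bound of Lemma \ref{lem:guess_s_neg} in place of Lemma \ref{lem:guess_s}. As in the positive case, every output of $\fsn$ corresponds to a nested chain $D_0 \supsetneq D_1 \supsetneq \cdots \supsetneq D_\ell$ where at step $t$ the algorithm commits to a nonempty set $S_t$ of size $s_t \le k_t$ and a sign pattern $z_t \in \pmo^{S_t}$. Since the $S_t$'s are disjoint and the final subcube has codimension at most $k$, we have $\sum_{t=1}^\ell s_t \le k$; rounding up we may treat $\{s_t\}$ as a composition of $k$, of which there are at most $k^k$. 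First I would fix one such composition and count the outputs consistent with it.

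For a fixed composition, the number of choices at step $t$ is (number of valid $S_t$) $\times$ (number of valid $z_t$). By Lemma \ref{lem:guess_s_neg} the first factor is at most $\frac{e^2}{\eps^2\gamma^2}\bigl(\ln(e\inorm{\psi}/(\eps\gamma))\bigr)^{s_t}k^{2s_t+2}$, and since $z_t$ ranges over $\pmo^{S_t}$ the second factor is $2^{s_t}$. Taking the product over $t = 1,\dots,\ell$ gives
\[
\prod_{t=1}^\ell \frac{e^2}{\eps^2\gamma^2}\left(\ln\!\left(\frac{e\inorm{\psi}}{\eps\gamma}\right)\right)^{s_t} k^{2s_t+2}\, 2^{s_t}.
\]
Now I would regroup the factors. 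The $\frac{e^2}{\eps^2\gamma^2}$ term appears $\ell \le k$ times, contributing at most $(e^2/(\eps^2\gamma^2))^k$. Because $\sum_t s_t \le k$, the product $\prod_t \bigl(\ln(e\inorm{\psi}/(\eps\gamma))\bigr)^{s_t} = \bigl(\ln(e\inorm{\psi}/(\eps\gamma))\bigr)^{\sum_t s_t} \le \bigl(\ln(e\inorm{\psi}/(\eps\gamma))\bigr)^k$ (using that the log is at least $1$). The remaining $\prod_t k^{2s_t+2} 2^{s_t}$ is at most $k^{2k + 2\ell}\,2^k \le k^{4k+2}2^k = k^{O(k)}$, again using $\ell \le k$. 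Multiplying these three pieces together with the $k^k$ factor for the number of compositions yields the claimed bound $k^{O(k)}\bigl(\tfrac{1}{\eps^2\gamma^2}\ln(e\inorm{\psi}/(\eps\gamma))\bigr)^k$.

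This argument is essentially bookkeeping, so I do not expect any genuine obstacle; the one point requiring a little care is that the degree $s_t$ of the coefficient found at step $t$ must be strictly positive (so that the chain actually shrinks and $\ell \le k$), which is guaranteed because $\skew_{\psi_t}(C) \le -\eps\gamma \ne 0$ forces a nonempty heavy set — exactly as in the positive case handled by Lemma \ref{lem:neg_complete}. Combining Lemma \ref{lem:neg_complete} (completeness) with this list-size bound completes the proof of Theorem \ref{thm:comb_neg}.
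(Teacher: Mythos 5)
Your proposal is correct and follows essentially the same route as the paper: fix the composition $\{s_t\}$ (at most $k^k$ choices), bound the per-step branching by Lemma \ref{lem:guess_s_neg} times the $2^{s_t}$ choices of $z_t$, and multiply, regrouping the factors using $\sum_t s_t \le k$ and $\ell \le k$. The paper's own proof is exactly this one-line product bound, so there is nothing further to add.
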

          \begin{proof}
          	Using \cref{lem:guess_s_neg} and the fact that there are are $2^{s_t}$ choices for $z_t$ and $k^k$ choices of the partition $s_1, \ldots, s_t$, the overall
           list size is bounded by
           \[ k^k\prod_{t = 1}^\ell\frac{e^2}{\eps^2\gamma^2} \left(\ln\left(\frac{e
                \inorm{\psi}}{\eps \gamma}\right)\right)^{s_t}k^{2s_t
             +2}2^{s_t} \leq k^{O(k)}\left(\frac{1}{\eps^2\gamma^2}\ln\left(\frac{e
                \inorm{\psi}}{\eps \gamma}\right)\right)^k. \qedhere\]
          \end{proof}

        Combining Lemmas \ref{lem:neg_complete} and
        \ref{lem:comb_bound_neg} completes the proof of Theorem
        \ref{thm:comb_neg}.

        \subsection{Tightness of our bounds}
        We show that the dependence on $\inorm{\psi}$ in Theorem
        \ref{thm:comb_pos} is nearly optimal. To simplify our
        constructions, we will construct distributions on $n$ variables where
        $n = n(\inorm{\psi}, k)$. But one can then use Lemma
        \ref{lem:extension} to extend the construction to all larger
        values of $n$.

        \begin{theorem}
          There exists a distribution $\mu_C$ on $\pmo^n$ which has
  $\Omega_k((\ln(\inorm{\mu_C}))^k)$ many $(2^k,1/2)$-minimal
            skewed subcubes of codimension $k$.
\end{theorem}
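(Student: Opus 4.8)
The plan is to take $\mu_C$ to be the uniform measure over a subcube $C$ of codimension $d$, where $d \geq k$ is a free parameter, and to exhibit the $\binom{d}{k}$ codimension-$k$ parents of $C$ as minimal skewed subcubes. Recall from \cref{sec:defs} that $\mu_C(x) = 2^d$ for $x \in C$ (and $0$ otherwise), so $\inorm{\mu_C} = 2^d$, i.e.\ $d = \log_2 \inorm{\mu_C}$. We may work directly inside $\pmo^n$ whenever $n \geq d$; if instead one wants to fix $\inorm{\mu_C}$ first and then let $n$ grow, build the example on $\pmo^d$ and lift it via \cref{lem:extension}, which preserves both the max-norm and the family of minimal skewed subcubes.

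Write $C = (K_0, y_0)$ with $|K_0| = d$. For each $k$-subset $K \subseteq K_0$ set $D_K = (K, \restr{y_0}{K})$, a codimension-$k$ subcube with $C \subseteq D_K$; distinct $K$ give distinct $D_K$, so this is a family of $\binom{d}{k}$ subcubes. I would then check that each $D_K$ is a $(2^k - 1,\, 1/2)$-minimal skewed subcube (the value $2^k-1$ being the largest skew possible at codimension $k$, written $2^k$ in the statement). Since $C \subseteq D_K$ we have $\Pr_{\x \sim \mu_C}[\x \in D_K] = 1$, so \cref{lem:skew_by_prob} gives $\skew_{\mu_C}(D_K) = 2^k - 1$. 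For the minimality condition of \cref{def:min_skew}, any parent $D' \supsetneq D_K$ has codimension $j \leq k-1$ and satisfies $C \subseteq D_K \subseteq D'$, hence $\Pr_{\x \sim \mu_C}[\x \in D'] = 1$ and $\skew_{\mu_C}(D') = 2^j - 1 \leq 2^{k-1} - 1 < \frac{1}{2}(2^k - 1) = (1-\eps)\gamma$ with $\eps = \frac12$, $\gamma = 2^k - 1$. So all $\binom{d}{k}$ subcubes $D_K$ qualify.

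Counting them gives at least $\binom{d}{k} \geq (d/k)^k$ minimal skewed subcubes of codimension $k$; treating $k$ as a constant and substituting $d = \log_2 \inorm{\mu_C}$, this is $\Omega_k(d^k) = \Omega_k\big((\ln \inorm{\mu_C})^k\big)$, which is the claimed bound.

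There is no real technical obstacle here; the one step needing care is the minimality threshold, and it is tight — a codimension-$(k-1)$ parent of $D_K$ has skew exactly $2^{k-1}-1$, which lies just below $(1-\frac12)(2^k-1) = 2^{k-1}-\frac12$, so $\eps = \frac12$ is essentially forced by this construction. Conceptually the message is that already the very mild non-smoothness of a uniform measure on a subcube (with $\inorm{\mu_C}$ as small as $2^k$, or as large as $2^n$) produces $\Omega_k((\ln \inorm{\mu_C})^k)$ minimal skewed subcubes, matching the $(\ln \inorm{\psi})^k$ factor in \cref{thm:comb_pos}; getting the right dependence on $\eps$ and $\gamma$, or the negative-skew version, is what would call for the more intricate Tribes- and BCH-based constructions.
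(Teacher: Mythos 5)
Your proposal is correct and matches the paper's own argument: the paper takes $C$ to be the single point with all bits fixed to $1$ (your $d=n$ case), exhibits the $\binom{n}{k}$ codimension-$k$ subcubes containing it as $(2^k-1,1/2)$-minimal skewed subcubes via the same skew computation and the same $2^{k-1}-1 \le \frac12(2^k-1)$ threshold check, and likewise invokes Lemma~\ref{lem:extension} to decouple $\inorm{\mu_C}$ from $n$. Your slight generalization to codimension $d$ and your explicit tightness remark are cosmetic refinements, not a different route.
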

\begin{proof}
Let $C$ be the subcube where all the
bits are fixed to $1$, and let $\mu_C$ be the uniform distribution
over it. It follows that $\inorm{\mu_C} = 2^n$ hence
$\ln(\inorm{\mu_C}) = n$. We claim that all $\binom{n}{k}$ subcubes
where a $k$ out of the first $t$ bits are fixed to $1$ are $(2^k -1,
1/2)$-minimal skewed subcubes. Fix one such cube $D$. We have
\[ \skew_{\mu_C}(D) = 2^k\Pr_{\x \sim \mu_C}[\x \in D] - 1 = 2^k -1.\]
Since the maximum skew of any subcube of codimension $k-1$ is at most
$2^{k-1} -1$, $D$ satisfies the definition of  $(\gamma,\eps)$-minimal
skew for $\gamma = 2^k -1$ and $\eps$ such that $\gamma(1 - \eps) \geq 2^{k-1} - 1$.
In particular, we can take $\eps = 1/2$.

Thus the number of $(2^k -1, 1/2)$-minimal skewed subcubes is $\binom{n}{k} = \Omega_k(n^k)$.
The only dependence on $n$ in Theorem \ref{thm:comb_pos} comes from
the $\inorm{\mu_C}$ since $\eps$ is a constant and $\gamma
\leq 2^k$. Thus the number of cubes is $\Omega((\ln(\inorm{\mu_C}))^k)$.
\end{proof}

For Theorem \ref{thm:comb_neg} dealing with negative skew, we show a
similar bound, though with a smaller value of $\eps = 1/k$. The
distribution we use is derived from the Tribes function.

 \begin{theorem}
          There exists a distribution $\tau$ on $\pmo^n$ which has
  $\Omega_k((\ln(\inorm{\tau}))^k)$ many $(-1, 1/k)$-minimal
            skewed subcubes of codimension $k$.
\end{theorem}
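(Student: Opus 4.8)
The plan is to realize $\tau$ as the uniform distribution over $A := \Tr^{-1}(1)$, where $\Tr \colon \pmo^n \to \{0,1\}$ is the Tribes read-once DNF with \emph{exactly} $k$ terms $B_1,\dots,B_k$, each a conjunction of $w$ distinct variables, so $n = kw$ and $w$ is a free growth parameter. Equivalently $A = E_1 \cup \dots \cup E_k$ is a union of $k$ codimension-$w$ subcubes, with $E_i = \{x : x|_{B_i} = \mathbf 1\}$. Since $\Pr_{\x\sim\mu}[\x \in A] = 1 - (1-2^{-w})^k$, we get $\inorm{\tau} = (1-(1-2^{-w})^k)^{-1}$, and the elementary bounds $k2^{-w-1} \le 1-(1-2^{-w})^k \le k2^{-w}$ (valid once $2^w \ge 2k$) give $\ln \inorm{\tau} = \Theta_k(w)$, hence $w = \Omega_k(\ln \inorm{\tau})$.

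For the skewed subcubes themselves I would take, for each tuple $(c_1,\dots,c_k) \in B_1 \times \cdots \times B_k$, the codimension-$k$ subcube $D_{c_1,\dots,c_k}$ obtained by fixing $x_{c_i} = 0$ for every $i \in [k]$; these are $w^k$ pairwise distinct subcubes. Because $D_{c_1,\dots,c_k}$ forces a $0$ into every term $B_i$, it meets no $E_i$, so $D_{c_1,\dots,c_k} \cap A = \emptyset$ and $\skew_\tau(D_{c_1,\dots,c_k}) = 2^k \Pr_{\x\sim\tau}[\x \in D_{c_1,\dots,c_k}] - 1 = -1$; in particular it clears the $\skew \le -\gamma$ bar with $\gamma = 1$.

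The crux is the minimality condition \eqref{eq:min_skew_neg}. A proper parent $D'$ of $D_{c_1,\dots,c_k}$ fixes $x_{c_i} = 0$ only for $i$ in some $T \subsetneq [k]$, and nothing else; write $j = |T| \le k-1$ and $\beta = 1-2^{-w}$. Conditioning on the $j$ pinned zeros, the terms $B_i$ with $i \in T$ can no longer be satisfied, while the remaining $k-j$ terms are satisfied independently with probability $2^{-w}$ each, so $\Pr_{\x\sim\mu}[\x \in A \mid \x \in D'] = 1 - \beta^{k-j}$, whence
\[
  \skew_\tau(D') \;=\; 2^{j}\,\Pr_{\x\sim\tau}[\x \in D'] - 1 \;=\; \frac{1-\beta^{k-j}}{1-\beta^{k}} - 1.
\]
Now $\tfrac{1-\beta^{k-j}}{1-\beta^{k}} = \bigl(\sum_{i=0}^{k-j-1}\beta^i\bigr)\big/\bigl(\sum_{i=0}^{k-1}\beta^i\bigr) \ge \tfrac{k-j}{k}$, because for $\beta \le 1$ the average of the first $k-j$ (larger) powers of $\beta$ is at least the average of the first $k$; hence $\skew_\tau(D') \ge -\tfrac{j}{k} \ge -\bigl(1-\tfrac1k\bigr)$, which is exactly \eqref{eq:min_skew_neg} with $\gamma = 1$, $\eps = 1/k$ — and the codimension-$(k-1)$ parent makes this tight, which is why $\eps$ cannot be raised to a constant. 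So each $D_{c_1,\dots,c_k}$ is a $(-1,1/k)$-minimal skewed subcube of codimension $k$, yielding at least $w^k = \Omega_k\bigl((\ln \inorm{\tau})^k\bigr)$ of them; Lemma~\ref{lem:extension} then ports the construction from $\pmo^{kw}$ to all larger dimensions without changing $\inorm{\tau}$ or the set of minimal skewed subcubes.

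The step I expect to be most delicate is orienting the averaging inequality $\tfrac{1-\beta^{k-j}}{1-\beta^{k}} \ge \tfrac{k-j}{k}$ correctly and noting it holds for every $\beta \in (0,1)$, so that no ``$w$ large'' hypothesis is needed for minimality itself. The only real conceptual leap is the choice to use $\Tr^{-1}(1)$ with precisely $k$ terms: using $\Tr^{-1}(0)$, or any other number of terms, yields only $\Theta(\ln\inorm{\tau})$ minimal subcubes, whereas this choice manufactures a full $w^{\,k}$-grid of codimension-$k$ ``holes'' all of whose parents are only mildly skewed.
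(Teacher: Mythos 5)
Your proof is correct, and it follows the paper's blueprint — the Tribes DNF with $k$ terms of width $w$, and the $w^k$ minimal $0$-certificates as the skewed subcubes — but with a genuinely different choice of distribution on the satisfying assignments. The paper does not take $\tau$ uniform over $\Tr^{-1}(1)$; it defines $\tau$ by picking a term $i^*\in[k]$ uniformly, setting that term's variables to $1$, and setting the rest uniformly (a mixture of the uniform measures on the subcubes $E_i$, which weights points lying in $m$ terms proportionally to $m$). That choice makes the parent computation exact and immediate: the events ``$i^*\notin T$'' and ``the pinned coordinates happen to be $0$'' are independent, giving $\skew_\tau(D')=-j/k$ on the nose, and $\inorm{\tau}=2^w$ exactly. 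Your uniform-over-$A$ version instead requires computing $\Pr_\mu[A\mid D']=1-\beta^{k-j}$ and then the averaging inequality $\bigl(\sum_{i=0}^{k-j-1}\beta^i\bigr)/\bigl(\sum_{i=0}^{k-1}\beta^i\bigr)\ge (k-j)/k$ for $\beta\in(0,1)$, which you orient correctly (the first $k-j$ powers are the largest, so their average dominates the average of all $k$); you also need the mild estimate $\ln\inorm{\tau}=\Theta_k(w)$ rather than an exact value. Both routes land on the same bound $\skew(D')\ge -(1-1/k)$ for every proper parent, so minimality at $(\gamma,\eps)=(1,1/k)$ holds either way; the paper's distribution buys exactness and a one-line parent analysis, while yours is arguably the more canonical object at the cost of the geometric-sum lemma. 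One cosmetic caveat: your parenthetical that the codimension-$(k-1)$ parent makes the bound ``tight'' is only asymptotically true for your distribution (strictly, $\skew(D')>-(1-1/k)$ for finite $w$), whereas it is exactly tight for the paper's; this does not affect the proof since the minimality condition is a non-strict inequality.
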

\begin{proof}
  Let $n  = tk$. We label the coordinates as $\{x_{i,j}\}_{i
    \in [k], j\in [t]}$.
  Consider the DNF formula
  \[ \Tr(x) = \bigvee_{i=1}^{k}\bigwedge_{j=1}^t x_{i,j}\]
We now define a distribution $\tau$ where we pick a $i^* \in [k]$
at random, set $x_{i^*, j} = 1$ for all $j \in [t]$ and set all the
other variables randomly. Clearly the distribution $\tau$ is supported
on the satisfying assignments of $\Tr(x)$. It is also easy to see that
\[ \inorm{\tau} = \tau(1^{tk}) = 2^{tk}\sum_{i=1}^k\frac{1}{k}2^{-(k
  -1)t} = 2^t. \]

Now consider the set of minimal $0$ certificates of $\Tr$, which are
subcubes where we pick a single variables from each term and set it to
$0$. There are $t^k$ such subcubes, fix one such subcube $C$. Clearly
$\Pr_{\x \sim \tau}[\x \in C] = 0$, hence $\skew_\tau(C) = -1$. Now
consider any parent subcube $D$ of $C$. Assume it has codimension
$\ell < k$, and let $L \subset [k]$ denote the set of terms that it
sets to $0$. For $\x \sim \tau$ to lie in $L$, two events need to
happen:
\begin{itemize}
\item $i^* \not\in L$, which happens with probability $1 - \ell/k$.
\item The variables in $L$ which are set to $0$ in $D$ are also set to
  $0$ by the random assignment, which happens with probability
  $2^{-\ell}$.
\end{itemize}
As these two events are independent, we have $\Pr_{\x \sim \tau}[\x \in
  D] = 2^{-\ell}(1 - \ell/k)$ hence
\[ \skew_\tau(D) = 2^\ell\Pr_{\x \sim \tau}[\x \in
  D] -1 = - \frac{\ell}{k}. \]
Thus the maximum skew of any parent $D$ is $-1 + 1/k$. Hence $C$ is
$(-1, 1/k)$-minimally skewed.

As before we note that $\gamma =1$ and $\eps = 1/k$, hence the only
dependence on $t$ comes from $\log(\inorm{\tau}) = t$, which gives the
claimed bound.
\end{proof}

Finally, we present a distribution that has a large number of $(-1,1)$
and $(1,1)$ minimally skewed subcubes. Recall that $\eps =1$ means that
every parent of the cube has skew $0$.

The construction is based on (dual) BCH codes.
We think of linear codes as subsets of $\F_2^n$ where $\F_2 = \{0,1\}$
which we can identify with $\pmo^n$ via the usual mapping $x \rightarrow (-1)^x$.
For $x \in \F_2^n$ let the weight of $x$ denoted $\wt(x)$ be the number of $1$s in $x$. Let
$\supp(x) \subseteq [n]$ denote the set of cordinates where $x$ is
non-zero. We will use the following fact about BCH codes communicated
to us by Sergey Yekhanin \cite{Yekhanin}.

\begin{restatable}{lemma}{lemsergey}
\label{lem:sergey}
  \cite{Yekhanin}
Let $d \geq 2$ be even and let $n + 1 = 2^l \geq d$. There exists a
$\F_2$-linear code $\C_{BCH} \subseteq  \zo^n$ with minimum distance
$d$, which contains $\Omega(n^{d/2 + 1})$ minimum weight codewords.
\end{restatable}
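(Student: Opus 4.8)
The plan is to realize $\C_{BCH}$ as (the even-weight subcode of) a primitive binary BCH code, reduce the count of its weight-$d$ codewords to a point count of an explicit affine variety over $\F_q$ where $q := 2^l = n+1$, and carry out that count via standard Weil-type estimates. Throughout, fix a primitive element $\alpha \in \F_q$ and index the $n$ coordinates by $\alpha^1, \dots, \alpha^n$, which enumerate $\F_q^\ast$. Take $\C_{BCH}$ to be the $\F_2$-linear cyclic code of length $n$ whose zero set is the union of the $2$-cyclotomic cosets of $\{0,1,\dots,d-2\}$, i.e.
\[
  \C_{BCH} = \bigl\{\, c \in \F_2^n \ :\ \textstyle\sum_{i=1}^{n} c_i\,\alpha^{ir} = 0 \ \text{ for all } 0 \le r \le d-2 \,\bigr\};
\]
equivalently, the even-weight subcode of the narrow-sense primitive BCH code of designed distance $d-1$. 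Since its zero set contains the $d-1$ cyclically consecutive powers $\alpha^0,\alpha^1,\dots,\alpha^{d-2}$, the BCH bound gives $d_{\min}(\C_{BCH}) \ge d$; since $\alpha^0 = 1$ is a zero, every codeword has even weight; and for $l$ large in terms of $d$ the zero set has size $O_d(l)$, so $\C_{BCH}$ is nontrivial of dimension $n - O_d(l)$.

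Next I would translate minimum-weight codewords into solutions of symmetric-function equations. Under the chosen indexing, a weight-$w$ codeword is precisely the indicator vector of a $w$-element set $S \subseteq \F_q^\ast$ with $\sum_{x \in S} x^r = 0$ for $0 \le r \le d-2$. In characteristic $2$ one has $\sum_{x \in S} x^{2j} = \bigl(\sum_{x \in S} x^j\bigr)^2$, so the even-index conditions follow from the odd-index ones, and Newton's identities (in characteristic $2$) show that a $d$-element set $S$ satisfies $\sum_{x \in S} x^r = 0$ for all $1 \le r \le d-2$ if and only if $e_1(S) = e_3(S) = \dots = e_{d-3}(S) = 0$, where $e_j$ denotes the $j$-th elementary symmetric polynomial. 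Since $|S|=d$ is even the $r=0$ condition is automatic, so the number of weight-$d$ codewords of $\C_{BCH}$ equals the number of $d$-element subsets $S \subseteq \F_q^\ast$ satisfying the $d/2-1$ equations $e_1(S) = e_3(S) = \dots = e_{d-3}(S) = 0$.

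It remains to count these subsets. Let $W \subseteq \F_q^d$ be the affine variety cut out by $e_1 = e_3 = \dots = e_{d-3} = 0$. The crucial claim is that $W$ is a geometrically irreducible complete intersection of dimension $d/2 + 1$; granting it, Lang--Weil gives $|W(\F_q)| = q^{\,d/2+1} + O_d\!\bigl(q^{\,d/2+1/2}\bigr)$, at most $O_d(q^{\,d/2})$ of these tuples have a repeated or zero coordinate, and dividing by $d!$ produces at least $c_d\,q^{\,d/2+1} = \Omega(n^{\,d/2+1})$ honest $d$-subsets once $n$ exceeds a constant depending on $d$. Since this quantity is positive it also forces $d_{\min}(\C_{BCH}) = d$ exactly. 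A useful way to see the geometry: writing $f_S(X) = \prod_{x \in S}(X-x) = X^d + e_1 X^{d-1} + \dots + e_d$, the constraints $e_1 = e_3 = \dots = e_{d-3} = 0$ kill every odd-degree coefficient of $f_S$ except the coefficient of $X$, so $f_S(X) = g(X^2) + e_{d-1} X$ for a monic $g$ of degree $d/2$; hence $f_S'(X) = e_{d-1}$, so on the open set $e_{d-1} \ne 0$ the polynomial $f_S$ is automatically separable, and the $d$-subsets in question (those with $e_{d-1} \ne 0$, the bulk) correspond bijectively to pairs $(g,c)$ with $c \ne 0$ for which $g(X^2) + cX$ splits into $d$ distinct nonzero linear factors over $\F_q$ --- a family of $q^{d/2}(q-1)$ candidates for which the counting reduces to showing that a positive proportion, at least $1/d!$, split completely.

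The main obstacle is exactly this last step: proving geometric irreducibility of $W$ --- equivalently, understanding the monodromy group of the family $g(X^2)+cX$ over its $(g,c)$-parameter space well enough to invoke a Chebotarev/Weil-type density estimate for function fields, so that Lang--Weil or a direct bound on the associated exponential sums delivers the $\sim q^{d/2+1}$ count. This is precisely the ingredient that underlies classical computations of BCH weight distributions (via the Carlitz--Uchiyama bound), and I would expect to proceed along those lines, using the separability observation above to keep the family generically separable.
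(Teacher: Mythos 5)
Your reduction of the weight-$d$ codeword count to counting $\F_q$-points on the variety $W \subseteq \F_q^d$ cut out by $e_1 = e_3 = \cdots = e_{d-3} = 0$ is a reasonable setup, and the target asymptotic $\Theta(q^{d/2+1})$ is indeed the truth. But the argument is not complete: everything hinges on the claim that $W$ is geometrically irreducible of dimension $d/2+1$ (equivalently, that the monodromy of the family $g(X^2)+cX$ is large enough for a Chebotarev/Lang--Weil count to apply), and you state this as ``the crucial claim'' and then identify proving it as ``the main obstacle'' without supplying an argument. That is precisely the hard content of the lemma in your formulation --- without it, Lang--Weil gives nothing, and the dimension count $d - (d/2-1) = d/2+1$ only bounds the dimension of \emph{some} component from below, which does not yield a lower bound on the number of $\F_q$-points. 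So as written the proposal is a plausible program, not a proof.

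It is worth knowing that the paper's proof avoids all of this machinery. Writing $d = 2e+2$, it applies the syndrome map $x \mapsto Hx$ (with $H$ the $(el+1)\times n$ parity-check matrix) to the $\binom{n}{e+1}$ vectors of weight $e+1$; by Cauchy--Schwarz the number of collision pairs is at least roughly $\binom{n}{e+1}^2/2^{el+1} = \Omega_e(n^{2(e+1)-e}) = \Omega_e(n^{e+2})$, each collision $Hx_1 = Hx_2$ yields a codeword $x_1+x_2$ of weight at most $2e+2$, which by the BCH bound must have weight exactly $2e+2$, and each minimum-weight codeword arises from at most $\binom{2e+2}{e+1} < 2^{2e+2}$ such pairs. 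This second-moment argument gives $\Omega_e(n^{e+2}) = \Omega(n^{d/2+1})$ minimum-weight codewords in a few lines, with no algebraic geometry. If you want to salvage your route, you would need to actually establish the irreducibility/monodromy claim (e.g., via the Carlitz--Uchiyama bound fed through the MacWilliams identities, which is the classical way to control BCH weight distributions), but for the purposes of this lemma the elementary collision count is both sufficient and much shorter.
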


\begin{theorem}
  For any even $k \geq 2$ and large enough $n$, there exists a distribution $\psi_k$ on $\pmo^n$ where the numbers of
  $(-1,1)$-minimal skewed subcubes and $(1,1)$-minimal skewed subcubes
  of codimension $k$ are both $\Omega_k((\log(\inorm{\psi_k}))^{k/2
    + 1})$.
\end{theorem}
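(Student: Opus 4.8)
The plan is to take $\psi_k$ to be the uniform distribution over the dual of the BCH code supplied by \cref{lem:sergey}, chosen so that the Fourier support of $\psi_k$ is exactly that BCH code, and then to read off minimal skewed subcubes from its minimum-weight codewords. Concretely, set $d = k$, pick $n$ with $n + 1 = 2^l \ge k$, and let $\C = \C_{BCH} \subseteq \F_2^n$ be the code from \cref{lem:sergey}: it has minimum distance exactly $k$ and $N = \Omega(n^{k/2+1})$ codewords of weight $k$. Identifying a subset $S \subseteq [n]$ with its characteristic vector $\mathbf{1}_S \in \F_2^n$, define $\psi_k$ by setting $\widehat{\psi_k}(S) = 1$ if $\mathbf{1}_S \in \C$ and $\widehat{\psi_k}(S) = 0$ otherwise. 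The first step is to check this is a genuine measure: $\mathbf{1}_\emptyset = 0 \in \C$ gives $\widehat{\psi_k}(\emptyset) = 1$, and the standard character-sum identity over the subgroup $\C$ shows $\psi_k = \mu_{\C^\perp}$, which is nonnegative. We also record $\inorm{\psi_k} = |\C| = 2^{\dim \C}$, so that (in base $2$) $\log\inorm{\psi_k} = \dim\C \le n$.

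Next I would identify the minimal skewed subcubes. For each weight-$k$ codeword $v \in \C$ with support $K = \supp(v)$ (so $|K| = k$) and each $y \in \pmo^K$, set $C_{v,y} = (K, y)$. By \cref{lem:skew_fourier}, $\skew_{\psi_k}(C_{v,y}) = \sum_{\emptyset \ne S \subseteq K} \widehat{\psi_k}(S)\chi_S(y)$. Since the minimum distance is $k$, every nonzero codeword has weight at least $k$, so the only nonempty $S \subseteq K$ (all of which have size at most $k$) with $\mathbf{1}_S \in \C$ is $S = K$ itself; hence $\skew_{\psi_k}(C_{v,y}) = \chi_K(y) = \prod_{i \in K} y_i \in \{+1, -1\}$. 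Applying the same computation to any proper parent $D = (J, y|_J)$ with $J \subsetneq K$: every nonempty $S \subseteq J$ has $|S| < k$, so $\widehat{\psi_k}(S) = 0$ and $\skew_{\psi_k}(D) = 0$. Plugging $\gamma = 1$ and $\eps = 1$ into \cref{def:min_skew}, this makes every $C_{v,y}$ with $\prod_{i \in K} y_i = 1$ a $(1,1)$-minimal skewed subcube (it has skew $1 \ge \gamma$, and all parents have skew $0 = (1-\eps)\gamma$), and every $C_{v,y}$ with $\prod_{i \in K} y_i = -1$ a $(-1,1)$-minimal skewed subcube (skew $-1 \le -\gamma$, all parents skew $0 = -(1-\eps)\gamma$); all of these have codimension exactly $k$.

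Finally I would count and assemble. Distinct weight-$k$ codewords have distinct supports, since otherwise their $\F_2$-sum would be a nonzero codeword of weight below $k$, contradicting the minimum distance; so the $N$ supports $K$ are all different, and each contributes $2^{k-1}$ subcubes of each sign. Hence there are at least $2^{k-1}N = \Omega_k(n^{k/2+1})$ minimal skewed subcubes of codimension $k$ of each sign. Since $\log\inorm{\psi_k} = \dim\C \le n$ (and changing the logarithm base only rescales by a constant absorbed into $\Omega_k$), we have $n^{k/2+1} \ge (\log\inorm{\psi_k})^{k/2+1}$, which gives both claimed bounds $\Omega_k((\log\inorm{\psi_k})^{k/2+1})$. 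For $n$ not of the form $2^l - 1$, I would run the construction on the largest admissible $n' \le n$ and lift $\psi_k$ to $\pmo^n$ using \cref{lem:extension}, which preserves both $\inorm{\psi_k}$ and the family of minimal skewed subcubes, so the bound survives since $\dim\C \le n' \le n$ still. I do not anticipate a serious obstacle: the one idea that must be gotten right is using the BCH code itself as the Fourier support of $\psi_k$, so that its minimum distance simultaneously leaves exactly one surviving low-degree Fourier coefficient inside each minimum-weight support and forces the skew of every parent subcube to vanish; everything after that is bookkeeping with \cref{lem:skew_fourier}.
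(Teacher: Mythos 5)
Your proposal is correct and follows essentially the same route as the paper: take $\psi_k$ uniform on the dual of the BCH code from \cref{lem:sergey} so that its Fourier support is the code itself, use the minimum distance to kill all skew of proper parents and leave skew $\chi_K(y)=\pm 1$ on the supports of minimum-weight codewords, and count. The only cosmetic difference is that you compute the skew of the codimension-$k$ subcubes directly from \cref{lem:skew_fourier} while the paper computes the marginal $\psi^k_S = 1+\chi_S$ and the resulting probabilities — an equivalence the paper itself points out.
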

\begin{proof}
  Set $k = d$, and take $n$ as in Lemma \ref{lem:sergey}. Let $\psi^k$ be the
  uniform distribution on the dual space to $\C_{BCH}$.
  Using standard facts
  about the Fourier expansion of a subspace, we can write
  \begin{align}
  \label{eq:fourier_bch}
    \psi^k = \sum_{\substack{c \in \C_{BCH}\\ S = \supp(c)}}\chi_S(x).
  \end{align}

  Since $\C_{BCH}$ has minimum distance $k$, $\psi^k$ is $(k-1)$-wise
  independent, so for any subcube $C$ where $\codim(C) \leq k -1$, we
  have $\skew_{\psi^k}(C) = 0$. This relies on a standard construction
  of $k$-wise independent spaces from codes \cite[Chapter 16]{AlonS},
  it can also be seen using Lemma \ref{lem:skew_fourier} combined with
  Equation \eqref{eq:fourier_bch}.

  Fix  $S \subset [n]$ to be the support of a minimum weight codeword
  in $\C_{BCH}$. By Lemma \ref{lem:marginal} and Equation \eqref{eq:fourier_bch},
  the projection of $\psi^k$ to coordinates in $S$ is
  given by $\psi^k_S(x) = 1 + \chi_S(x)$. Hence it is uniform over the
  $2^{k-1}$ settings $y \in \pmo^S$ such that  $\chi_S(y^+) = 1$.

For every such $y$ and $D^+ = (S, y)$, we have $\Pr_{\x \sim \psi^k}[\x
  \in D^+]   = 2^{-(k-1)}$  hence
    \[\skew_{\psi^k}(D^+) = 2^k \Pr_{\x \sim \psi^k}[\x \in D^+] - 1 = 1.\]
On the other hand, for every $y \in \pmo^S$ such that  $\chi_S(y)
= -1$ and $D^- = (S, y)$, we have $\Pr_{\x \sim \psi^k}[\x  \in D^-]   = 0$  hence
    \[\skew_{\psi^k}(D^-) = 2^k \Pr_{\x \sim \psi^k}[\x \in D^-] - 1 = -1.\]
Since every parent of $D^+$ has $0$ skew, every such $D^+$ is a
$(1,1)$-minimal skewed subcube, and similary for every $D^-$.

Trivially, we have $\inorm{\psi^k} \leq 2^n$, hence
$\log(\inorm{\psi^k}) \leq n$ (in fact it equals $n -O(\log(n))$. Since the number of
  minimal weight codewords is $\Omega_k(n^{k/2 +1})$ by Lemma \ref{lem:sergey}, and $\gamma,
  \eps$ are both $1$, the number of codewords is
  $\Omega_k((\log(\inorm{\psi^k}))^{k/2 +1})$.
  Hence the number of minimal skewed subcubes is as claimed.
\end{proof}

\eat{
	  For a coefficient of size $\ell$, and magnitude at most $\delta$ the level-k inequality implies that there are at most:
%		\todo{RL: finish converting $1/\alpha$ to $\inorm{\psi}$}
		\[\left(\frac{2e}{\ell} \ln \frac{\inorm{\psi}}{2^{k-\ell}}\right)^{\ell} \cdot k^2 \frac{\binom{k}{\ell}^2}{0.01\lambda^2}  \leq 100 \left(6 \ln (\inorm{\psi}) + 6k\right)^{\ell} \cdot \frac{k^{2 +\ell}}{\lambda^2}\]
		choices. Thus for a single fixed partition of $[k]$, there are at most:
		\begin{align*}
		\prod_{i} 100 \left(6 \ln (\inorm{\psi}) + 6k\right)^{k_i} \cdot \frac{k^{2 +k_i}}{\lambda^2}
		&= 100 \left(6 \ln (\inorm{\psi}) + 6k\right)^{\sum_i k_i} \cdot \frac{k^{2 + \sum_i k_i}}{\lambda^2} \\
		&= 100 \left(6 \ln (\inorm{\psi}) + 6k\right)^{k} \cdot \frac{k^{2 + k}}{\lambda^2} \\
		&= 2^{\tilde O(k)} \cdot \frac{\ln(\inorm{\psi})^k}{\lambda^2}
		\end{align*}
		many choices. Finally, there are most $k\cdot k^k = 2^{\tilde O(k)}$ subpartitions of $[k]$, which concludes the proof.

}

\eat{
          For all subcubes $C$ such that $\skew_{\psi}(C) \geq \gamma$, setting $\lambda = \lambda_0/2 \cdot \min(1, \gamma)$ and running the algorithm $\fsr(\psi, \gamma, k, \lambda)$ will return a parent cube $D \supseteq C$ to list $L$ such that $\skew_{\psi}(D) \geq \gamma(1 - \lambda_0)$.
	\end{lemma}

	\begin{proof}
		Suppose $C = (K,y)$. We note first that

		We now prove the by induction that the algorithm returns a subcube $D \supset C$ with $\ip{\psi, \mu_D} \geq (1+\gamma)(1 - \lambda)$. As a base case, if $\codim(C) = 1$, then the algorithm adds $C$ to $L$ immediately on \cref{line:basecase}.

		Now supposing the claim is true for subcubes of co-dimension $k-1$, consider $C$ of co-dimension $k$. The algorithm will explore the branch corresponding to $D = (S^*, z^*)$, where $z^*$ fixes the bits in $S^*$ to agree with $C$. Setting $\gamma' =  (1+ \gamma)/\ip{\psi, \mu_D} - 1$, and $\psi' = \restr{\psi}{D}$, the algorithm then recursively calls $\fsr(\psi', \gamma', k', \lambda)$.

		By the inductive hypothesis, the recursive call will return a subcube $E \supseteq C$ such that
		\begin{align*}
		\ip{\restr{\psi}{D}, \restr{\mu_E}{D}} &\geq (1+\gamma')(1- \lambda) = \frac{1+ \gamma}{\ip{\psi, \mu_D}} (1-\lambda)
		\intertext{By \cref{lem:chainrule}}
		\ip{\psi, \mu_E} &=  \ip{\psi, \mu_D}\cdot \ip{\restr{\psi}{D}, \restr{\mu_E}{D}}  \geq (1 + \gamma)(1- \lambda)
		\end{align*}
		This concludes the proof by induction.

		To wrap up the theorem statement, we use that $\lambda = \lambda_0/2 \min(1, \gamma)$:
		\begin{align*}
		\skew(D) &\geq (1+\gamma)(1 - \lambda) - 1 \\
		&= \gamma - \lambda - \lambda \gamma \\
		&\geq \gamma - \lambda_0 \cdot \gamma/2 - \lambda_0 \cdot \gamma/2 \\
		&= \gamma(1 - \lambda_0) \qedhere
		\end{align*}

%		The proof is by induction on the dimension of the subcube $D = (J, z)$. Assume that for cubes $D$ of co-dimension $t+1$, the output of $\A(D, \ell, \delta)$ is a list such that every subcube $C'$ with $\mu(A \mid C') \geq (1 + \lambda) \alpha$ and with $C' \subseteq D$ has a parent cube $D' \supseteq C'$ in the list with $\mu(A \mid D') \geq (1 + 0.9 \lambda) \alpha$
%
%		Now fix a cube $D$ of co-dimension $t$ and consider a cube $C \subset D$ with conditional density greater than $(1 + \lambda) \alpha$. If $D$ has conditional density at least $(1 + 0.9 \lambda)\alpha$, then $D$ is a parent cube of $C$ with the appropriate density; the algorithm will output $D$, and the inductive claim is proved. If $D$ is not sufficiently dense, then $C$ has at least one Fourier coefficient $S$ of size $k'$ with weight at least $\delta / (k\cdot \binom{k}{k'})$; were this not the case, the sum of absolute values of all Fourier coefficients $T \subseteq J^C$ would be less than $(1 + \lambda) \alpha$, which is a contradiction by \cref{fact:cubeAvg}. Let $D'$ be the cube $D = (z \circ z', J \backslash S)$, where $z'$ are the bits of $S$ fixed to agree with $C$. The algorithm calls $\A(D', \ell', \delta)$ recursively which, by the inductive hypothesis, will add an appropriate parent cube to the output list.
	\end{proof}
}
\eat{
\begin{Lem}
  For two distributions $\psi, \omega$, we have
  \[ 0 \leq \ip{\psi, \omega} \leq \min(\inorm{\psi},
  \inorm{\omega}). \]
\end{Lem}
\begin{proof}
  The lower bound follows since both $\psi$ and $\omega$ are
  non-negative. For the upper bound,
  \begin{align*}
    \ip{\psi, \omega} & = \E_{\x \sim \pmo^n}[\psi(\x)\omega(\x)]\\
    & = \sum_{x \in \pmo^n}\frac{\psi(x)\omega(x)}{2^n}\\
    & = \E_{\x \sim \psi}[\omega(\x)]\\
    & \leq \inorm{\omega}
  \end{align*}
Similarly one can show that $\ip{\psi, \omega} \leq \inorm{\psi}$.
\end{proof}
}

\section{Algorithms for Finding Skewed Subcubes}
\label{sec:algo}

In this section, we present an algorithm that find skewed subcubes
efficiently in the random sample model, where we have access to random
samples from $\psi$.

To make \cref{alg:find_skew_recursive} efficient, we need to replace the step of
guessing $S$ (\cref{line:guess_s} in \cref{alg:find_skew_recursive}, and \cref{line:guess_s_neg} in \cref{alg:find_skew_neg}) with an algorithm to find
large low degree Fourier coefficients\footnote{We note that technically to implement the algorithm we also need to estimate $\ip{\psi, \mu_C}$ for every $C$ to an additive accuracy of $\min(\gamma, 2^{-k})$. If done via sampling, these only incur $2^{2k} k \log n / \gamma^2$ additional cost per call, and will be absorbed into our runtime bounds anyway.}. We restate the problem
below:

\begin{problem}
  {\bf Finding large low-degree biases.}
  Given a distribution $\psi$ on $\pmo^n$, an integer $k$ and $\rho \geq 0$, find
  all $S \subseteq [n]$ such that $|S| \leq k$ and
  \[ \hat{\psi}(S) \defeq \E_{\x \sim \psi}[\chi_S(x)] \geq \rho. \]
\end{problem}

%We sketch how one can use this algorithm to replace the step of
%guessing $S$ in \fsr. In the course of \fsr, we need to find
%coefficients $\hat{\psi_t}(S)$ of magnitude than
%\[ \rho_t \geq \frac{\epsilon \sqrt{\gamma}}{k_t2^{k_t}} \geq
%\frac{\epsilon \sqrt{\gamma}}{4^k}\]
%in $\psi_t = \restr{\psi}{D_t}$.
%But since $\psi_t$ is obtained by restricting $k - k_t$ bits in
%$\psi$, each Fourier coefficient can be expressed as a sum of $2^{k -
%  k_t} \leq 2^k$ coefficients in $\psi$. Hence computing all
%coefficients in $\psi$ to additive accuracy $\rho_t/4^k$ lets us
%compute the coefficients in $\psi_t$ to additive accuracy
%$\rho_t/2^k$. So we can simply enumerate over all coefficients that
%our estimate says are larger than $\rho_t(1 - 2^{-k})$. This list
%contains all the coefficients we need, and it is not too much larger,
%since every coefficient here is actually larger than $\rho_t(1 - 2^{-k
%  +1})$.  In summary, we only need to work with the original distribution $\psi$
%and find all $S$ such that $|S| \leq k$ and
%\[ \hat{\psi}(S) \geq \frac{\epsilon \sqrt{\gamma}}{16^k}\]

Our main result is the following pair of theorems.
    \begin{restatable}[Algorithm for Positive Skew]{theorem}{algmain}
  \label{thm:alg_main}
	Given sample access to a distribution $\psi$ on $\pmo^n$,
        integer $k \leq n$, and parameters $\gamma \in(0,2^k - 1]$,
          $\eps \in(0,1]$ and $\lambda \in [0,1]$, there is an
            algorithm that returns all $(\gamma, \eps)$-minimal skewed
            subcubes of codimension at most $k$ in time:
            \begin{align*}
            \tilde O \left(n^{k\left(\frac{\omega}{3 - \lambda}\right)}\right)+  
            k^{O(k)} \cdot \left(\ln(e \inorm{\psi})\right)^k \left( \frac{\tilde O(n^{k/3})}{(\eps \sqrt{\gamma})^{4/\lambda}} + \frac{\poly(n)}{(\eps \sqrt{\gamma})^{2k}} \right)
            \end{align*}
        where $\omega$ is the matrix multiplication exponent, and $\tilde O$ hides $\poly \log n$ factors.
\end{restatable}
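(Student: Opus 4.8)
The plan is to make each step of the combinatorial algorithm $\fsr$ (\cref{alg:find_skew_recursive}) efficient while preserving its correctness. By \cref{lem:complete} and \cref{lem:comb_bound}, $\fsr$ already enumerates every $(\gamma,\eps)$-minimal skewed subcube along a recursion tree of depth $\le k$ with $k^{O(k)}(\ln(e\inorm{\psi})/(\eps^2\gamma))^k$ nodes; the only non-constructive step is the guess of a heavy low-degree Fourier coefficient $S_t$ of the restriction $\psi_t=\restr{\psi}{D_t}$ in \cref{line:guess_s}, together with (via the footnote) the estimation of the inner products $\ip{\psi,\mu_{D_t}}$ used in the tests. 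I would first reduce both to sampling from $\psi$ alone: the algorithm only recurses into $D_t$ after the test of \cref{line:fail} certifies $\ip{\psi,\mu_{D_t}}\ge (1+\gamma)2^{-k_t}$, so a sample from $\psi$ lands in $D_t$ with probability $\ge (1+\gamma)2^{-k}$ and rejection sampling costs a factor $\le 2^k$ per sample, absorbed into $k^{O(k)}$. A union bound over the nodes of the recursion tree then controls the total failure probability, so it remains to supply an efficient routine for \textbf{Finding large low-degree biases} and to account for its cost at each node.

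For that routine I would adapt the fast-matrix-multiplication / correlated-columns technique behind Valiant's $\tilde O(n^{0.8k})$ algorithm for sparse noisy parity. To find all $S$ with $|S|\le s$ and $|\widehat{\psi_t}(S)|\ge\rho$, where $\rho = \eps\sqrt{\gamma}/(k_t\binom{k_t}{s})$ so that $1/\rho\le k^{O(k)}/(\eps\sqrt\gamma)$: form the $\approx n^{s/3}$ ``meta-coordinates'' $\chi_A$ over subsets $A$ of size $\le s/3$, estimate correlations from samples, and note that $\widehat{\psi_t}(S)=\E_{\psi_t}[\chi_A\chi_B\chi_C]$ whenever $S=A\sqcup B\sqcup C$, so that detecting a heavy triple reduces (after the appropriate embedding) to multiplying matrices of side $\approx n^{s/3}$ at cost $\approx n^{s\omega/3}$, up to factors in $\rho$ and in the number of samples used per correlation. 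The parameter $\lambda$ enters to trade the matrix-multiplication cost against this sampling/amplification cost: using fewer samples cheapens the products, producing the stand-alone $\tilde O(n^{k\omega/(3-\lambda)})$ term but forcing polynomial amplification to detect a bias of size $\rho$, which I expect contributes the $\tilde O(n^{k/3})/(\eps\sqrt\gamma)^{4/\lambda}$ factor inside the list-size multiplier; the remaining $\poly(n)/(\eps\sqrt\gamma)^{2k}$ should come from the raw sample complexity, compounding roughly as $\rho^{-2}$ over the $\le k$ levels of recursion, times $\poly(n)$ bookkeeping per sample. Since the Fourier coefficients of a restriction are fixed linear combinations of those of $\psi$ (\cref{lem:restrict}), I would compute a fine-threshold list of heavy low-degree coefficients of $\psi$ once at the root — this is where the $n^{k\omega/(3-\lambda)}$ term lives, outside the list-size multiplier — and at each node recover the heavy coefficients of $\psi_t$ from this list together with fresh samples, running the routine at the threshold appropriate to each target size $s=1,\dots,k_t$ (all dominated by $s=k_t$ up to $k^{O(k)}$).

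The hard part will be the large-low-degree-bias routine itself. I need to check that the Valiant-style reduction still goes through for an arbitrary, possibly spiky (not $\pm1$-bounded) distribution $\psi_t$ --- in particular that the number of spurious heavy triples it must sift through is controlled, which is exactly where the level-$k$ bounds of \cref{thm:level_k} and \cref{cor:level_k} applied to $\psi_t$ come in, using the bound $\inorm{\psi_t}\le 2^{k_t}\inorm{\psi}$ that follows from \cref{fact:inorm} and the \cref{line:fail} test precisely as in \cref{lem:guess_s} --- and then to optimize the three-way split and the sample count so the two regimes glue into the stated running time. Secondary technical points are propagating the sampling errors through the recursion cleanly and absorbing the $\binom{k_t}{s}$, $2^{k_t}$ and $k_t^{O(1)}$ overheads into the $k^{O(k)}$ prefactor. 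Finally, assembling the per-node cost (rejection sampling, the bias routine, estimating the $\ip{\psi,\mu_{D_t}}$ tests) and multiplying by the \cref{lem:comb_bound} list-size bound yields \cref{thm:alg_main}.
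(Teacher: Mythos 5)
Your proposal is correct and follows essentially the same route as the paper: a single top-level run of a Valiant-style correlation finder to list all heavy low-degree coefficients of $\psi$, the level-$k$ bound (Theorem~\ref{thm:level_k}) to control the number of spurious candidates, Lemma~\ref{lem:restrict} to deduce the heavy coefficients of each restriction $\psi_t$ from the root list rather than re-running the finder at every node, and the list-size bound of Lemma~\ref{lem:comb_bound} to multiply against the per-node cost, with the three terms of the running time arising exactly where you place them. The only real difference is that the paper invokes the pair-finding algorithm of \cite{KarppaKK18} (Theorem~\ref{thm:subquad_corr}) as a black box on the $\pm 1$-valued vectors $y_S=(\chi_S(x_i))_i$ built from samples---so the ``hard part'' you flag about spiky $\psi_t$ dissolves, since $\inorm{\psi}$ enters only through the count $q$ of heavy pairs---whereas you sketch re-deriving the matrix-multiplication step via a three-way split.
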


\begin{restatable}[Algorithm for Negative Skew]{theorem}{algmainneg}
	\label{thm:alg_main_neg}
	Given sample access to a distribution $\psi$ on $\pmo^n$,
	integer $k \leq n$, and parameters $\gamma \in(0,1]$,
	$\eps \in(0,1]$ and $\lambda \in [0,1]$, there is an
	algorithm that returns all $(-\gamma, \eps)$-minimal skewed
	subcubes of codimension at most $k$ in time:
	\begin{align*}
	\tilde O \left(n^{k\left(\frac{\omega}{3 - \lambda}\right)}\right)+  
	k^{O(k)} \cdot \left(\ln(e \inorm{\psi})\right)^k \left( \frac{\tilde O(n^{k/3})}{(\eps \gamma)^{4/\lambda}} + \frac{\poly(n)}{(\eps \gamma)^{2k}} \right)
	\end{align*}
	where $\omega$ is the matrix multiplication exponent, and $\tilde O$ hides $\poly \log n$ factors.
\end{restatable}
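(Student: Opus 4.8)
The plan is to make the nondeterministic algorithm $\fsr$ (and its negative-skew twin $\fsn$) constructive, by replacing the guessed heavy Fourier coefficient in \cref{line:guess_s} with an actual enumeration subroutine, and then to bound the running time as (number of recursive calls) $\times$ (cost per call). Completeness is already in hand: by \cref{lem:complete} every $(\gamma,\eps)$-minimal skewed subcube is returned for \emph{some} sequence of choices, so we run the algorithm over \emph{all} choices (a DFS over the recursion tree) and post-filter the candidates to keep only the genuinely minimal ones. Since the branching at a node is finite (\cref{lem:guess_s}) and the depth is at most $k$, this halts, and the number of recursive calls is bounded --- up to a factor absorbed in $k^{O(k)}$ for counting internal as well as leaf nodes --- by the list-size estimate of \cref{lem:comb_bound}, namely $k^{O(k)}(\ln(e\inorm\psi)/(\eps^2\gamma))^k$. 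Two structural facts make the per-node costs multiply out cleanly: along any root-to-leaf path $\sum_t |S_t| = k$, and at every node \cref{fact:inorm} together with the test in \cref{line:fail} gives $\inorm{\psi_t}\le 2^{k_t}\inorm{\psi}\le 2^k\inorm{\psi}$.

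I next reduce the cost of a single call to the stated problem of \textbf{finding large low-degree biases}. At the node with subcube $D_t=(R_t,y_t)$ the algorithm needs (i) an estimate of $\ip{\psi,\mu_{D_t}}$ accurate enough for the tests of \cref{line:return,line:fail}, which I get from empirical frequencies of $\x\in D_t$ at a cost of $\poly(2^k)/\gamma^2$ samples by Hoeffding; and (ii) the list of all $S$ with $|S|\le k_t$ and $|\widehat{\psi_t}(S)|\ge \eps\sqrt\gamma/(k_t\binom{k_t}{|S|})$, where $\psi_t=\restr{\psi}{D_t}$. For (ii), samples from $\psi_t$ come from samples of $\psi$ by rejection, and passing \cref{line:fail} forces $\Pr_{\x\sim\psi}[\x\in D_t]\ge 2^{-k}$, so each $\psi_t$-sample costs $O(2^k)$ $\psi$-samples in expectation; moreover each $\chi_S$ is $\pmo$-valued, so estimating a fixed $\widehat{\psi_t}(S)$ to accuracy $\rho/2$ needs only $\tilde O(1/\rho^2)$ samples regardless of $\inorm{\psi_t}$, while the \emph{number} of heavy coefficients is bounded by $W^{\le s}(\psi_t)/\rho^2\le e^2(\ln(e\inorm{\psi_t}))^s/\rho^2$ via \cref{thm:level_k}, exactly as in \cref{lem:guess_s}.

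For finding large low-degree biases quickly, brute force --- estimating every $\widehat{\psi_t}(S)$ with $|S|\le k_t$ --- costs $\tilde O(n^{k}/\rho^2)$, and to beat this I adapt the fast matrix-multiplication technique of \cite{Valiant15} for learning sparse parities. The support of a hypothetical heavy set is split into three (nearly) equal blocks, and the empirical correlations over a sample pool are arranged into (rectangular) matrices indexed by candidate sub-blocks, so that heavy coefficients surface as large entries of a matrix product computable by fast rectangular matrix multiplication and then read off. The parameter $\lambda\in[0,1]$ interpolates between doing more work inside the matrix product --- matrices of dimension $\approx n^{k/(3-\lambda)}$, cost $\tilde O(n^{k\omega/(3-\lambda)})$, independent of $\rho$ and of $\psi$ and payable once over a shared sample pool --- and doing more of it by scanning samples against a brute-forced block, costing $\tilde O(n^{k/3})/\rho^{4/\lambda}$, where the extra $\rho$-dependence is the sample count needed to keep aggregated correlations separated. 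The routine is Monte Carlo: drawing $\poly(1/\rho,k,\log n,\log(1/\delta))$ samples pushes its failure probability below $\delta$, and I take $\delta$ below the reciprocal of the total number of recursive calls so a union bound applies over the whole run; spurious sets it flags are discarded by direct re-estimation, whose cost lives in the $\poly(n)/\rho^{2k}$ regime and is absorbed.

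Assembling: plugging the subroutine into $\fsr$ with threshold $\rho_t=\eps\sqrt\gamma/(k_t\binom{k_t}{|S_t|})\ge \eps\sqrt\gamma/(k2^k)$ at each call and summing (number of calls) $\times$ (cost per call), the $\rho$-free matrix-multiplication preprocessing contributes the stand-alone term $\tilde O(n^{k\omega/(3-\lambda)})$, while the remaining per-call costs, multiplied by the $k^{O(k)}(\ln(e\inorm\psi))^k$ calls and with the tree-size's $(\eps^2\gamma)^{-k}=(\eps\sqrt\gamma)^{-2k}$ folded into the threshold powers, give the two sub-terms $\tilde O(n^{k/3})/(\eps\sqrt\gamma)^{4/\lambda}$ and $\poly(n)/(\eps\sqrt\gamma)^{2k}$, yielding \cref{thm:alg_main}; for \cref{thm:alg_main_neg} one runs $\fsn$ instead, where by \cref{lem:cond_skew} the threshold becomes $\eps\gamma$ rather than $\eps\sqrt\gamma$, replacing $\eps\sqrt\gamma$ by $\eps\gamma$ throughout. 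The main obstacle is the biases subroutine of the third step: getting the Valiant-style matrix-multiplication algorithm to enumerate \emph{all} large low-degree biases within the claimed $(n^{k\omega/(3-\lambda)},\,n^{k/3}/\rho^{4/\lambda})$ budget, and in particular making the dimension/sample-complexity tradeoff and the concentration analysis survive when the target threshold $\rho\approx \eps\sqrt\gamma/(k2^k)$ is tiny; the recursion, rejection-sampling and inner-product-estimation parts are routine given the combinatorial analysis of \cref{sec:comb}.
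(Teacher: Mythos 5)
There is a genuine gap in how you account for the cost of finding heavy coefficients of the \emph{restricted} distributions. You propose to invoke the Valiant-style correlation-finding routine at every recursive call, on $\psi_t=\restr{\psi}{D_t}$ via rejection sampling, and you claim the $\tilde O(n^{k\omega/(3-\lambda)})$ matrix-multiplication cost is ``independent of $\psi$ and payable once over a shared sample pool.'' That claim does not hold: the matrices in that routine have entries built from samples of the particular distribution whose coefficients are sought, and the matrix product for $\psi_t$ (a sum over the sample indices landing in $D_t$) is not recoverable as a sub-block of the product computed for $\psi$. So each node of the recursion tree would pay its own $n^{k\omega/(3-\lambda)}$, giving a total of (list size)$\,\times\, \tilde O(n^{0.8k})$ --- exactly the ``naive'' bound the paper states and explicitly improves upon --- rather than the additive $\tilde O(n^{k\omega/(3-\lambda)}) + \dots$ claimed in \cref{thm:alg_main_neg}.

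The missing idea is the paper's deduction step. One runs $\ffc$ \emph{once}, on the unrestricted $\psi$, at the lowered threshold $\rho^-=\eps\gamma/16^k$, obtaining a master list $L^-$ of all $S$ with $|S|\le k$ and $|\hatpsi(S)|\ge \eps\gamma/4^k$. Then, by \cref{lem:restrict}, for $C=(J,z)$ every coefficient $\widehat{\restr{\psi}{C}}(S)$ is a signed sum of at most $2^{|J|}$ coefficients $\hatpsi(S\cup T)$, $T\subseteq J$, scaled by $1/\ip{\psi,\mu_C}$; hence any coefficient of the restriction with magnitude $\ge\tau$ forces some $\hatpsi(S\cup T)$ to exceed $\tau/4^k$, i.e.\ $S\cup T\in L^-$. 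The candidate sets at each node are therefore read off from a preprocessed graph over $L^-$ ($\pc$/$\dsc$) and verified by direct sampling, at per-node cost $\poly(n)\cdot 2^{O(k)}(\ln(e\inorm\psi))^{k-|J|}/\tau^2$ with no further matrix multiplication. This is what makes the $n^{k\omega/(3-\lambda)}$ and $n^{k/3}/(\eps\gamma)^{4/\lambda}$ terms additive rather than multiplied by the $k^{O(k)}(\ln(e\inorm\psi)/(\eps^2\gamma^2))^k$ tree size. The rest of your outline (completeness via \cref{lem:neg_complete}, the $\eps\gamma$ threshold from \cref{lem:cond_skew}, the list-size bound from \cref{lem:comb_bound_neg}, estimating $\ip{\psi,\mu_{D_t}}$ by sampling) matches the paper.
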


Theorem \ref{thm:intro_alg} follows from using $(a)$, setting $\lambda =
0.01$ and $\omega \leq 2.38$.

	In both algorithms, we will find large low-degree biases using a breakthrough algorithm of
        \cite{Valiant15} for detecting pairs of vectors that are highly correlated from
        a set of weakly correlated vectors. The algorithm was subsequently improved by \cite{KarppaKK18}).
	\begin{theorem}[\cite{KarppaKK18}]
		\label{thm:subquad_corr}
		Given two sets of vectors $V_1, V_2 \subseteq \pmo^n$
                for which there are at most $q$ pairs $(v_1, v_2) \in
                V_1 \times V_2$ with correlation larger than $\tau$,
                and a parameter $\rho \geq \tau^{1/\lambda}$ (for $\lambda \in [0,1]$), there is an
                algorithm $\textsc{FindCorr}(V_1, V_2, \rho, \tau)$
                that with high probability outputs all pairs $(v_1,
                v_2) \in V_1 \times V_2$ with correlation at least
                $\rho$. Furthermore,  algorithm runs in time
                \[\tilde O\left(n^{\frac{2 \omega}{3 - \lambda}} +
                qdn^{\frac{2(1 - \lambda)}{3 - \lambda}}\right).\]
	\end{theorem}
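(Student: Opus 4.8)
The plan is to prove this by the ``tensor, aggregate, detect, recurse'' strategy of Valiant together with the dimension-reduction refinement of Karppa--Kaski--Kohonen, reading off the running time at the end by balancing a matrix-multiplication cost against a brute-force cost over the surviving candidate buckets. Throughout, $n=|V_1|=|V_2|$ denotes the number of vectors and $d$ their dimension. First, raise each vector to its $p$-th tensor power, choosing $p$ so that a correlated pair still has (normalized) correlation $\rho^p$ at least an inverse polynomial in $n$, while an uncorrelated pair has correlation at most $\tau^p$; the condition $\rho\ge\tau^{1/\lambda}$ is precisely what makes the requisite $p$ --- and hence the polynomial signal-to-noise gap $(\rho/\tau)^p$, whose size is dictated by $\lambda$ --- achievable. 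Since full tensoring would push the dimension to $d^p$, instead apply a cheap random linear sketch down to dimension $d'=\tilde O(n^{c(\lambda)})$: sample tensor-coordinates --- each a product of $p$ independent $\pm1$ entries, hence an unbiased estimator of the corresponding correlation --- and rescale. A Hoeffding bound over the $d'$ coordinates, plus a union bound over all $n^2$ pairs, shows the sketch preserves every pairwise correlation to additive error $\rho^p/4$ with high probability. Henceforth we may assume $n$ vectors in dimension $d'$ with signal $\ge\rho':=\rho^p$ and noise $\le 2\tau':=2\tau^p$.

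\textbf{Step 2: aggregate and detect.} Randomly partition $V_1$ into $g$ groups of size $b=n/g$ and likewise $V_2$, attach to each vector $v$ an independent uniform sign $s_v\in\pmo$, and form the aggregate $\sigma(G)=\sum_{v\in G}s_v v\in\R^{d'}$ of each group. Stacking the $g$ aggregates of $V_1$ as rows of a $g\times d'$ matrix and those of $V_2$ as columns of a $d'\times g$ matrix, a single rectangular matrix multiplication yields the $g\times g$ matrix $M$ of aggregate inner products. If a correlated pair $(u,v)$ lies in groups $G_i,G_j$, then $M_{ij}=s_us_v\langle u,v\rangle+(\text{cross terms})$, where the signal term has magnitude $\ge\rho'd'$ while the $O(b^2)$ cross terms are mean-zero of individual size $O(\tau'd')$; a second-moment estimate using the independence of the $s_v$ shows $|M_{ij}|\ge\rho'd'/2$ with high probability provided $b\lesssim(\rho'/\tau')^2$ up to polylog factors, whereas a group pair meeting no correlated pair has $|M_{ij}|<\rho'd'/2$ with high probability. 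Thresholding the entries of $M$ at $\rho'd'/2$ thus outputs all group pairs containing a correlated pair --- at most $q$ of them, plus $o(q+1)$ false positives killed by the union bound over the $g^2$ entries.

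\textbf{Step 3: recurse and assemble the running time.} Recurse inside each surviving group pair --- re-randomize the sub-partition and the signs and repeat, shrinking the group size geometrically --- until the groups have size $\mathrm{polylog}(n)$, then brute-force all pairs within each surviving group pair at cost $O(b^2 d)$ per leaf. The recursion has depth $O(\log n)$ (absorbed into $\tilde O$) and at most $q$ surviving group pairs at each level, so the total brute-force cost optimizes to $q\,d\,n^{2(1-\lambda)/(3-\lambda)}$. The detect step at the top level costs $\tilde O$ of the product of a $g\times d'$ by a $d'\times g$ matrix; choosing $g$ and the sketch dimension $d'$ (equivalently the tensor power $p$, equivalently $c(\lambda)$) so as to balance this product against the constraint $b\le(\rho'/\tau')^2$ --- which pins $g$ in terms of $n$ and $\lambda$ --- yields the term $\tilde O(n^{2\omega/(3-\lambda)})$. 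Adding the two contributions and repeating the whole scheme $O(\log n)$ times to drive the failure probability to $o(1)$ gives the stated bound.

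\textbf{Main obstacle.} The crux --- and the only place this improves on Valiant's original argument --- is the dimension reduction in Step 1: one needs a sketch that is cheap to apply, keeps the sketched vectors bounded so the aggregate-matrix analysis of Step 2 goes through unchanged, and preserves \emph{all} $\Theta(n^2)$ pairwise correlations to error $\ll\rho'$ using as \emph{few} coordinates as possible, since the sketch dimension $d'$ is exactly what controls the size of the matrix multiplied in Step 2 and any slack there inflates the exponent $2\omega/(3-\lambda)$. A secondary difficulty is the probabilistic bookkeeping: the cross-term cancellation in $M_{ij}$ holds only in mean and variance, so it must be boosted to a high-probability statement via the random signs and then union-bounded simultaneously over the $g^2$ matrix entries, the $O(\log n)$ recursion levels, and the $O(\log n)$ repetitions, with every threshold and group size calibrated against the interacting parameters $\rho,\tau,\lambda,q$.
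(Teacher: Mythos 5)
This statement is a cited result: the paper attributes it to \cite{KarppaKK18} and uses it as a black box, offering no proof of its own. There is therefore no ``paper's proof'' to compare against. One small notational wrinkle worth flagging is that in the theorem statement as the paper writes it, $n$ denotes both the ambient dimension (in $V_1,V_2\subseteq\pmo^n$) and the cardinality of $V_1,V_2$ inside the running-time bound (this is how it is instantiated in \cref{lem:ffc}, with $|V_i|\approx n^{k/2}$ and dimension $d$); you correctly read it the second way.

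Taken on its own terms, your sketch is a faithful high-level description of the Karppa--Kaski--Kohonen algorithm: tensor amplification of the $\rho$-vs-$\tau$ gap subject to $\rho\ge\tau^{1/\lambda}$, a random-sampled-tensor-coordinate sketch to hold the ambient dimension down, random-sign aggregation within groups, a single rectangular fast matrix multiply to score all group pairs, thresholding, and recursion on surviving pairs with a leaf-level brute force. The identified crux --- that the gain over Valiant comes from a sharper dimension-reduction step, and that the exponents $2\omega/(3-\lambda)$ and $2(1-\lambda)/(3-\lambda)$ arise by balancing the matrix-multiplication cost against the number of candidate buckets surviving the threshold, with the background bound contributing the $qd$ factor --- is correct. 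What the sketch does not do is carry out the parameter calculus: the choice of tensor power $p$, the sketch dimension $d'$, and the group size $b$ in terms of $\lambda$, and the verification that thresholding leaves $\tilde O(q)$ surviving pairs per level, are all asserted rather than derived, so the exponents are named but not established. Since the paper itself treats the statement as external, none of this affects the paper; if you want to turn the sketch into a proof you would need to supply the second-moment variance bound on the aggregate cross terms and the concrete balance that pins $g$ and $d'$, both of which are in \cite{KarppaKK18}.
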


	The essence of the reduction is as follows. For each set $S \subseteq [n]$ less than $k/2$ we associate a vector $y_S$, for which each coordinate is a random sample of $\chi_S(x)$ where $x$ is drawn from $\psi$. If $Q, R \subseteq [n]$ are disjoint, the correlation coefficient $\expect{y_Q \cdot y_R}/d$ is precisely the value of the Fourier coefficient $\hatpsi(Q \cup R)$. Thus as long as the algorithm of \cite{KarppaKK18} succeeds and is not overwhelmed by sample error, every $\rho$ correlated pair $(y_Q, y_R)$ corresponds to a Fourier coefficient $\hatpsi(Q \cup R)$ of size less than $k$ and absolute value $\rho$. We now describe this more formally.

	\begin{algorithm}[ht]
		\caption{$\ffc(\psi, k, \rho, \lambda)$}
		\label{alg:find_fourier}
		\begin{algorithmic}[1]
			\State $\mathcal{S} \leftarrow \emptyset$
			\State $\tau \leftarrow (\rho/2)^{1/\lambda}$
			\State Draw a set of $d= O(k \log n /\tau^2)$ samples $x_1, \ldots x_d$  from $\psi$.
			\For{$T = O(k^{3/2} \log n)$ rounds}
			\State Randomly partition $[n]$ into two subsets $N_1$ and $N_2$.
			\State For every subset $S\subseteq N_1$ of size $\leq \lceil k/2\rceil$, form a vector $y_S \in [-1,1]^d$ for which the $i$th bit is set to $\chi_S(x_i)$. Call this set of vectors $V_1$.
			\State Do the same for $N_2$ for sets of size $\leq \lfloor k/2 \rfloor$, and call the set of vectors $V_2$.
			\State Run $\textsc{FindCorr}(V_1, V_2, \rho/2, \tau)$ from \cite{KarppaKK18} to find all pairs $y_Q$ and $y_R$ such that $Q \subseteq N_1$, $R \subseteq N_2$, and $y_Q$ and $y_R$ are $\rho/2$ correlated. For each of these, add $Q \cup R$ to $\mathcal{S}$. \label{line:corr_pairs}
			\EndFor
			\State \Return $\mathcal{S}$.
		\end{algorithmic}
	\end{algorithm}

	We first prove some simple lemmas using standard concentration of measure results.

        \eat{
		\begin{fact}[Hoeffding Bound, \cite{Hoeffding}]
			Let $X_1, \ldots, X_d \in [-1,1]$ be independent random variables such that for all $1 \leq i \leq d$, $\expect{X_i} = \mu$. Then
			\[\prob{\left|\frac{1}{d}\sum_{i=1}^d X_i - \mu \right| \geq \eps} \leq 2 e^{-d\eps^2/2}\]
		\end{fact}
}
		\begin{lemma}
			\label{lem:ip_hoeffding}
			For every $Q \in N_1$ and $R \in N_2$, w.h.p. $\left|\ip{y_Q, y_R} / d - \hatpsi(Q \cup R)\right| \leq \tau/2$.
		\end{lemma}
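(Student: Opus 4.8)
The plan is to observe that $\ip{y_Q,y_R}/d$ is exactly an empirical average of $d$ independent samples of $\chi_{Q\cup R}(\x)$ for $\x \sim \psi$, and then apply Hoeffding's inequality followed by a union bound over all relevant pairs.

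First I would note that since $N_1$ and $N_2$ are disjoint, any $Q \subseteq N_1$ and $R \subseteq N_2$ are disjoint, so $\chi_Q(x_i)\chi_R(x_i) = \chi_{Q\cup R}(x_i)$ for every sample $x_i$. Hence
\[ \frac{\ip{y_Q,y_R}}{d} = \frac{1}{d}\sum_{i=1}^{d}\chi_{Q\cup R}(x_i), \]
which is the average of $d$ independent $\pmo$-valued random variables, each with mean $\E_{\x \sim \psi}[\chi_{Q\cup R}(\x)] = \hatpsi(Q\cup R)$ by the Fourier identity from \cref{sec:fourier}. Applying Hoeffding's bound to these bounded independent variables gives
\[ \Pr\!\left[\left|\frac{\ip{y_Q,y_R}}{d} - \hatpsi(Q\cup R)\right| \geq \frac{\tau}{2}\right] \leq 2\exp\!\left(-\frac{d\tau^2}{8}\right). \]

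Next, choosing the hidden constant in $d = O(k\log n/\tau^2)$ large enough makes the right-hand side at most, say, $n^{-10k}$. There are at most $|V_1|\cdot|V_2| \leq \bigl(\sum_{j \leq \lceil k/2\rceil}\binom{n}{j}\bigr)^2 \leq n^{k+1}$ pairs $(Q,R)$ with $Q\subseteq N_1$, $R\subseteq N_2$ of the relevant sizes, so a union bound over all of them yields total failure probability at most $2n^{-9k+1}$, which establishes the claim with high probability.

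There is no substantive obstacle here; the only two points needing care are (i) using the disjointness of $Q$ and $R$ so that the product $\chi_Q\chi_R$ collapses to the single character $\chi_{Q\cup R}$ whose $\psi$-expectation is the desired Fourier coefficient, and (ii) taking the constant in the sample size $d$ large enough to survive the union bound over the $\approx n^k$ candidate pairs — which is precisely why $d$ carries the $k\log n$ factor in its definition.
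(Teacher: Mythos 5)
Your proof is correct and follows essentially the same route as the paper's: identify $\ip{y_Q,y_R}/d$ as an empirical average of independent samples of $\chi_{Q\cup R}(\x)$ for $\x\sim\psi$ (using disjointness of $Q$ and $R$), apply Hoeffding with $d = O(k\log n/\tau^2)$, and union bound over the $O(n^k)$ pairs. No substantive differences.
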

		\begin{proof}
			For a single $x \sim \pmo^n$ and disjoint $Q,
                        R \subseteq N$, we have $\expect{\chi_Q(x)
                          \cdot \chi_R(x)} = \hatpsi(Q \cup
                        R)$. Applying the Hoeffding bound with
                        our choice of $d= 32k \log n / \tau^2$, we
                        have that $\ip{y_Q, y_R}/d$ is within $\tau/2$
                        of $\hatpsi(Q \cup R)$ with probability at
                        least $1 - n^{-2k}$. By a union bound, this
                        hold for all $O(n^k)$ choices of pairs $(Q,
                        R)$ with probability $1-n^{-k}$.
		\end{proof}

		\begin{lemma}
			\label{lem:bisect}
			For every $S \subseteq[n]$ with $|S| \leq k$,
                        w.h.p. in at least one round $t \in [T]$ there
                        are $Q \subseteq N_1$, $R \subseteq N_2$ with
                        $Q \cap R = \emptyset$ such that $Q \cup R =
                        S$.
		\end{lemma}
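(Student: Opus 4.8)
The plan is to analyze one iteration of the loop in isolation and then amplify using independence across the $T = O(k^{3/2}\log n)$ rounds. Fix $S \subseteq [n]$ with $|S| = s \leq k$. In a single round each coordinate of $[n]$ is assigned to $N_1$ or $N_2$ by an independent fair coin, so $|S \cap N_1| \sim \mathrm{Binomial}(s,1/2)$, and taking $Q = S \cap N_1$, $R = S \cap N_2$ automatically gives a partition of $S$ with $Q \cap R = \emptyset$ and $Q \cup R = S$. The only thing that can go wrong is that $Q$ or $R$ is too large to be enumerated by \ffc, which forms vectors only for subsets of size at most $\lceil k/2\rceil$ in $N_1$ and at most $\lfloor k/2\rfloor$ in $N_2$. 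I would call a round \emph{good for $S$} when both size constraints hold. The key estimate is that the event $|S \cap N_1| = \lceil s/2\rceil$ already forces $|Q| = \lceil s/2\rceil \leq \lceil k/2\rceil$ and $|R| = \lfloor s/2\rfloor \leq \lfloor k/2\rfloor$ (here $s \leq k$ is exactly what is needed), and by Stirling's approximation this event has probability $\binom{s}{\lceil s/2\rceil}2^{-s} = \Omega(1/\sqrt{s}) = \Omega(1/\sqrt{k})$. Hence each round is good for $S$ with probability at least $c/\sqrt{k}$ for an absolute constant $c > 0$.

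Since the $T$ rounds are independent, the probability that no round is good for $S$ is at most $(1 - c/\sqrt{k})^{T} \leq \exp(-cT/\sqrt{k})$, and choosing the hidden constant in $T = Ck^{3/2}\log n$ large enough makes this at most $n^{-cCk}$. Taking a union bound over the at most $(n+1)^k$ sets $S \subseteq [n]$ with $|S| \leq k$ bounds the total failure probability by $(n+1)^k\, n^{-cCk} = n^{-\Omega(k)}$, which yields the claimed high-probability statement; the slack in the constant $C$ comfortably absorbs the $(n+1)^k$ factor from the union bound.

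I do not anticipate a serious obstacle here: the argument is a Chernoff/Stirling estimate followed by a union bound. The one point I would be careful about is the asymmetric ceiling/floor bookkeeping when $k$ is odd — it matters that one feeds \ffc the \emph{most balanced} split of $S$ (rather than an arbitrary one) so that the $\lceil k/2\rceil$ versus $\lfloor k/2\rfloor$ caps are respected on the correct sides, which is why the analysis conditions specifically on $|S \cap N_1| = \lceil s/2\rceil$ rather than on a generic "not-too-unbalanced" event.
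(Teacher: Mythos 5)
Your proof is correct and follows essentially the same route as the paper's: lower-bound the probability of a perfect bisection of $S$ in one round by $\Omega(1/\sqrt{k})$ via the central binomial coefficient, amplify over the $T = O(k^{3/2}\log n)$ independent rounds, and union bound over all $O(n^k)$ sets. Your extra care about the $\lceil k/2\rceil$ / $\lfloor k/2\rfloor$ caps is a nice touch the paper leaves implicit, but it is the same argument.
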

		\begin{proof}
			Fix a set $S$ of size $\ell$. For a random
                        bipartition of $[n]$, the probability $S$ is
                        perfectly bisected is at least
                        $1/(8\sqrt{\ell}) \geq
                        1/(8\sqrt{k})$. The probability it is
                        never bisected over $T = 16k^{3/2} \log n$
                        rounds is upper bounded by
                        \[ \left(1 -
                        \frac{1}{8\sqrt{k}} \right)^T \leq e^{-2 k
                          \log n} = n^{-2k}.\]
                        By a union bound, every
                        $S$ of size $\leq k$ is bisected at least once
                        with high probability.
		\end{proof}

\begin{lemma}
\label{lem:ffc}
	The algorithm $\ffc(\psi, k, \rho, \lambda)$ returns all
        Fourier coefficients of $\psi$ of degree at most $k$ of
        absolute value at least $\rho$ in time 
        \[\tilde O \left( n^{k\omega/(3 - \lambda)} \right) +
        \tilde O(n^{k/3}) 2^{O(k)} (\ln(e \inorm{\psi}))^k\rho^{-4/\lambda}.\]
\end{lemma}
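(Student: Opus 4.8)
The plan is to derive \cref{lem:ffc} by stitching together \cref{lem:bisect}, \cref{lem:ip_hoeffding}, and the correlation-detection primitive of \cref{thm:subquad_corr}, with \cref{thm:level_k} supplying the bound on the only input-dependent quantity that drives the running time of that primitive.

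\textbf{Completeness.} I would first argue that every target coefficient is recovered. Fix $S\subseteq[n]$ with $|S|\le k$ and $|\hatpsi(S)|\ge\rho$. By \cref{lem:bisect}, with high probability some round perfectly bisects $S$ into $Q\subseteq N_1$, $R\subseteq N_2$ with $Q\cap R=\emptyset$, $Q\cup R=S$, $|Q|\le\lceil k/2\rceil$, $|R|\le\lfloor k/2\rfloor$, so $y_Q\in V_1$ and $y_R\in V_2$ are both constructed that round. By \cref{lem:ip_hoeffding}, $\ip{y_Q,y_R}/d$ is within $\tau/2$ of $\hatpsi(S)$, hence has magnitude at least $\rho-\tau/2\ge\rho/2$ (here $\tau=(\rho/2)^{1/\lambda}\le\rho/2$ since $\rho/2\le 1$ and $\lambda\le 1$). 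Since the precondition $\rho/2\ge\tau^{1/\lambda}$ of \cref{thm:subquad_corr} also holds, because $\tau^{1/\lambda}=(\rho/2)^{1/\lambda^2}\le\rho/2$, running $\textsc{FindCorr}(V_1,V_2,\rho/2,\tau)$ (once against $V_2$ and once against $-V_2$ to handle both signs) outputs $(y_Q,y_R)$ and so adds $S=Q\cup R$ to $\mathcal{S}$; a union bound over the $T$ rounds keeps the failure probability $n^{-\Omega(k)}$.

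\textbf{Bounding $q$.} The main obstacle will be controlling $q$, the number of $\tau$-correlated pairs, since the running time of \cref{thm:subquad_corr} is linear in it and the naive bound $q\le|V_1|\,|V_2|=n^{\Theta(k)}$ would be useless. By \cref{lem:ip_hoeffding}, any pair with $|\ip{y_Q,y_R}/d|>\tau$ has $|\hatpsi(Q\cup R)|\ge\tau/2$ w.h.p., and since $N_1\cap N_2=\emptyset$ we get $Q\cap R=\emptyset$, so $|Q\cup R|\le k$. By Parseval the number of $S$ with $|S|\le k$ and $\hatpsi(S)^2\ge\tau^2/4$ is at most $4\,W^{\le k}(\psi)/\tau^2$, which is $4e^2(\ln(e\inorm{\psi}))^k/\tau^2$ by \cref{thm:level_k}; each such $S$ is $Q\cup R$ for at most $2^{|S|}\le 2^k$ ordered pairs, so $q\le 2^{O(k)}(\ln(e\inorm{\psi}))^k/\tau^2$. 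This is exactly where the level-$k$ inequality does the heavy lifting.

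\textbf{Running time.} Finally I would total the cost. Over $T=\tilde O(k^{3/2})$ rounds, each round forms $N:=|V_1|+|V_2|=O(n^{\lceil k/2\rceil})=O(n^{k/2})$ vectors of length $d=O(k\log n/\tau^2)$ at cost $\tilde O(Nd)$ (dominated by \textsc{FindCorr}), then calls \cref{thm:subquad_corr} at cost $\tilde O\!\big(N^{2\omega/(3-\lambda)}+q\,d\,N^{2(1-\lambda)/(3-\lambda)}\big)$. The first term is $\tilde O(n^{k\omega/(3-\lambda)})$. For the second, $(1-\lambda)/(3-\lambda)$ decreases in $\lambda$ from the value $1/3$ at $\lambda=0$, so $N^{2(1-\lambda)/(3-\lambda)}\le n^{k/3}$, while $q\,d\le 2^{O(k)}(\ln(e\inorm{\psi}))^k\tau^{-4}=2^{O(k)}(\ln(e\inorm{\psi}))^k\rho^{-4/\lambda}$ (absorbing the constant $2^{4/\lambda}$ for constant $\lambda$). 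Multiplying by $T$ and folding $\poly(k,\log n)$ into $\tilde O(\cdot)$ and $2^{O(k)}$ yields $\tilde O(n^{k\omega/(3-\lambda)})+\tilde O(n^{k/3})\,2^{O(k)}(\ln(e\inorm{\psi}))^k\rho^{-4/\lambda}$, as claimed.
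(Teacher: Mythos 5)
Your proof follows essentially the same route as the paper's: completeness via \cref{lem:bisect} and \cref{lem:ip_hoeffding}, the bound on $q$ via \cref{thm:level_k} plus the $2^k$ decompositions of each heavy coefficient, and the same plug-in to the running time of \cref{thm:subquad_corr}. The only (welcome) additions are explicit checks of the $\rho \geq \tau^{1/\lambda}$ precondition and the remark about running \textsc{FindCorr} against $-V_2$ to catch negatively biased coefficients, a detail the paper leaves implicit.
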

\begin{proof}
Consider any set $S \subseteq [n]$ of size $\leq k$ of magnitude at
least $\rho$. By \cref{lem:bisect}, w.h.p. for some round $t \in [T]$,
the algorithm will form two vectors $y_Q$ and $y_R$ such that $Q
\subseteq N_1$, $R \subseteq N_2$ and $Q \cup R = S$. Furthermore, by
\cref{lem:ip_hoeffding}, we have $\ip{y_Q, y_R}/d \geq \rho - \tau \geq \rho/2$.
In turn, this means that $\textsc{FindCorr}(V_1, V_2, \rho/2, \tau)$ will
detect these w.h.p.

To bound the running time, we need a bound on the number $q$ of pairs
with correlation higher than $\tau$. By \cref{lem:ip_hoeffding}, we have
$\ip{y_Q, y_R}/d \geq \tau$ implies $\hat{\psi}(Q \cup R) \geq
\tau/2$. The number of such coefficients is bounded by
\[ \frac{W^{\leq k}(\psi)}{(\tau/2)^2} \leq  2^8 \left(\ln(e\inorm{\psi})\right)^k\rho^{-2/\lambda}.\]
For each such coefficient $S$, there are $2^k$ ways to write it as $S
= Q \cup R$ for disjoint $Q, R$. Hence
\[ q \leq 2^{k + 8} (\ln(e \inorm{\psi}))^k\rho^{-2/\lambda}.\]

Observe that $\log_\tau\rho \geq \lambda$ by our choice of $\tau$.
By \cref{thm:subquad_corr}, \cite{KarppaKK18} will find a list
containing all $\rho/2$ correlated pairs in time at most
		\begin{align*}
		&\tilde O\left(\left(n^{k/2}\right)^{2\omega/(3 - \lambda)} +
                  qd\left(n^{k/2}\right)^{(2 - 2 \lambda) / (3-
                    \lambda)}\right)\\
                  & \leq \tilde O \left( n^{k\omega/(3 - \lambda)} +
                  2^{O(k)}(\ln(e \inorm{\psi}))^k\rho^{-2/\lambda}k \cdot \log(n)\cdot \rho^{-2/\lambda}n^{k(1 - \lambda)/(3 -
                    \lambda)}\right) \\
                  & \leq \tilde O \left( n^{k\omega/(3 - \lambda)} \right) +
                  \tilde O(n^{k/3}) 2^{O(k)} (\ln(e \inorm{\psi}))^k\rho^{-4/\lambda}. \qedhere
                  \end{align*}
\end{proof}

We relegate the remainder of the proofs of \cref{thm:alg_main} to the appendix.

\section{Reduction from Noisy Parity}
\label{sec:noisy-parity}

Recall that given $S \subseteq [n]$, a parity function $\chi_S: \pmo^n \rgta \pmo$ is given by
$\chi_S(x) = \prod_{i \in S}x_i$.
A noisy parity is a parity function with random noise of rate $\eta$
added to it. In other words, we say $f:\pmo^n \rightarrow \pmo$ is an $\eta$-noisy parity if
$\Pr[f(x) = \chi_S(x)]= 1-\eta$ an $\Pr[f(x) = -\chi_S(x)] =  \eta$.
In the sparse noisy parity problem, we are given access to  samples
$(\x,f(\x))$ where $\x \sim \mu$ is sampled uniformly from $\pmo^n$ and $f$ is
noisy parity $\chi_S$ with  $|S| = k$. The goal is to recover the
parity function, or equivalently the set $S$.

Given a set $S' \subseteq [n]$, we have
\[ \E_{\x \sim \mu}[\widehat{f}(S')] =
\begin{cases} 0 & \text{if~} S' \neq S \\
  1 - 2\eta & \text{if~} S' = S
\end{cases} \]
    This leads to a naive enumeration algorithm
that runs in time $O(n^k)$. The current best algorithm due to
\cite{Valiant15} runs in time $O(n^{0.8k})\poly(1/(1 - 2\eta)))$.
A series of reductions due to \cite{FeldmanGKP09} show that efficient
algorithms for sparse noisy parity  imply algorithms with similar
running times for learning $k$-juntas, decision trees and DNFs under
the uniform distribution. This suggests the following conjecture
(which we consider to be folklore).

\textbf{Conjecture:}
	There is no algorithm for the sparse noisy parity problem which runs in time $n^{o(k)}$.

We now prove \cref{thm:noisy-parity} which we restate below.
\noisyparity*
\begin{proof}
Finding a noisy parity reduces to finding a minimal skewed
subcube. Given an instance of sparse noisy parity, consider
the distribution $\psi$ on $\pmo^{n+1}$ obtained by appending the label to
the sample. Thus $\psi = (\x, f(\x))_{\x \sim \mu}$.
We show that all skewed subcubes must restrict the set $S \cup \{
n+1\}$, hence they are all minimal, and  finding any skewed subcube
solves the noisy parity problem.

For any $z \in \pmo^{S}$, consider the subcube $D^+(z)$ given by
$x_S =z$, $x_{n+1} = \chi_S(z)$. We have
\[ \skew(D^+(z)) = 2^{k+1}\left(\Pr_{\x' \sim \psi}[\x' \in D] -\Pr_{\x'
  \sim \mu}[\x' \in D]\right) = 2^{k+1}\left(\frac{1 -\eta}{2^k} -
\frac{1}{2^{k+1}}\right) =  1- 2\eta. \]
Similarly if we define $D^-(z)$ by $x_S = z$ and $x_{n+1} =
-\chi_S(z)$, then
\[ \skew(D^-(z)) = 2^{k+1}\left(\Pr_{\x' \sim \psi}[\x' \in D] -\Pr_{\x'
  \sim \mu}[\x' \in D]\right) = 2^{k+1}\left(\frac{\eta}{2^k} -
\frac{1}{2^{k+1}}\right) = -(1 - 2\eta). \]
It is easy to verify that if the set of restricted variables is not $S
\cup \{n+1\}$, then the skew is $0$, which shows that the subcubes
above all $(\pm(1- 2\eta), 1)$-minimal skewed subcubes.
\end{proof}

This shows the hardness of finding subcubes with skew $(1 - 2\eta) <
1$.  The reduction could be extended to show the hardness of finding
subcubes with larger skew, simply by concatenating $\ell$ different
samples of $\psi$. Now an algorithm that finds a subcube of skew $(2(1
- \eta))^\ell -1$ and codimension $k$ can be used to solve a $k/\ell$-sparse
noisy parity problem.

\section{The Membership Query Model}
\label{sec:mem-query}

\cref{thm:noisy-parity} suggests that a much better algorithm does not
exist in the model where we only get random samples from
$\psi$. However, noisy parity becomes trivial when we are
given query access to $f$, by repeatedly querying the function at $x$
and $x \cdot e_i$. This motivates us to consider the query model where
in addition to getting random samples from $\psi$,
we are allowed to query $\psi(x)$ for points $x$ of our choosing.
As we will see, this does make finding skewed subcubes easier
for distributions where $\inorm{\psi}$ is small. We first show how
this improvement arises, and then give evidence that queries do not
add too much power over random samples when $\inorm{\psi}$ is large.

Algorthmically, all we need is a procedure to find all large low-degree Fourier
coefficients of $\psi$ under the query model. Such a procedure is given by
a classic result~\cite{KM93} of Kushilevitz and Mansour, which uses the
algorithm of Goldreich and Levin~\cite{GL89}  to compute large Fourier
coefficients when given a  query access to a function.

\begin{theorem}[\cite{GL89}]
	\label{thm:GL}
	Given query access to $f:\pmo^n \rightarrow [-t,t]$ and a
        parameter $\rho > 0$, there is an algorithm running in time poly$(n,t/\rho)$ that with high probability outputs a list containing all subsets $S$, such that $\hatf(S) \geq \tau$.
\footnote{The theorem is typically stated for functions with range
$\pmo$, however a similar bound is true for the range $[-1,1]$. The
version stated here follows by scaling the function by $t$ so it lies in the
range $[-1,1]$, and replacing $\rho$ by $\rho/t$.
}
\end{theorem}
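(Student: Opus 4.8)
The plan is to reduce to the classical Goldreich--Levin / Kushilevitz--Mansour estimator and then run its standard bucketing search. First I would dispose of the range $[-t,t]$ by scaling, exactly as the footnote indicates: the function $g = f/t$ satisfies $g : \pmo^n \to [-1,1]$ and $\widehat{g}(S) = \widehat{f}(S)/t$, so it suffices to give, for any $h : \pmo^n \to [-1,1]$ and threshold $\rho' > 0$, a query algorithm running in time $\poly(n, 1/\rho')$ that with high probability outputs a list containing every $S$ with $|\widehat{h}(S)| \ge \rho'$; applying this with $h = g$ and $\rho' = \rho/t$ yields the stated bound.

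Next I would isolate the key estimable quantity. For $0 \le k \le n$ and a set $J \subseteq [k]$, define the bucket weight $w_J = \sum_{S \subseteq \{k+1,\dots,n\}} \widehat{h}(J \cup S)^2$. Expanding $h(x \circ y)$ in the Fourier basis and using orthogonality over $x$ gives $\E_x[h(x \circ y)\chi_J(x)] = \sum_{S} \widehat{h}(J \cup S)\chi_S(y)$, whence by Parseval $w_J = \E_{y, x, x'}\!\left[\, h(x \circ y)\, h(x' \circ y)\, \chi_J(x)\,\chi_J(x')\,\right]$, where $y$ is uniform on $\pmo^{\{k+1,\dots,n\}}$ and $x, x'$ are independent uniform on $\pmo^{[k]}$. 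This is an average of a quantity lying in $[-1,1]$ that is evaluated with two queries to $h$, so a Hoeffding bound shows that $O(\rho'^{-4}\log(1/\delta))$ samples estimate $w_J$ to additive error $\rho'^2/4$ with failure probability $\delta$.

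The algorithm is then the usual breadth-first search over prefixes. Maintain a set $\mathcal{L}_k$ of heavy prefixes in $[k]$, starting from $\mathcal{L}_0 = \{\emptyset\}$; to pass from level $k$ to $k+1$, replace each $J \in \mathcal{L}_k$ by its two children $J$ and $J \cup \{k+1\}$ (now viewed inside $[k+1]$), estimate the bucket weight of each child, and keep those whose estimate is at least $\rho'^2/2$, discarding arbitrarily if more than $4/\rho'^2$ would survive. Since $\sum_{J \subseteq [k]} w_J = \|h\|_2^2 \le 1$ by Parseval, at most $2/\rho'^2$ buckets have true weight $\ge \rho'^2/2$, so with estimation accuracy $\rho'^2/4$ the retained set always contains every $J$ with $w_J \ge \rho'^2$ and has size at most $4/\rho'^2$; in particular no bucket is ever discarded in error. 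Any $S$ with $|\widehat{h}(S)| \ge \rho'$ has $w_{S \cap [k]} \ge \widehat{h}(S)^2 \ge \rho'^2$ at every level, so its prefixes survive, and at level $n$ each surviving bucket is a single coefficient. We output $\mathcal{L}_n$. The cost is $n$ levels times $O(1/\rho'^2)$ buckets times $O(\rho'^{-4}\log(n/\rho'))$ samples per estimate, i.e.\ $\poly(n, 1/\rho')$ queries and time; rescaling gives $\poly(n, t/\rho)$.

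The main obstacle is the probabilistic bookkeeping: the tree of buckets is grown adaptively from sampled estimates, so one must argue that a single union bound still suffices. This works because the width of the tree at each level is deterministically capped at $O(1/\rho'^2)$, so the total number of weight estimates is a fixed $\poly(n, 1/\rho')$ independent of the randomness; taking each estimate's confidence parameter polynomially small and union bounding over all of them makes the whole run correct with high probability. A minor point is that the output may contain false positives (sets with weight estimate $\ge \rho'^2/2$ but $|\widehat{h}(S)| < \rho'$), which the "list containing all" guarantee permits; if a clean list is desired one re-estimates each leaf coefficient directly via $\widehat{h}(S) = \E_x[h(x)\chi_S(x)]$ and discards the small ones.
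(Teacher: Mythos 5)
Your proposal is correct. The paper does not prove this theorem---it is imported as a black box from Goldreich--Levin / Kushilevitz--Mansour, with only the footnote's scaling remark ($g=f/t$, threshold $\rho/t$) offered as justification for the extended range, and your first paragraph reproduces exactly that reduction. The rest of your argument is a sound, self-contained reconstruction of the standard Kushilevitz--Mansour prefix search: the bucket weights $w_J$, the unbiased two-query estimator $\E_{y,x,x'}[h(x\circ y)h(x'\circ y)\chi_J(x)\chi_J(x')]$ whose terms lie in $[-1,1]$ (so Hoeffding applies even though $h$ is not Boolean), the Parseval cap of $O(1/\rho'^2)$ on the number of surviving buckets per level, and the union bound over a deterministically bounded number of estimates are all handled correctly. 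One cosmetic note: the theorem statement in the paper has a typo ($\hatf(S)\geq\tau$ where $\rho$ is meant), which you have implicitly and correctly repaired by working with the threshold $\rho'=\rho/t$.
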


If we apply it to $\psi$, then we get an algorithm whose running time
is $\poly(n, \inorm{\psi}, \rho)$. Thus, the algorithm is faster than the trivial exhaustive search
algorithm only when  $\inorm{\psi} < n^{\alpha k}$ for some $\alpha >
0$. The polynomial dependence on $\inorm{\psi}$ in the running time is inevitable since
the algorithm finds all $\widehat{\psi}(S) \geq \rho$, and not just those
with $|S| \leq k$. The number of such coefficients can be
$\inorm{\psi}/\rho^2$. In contrast, when we restrict to $|S| \leq k$,
the list-size only grows as $\ln(e\inorm{\psi})^k/\rho^2$. This raises
the following natural open question:

\begin{problem}
	Given query access to a probability measure, $\psi$ such
        parameter $\rho > 0$, does there exist an algorithm that can
        find all $S$ such that $|S| \leq k$ and $|\widehat{\psi}(S)| \geq
        \rho$ in time $\poly(n, 1/\rho, \ln(\inorm{\psi})^k)$?
\end{problem}

We conclude by observing that some dependence on $\inorm{\psi}$  (at least
logarithmic) seems inevitable, even in cases where the list-size
is $1$. This is seen by a reduction from sparse noisy parity.
A  sample of size $O(k\log n/\epsilon^2)$ from an instance of noisy
parity preserves all correlations of sets of $k$ variables up to an
additive $\epsilon$.  Define $\psi$  to be the uniform measure of these samples alone.
Finding $S$ such that $|S| \leq k+1$ and $\widehat{\psi}(S) \geq 1 - 2\eta$ will solve the
noisy parity problem. Note that for $\psi$ the query model and the random samples model are
equivalent as we have the support explicitly. So if we believe that
sparse noisy parity requires time $n^{\Omega(k)}$ time, then any
algorithm for finding large low-degree Fourier coefficients in $\psi$
must require as much  time. Since $\inorm{\psi} =
\eps^22^n/k\log(n)$, we have $\ln(\inorm{\psi}) = n  -O(\log(1/\eps)$. This is consistent with a dependence of $\ln(\inorm{\psi})^{\Omega(k)}$.

{\footnotesize
\bibliography{refs}
\bibliographystyle{alpha}}
\section{Missing Proofs}
\label{sec:appendix}

\prodrule*
\begin{proof}
	We have
	\begin{align*}
	\ip{\psi, \mu_D} & = \ip{\psi, \mu_C} \cdot \ip{ \frac{\psi}{\ip{\psi, \mu_C}},
		\mu_D}
	= \ip{\psi, \mu_C}\cdot \ip{\restr{\psi}{C}, \mu_D \big|_C} \qedhere
	\end{align*}
	where the second equality follows from  $D \subseteq C$.
\end{proof}

\skewbyprob*
\begin{proof}
	We have
	\begin{align*}
	\ip{\psi, \mu_C} & = \E_{\x \in \mu_C}[\psi(\x)]\\
	& = \sum_{x \in C}\frac{\psi(x)}{2^{n-k}}\\
	& = 2^k \sum_{x \in C}\Pr_{\x \sim \psi}[\x =x]\\
	& = 2^k \Pr_{\x \sim \psi}[\x \in C].
	\end{align*}
        Hence
        \begin{align*}
	  \ip{\psi, \mu_C} - 1 &= 2^k \Pr_{\x \sim \psi}[\x \in C]  -1\\
          &= \frac{\Pr_{\x \sim \psi}[\x \in C] - 2^{-k}}{2^{-k}}.
         \end{align*}
	Since $\Pr_{\x \sim \mu}[\x \in C] = 2^{-k}$, the claim follows.
\end{proof}

\negposskew*
\begin{proof}
	Consider the sum:
	\begin{align*}
	\sum_{\substack{D = (K, w) \\ w \in \pmo^K}} \skew(D) =
	\sum_{\substack{D = (K, w) \\ w \in \pmo^K}} 2^k \left(\Pr_{x
		\sim \psi}[\x \in D] - \Pr_{x \sim \mu}[\x \in D] \right) =
	0.
	\end{align*}
	The first equality comes from \cref{lem:skew_by_prob}, the
	second follows since the set of cubes $D$ form a partition of
	$\pmo^n$.
\end{proof}

\skewavg*
\begin{proof}
	We have
	\begin{align*}
	\skew_\psi(C) & = 2^k\left(\Pr_{\x \sim \psi}[\x \in C ] - \frac{1}{2^k}\right)\\
	& = \frac{2^{k+ \ell}}{2^\ell}\left(\sum_{i=1}^{2^\ell}\left(\Pr_{\x \sim \psi}[\x \in C_i] - \frac{1}{2^{k+
			\ell}}\right)\right)\\
	& = \frac{1}{2^\ell}\left(\sum_{i=1}^{2^\ell}2^{k + \ell}\left(\Pr_{\x \sim \psi}[\x \in C_i] - \frac{1}{2^{k+
			\ell}}\right)\right)\\
	& = \frac{1}{2^\ell}\sum_{i=1}^{2^\ell} \skew_\psi(C_i). \qedhere
	\end{align*}
\end{proof}

\corlevel*
\begin{proof}
Recall that
\[ W^{\leq k}(\psi, J) = \sum_{T \subseteq J} \sum_{\substack{S
    \subseteq [n]\setminus J \\ |S| \leq k}} \hatpsi(S \cup T)^2 \]
  We will show that for any $T \subseteq J$, we have
  \begin{align}
    \label{eq:k-restrict}
    \sum_{\substack{S \subseteq [n]\setminus J\\|S| \leq k}}
    \hatpsi(S \cup T)^2 \leq e^2 (\ln(e \inorm{\psi}))^k.
    \end{align}
The claim will then follow by summing over all $2^{|J|}$ choices of
$T\subseteq J$.
To prove Equation \eqref{eq:k-restrict}, we define $\psi^{(T)}:
\pmo^{[n] \setminus J} \rgta \R$ as
\[ \psi^{(T)}(x) = \E_{\z \sim \pmo^J}[\psi(x \circ \z)\chi_T(\z)]. \]
Note that unlike $\psi$,  $\psi^{(T)}$ can be negative.
By orthogonality of characters, we have
\[ \psi^{(T)}(x) = \sum_{S \subseteq [n] \setminus J}\hatpsi(S \cup T)\chi_S(x). \]
Since $\psi^{(T)}$ is a signed average of $\psi$ which is non-negative,
  we have $\inorm{\psi^{T}} \leq \inorm{\psi}$ and
  \begin{align*}
    \pnorm{\psi^{T}}{1} & =  \E_{\x \in \pmo^{[n] \setminus
        J}}\left[ \left|\E_{\z \sim \pmo^J}[\psi(x \circ \z)\chi_T(\z)]
      \right|\right]\\
    & \leq \E_{\x \in \pmo^{[n] \setminus
        J}}\left[ \left|\E_{\z \sim \pmo^J}[|\psi(x \circ \z)|]
      \right|\right] = \E_{\x \in \pmo^n}[|\psi(\x)|] = 1.
  \end{align*}
The proof of Theorem \ref{thm:level_k} only uses bounds on the $1$ and
$\infty$ norms of $\psi$. Hence we can repeat the same proof with
$\psi^{(T)}$ to get an identical bound. This proves Equation
\eqref{eq:k-restrict}.
\end{proof}

\marginal*
\begin{proof}
  Let $|P| = p < n$. For $x \in \pmo^P$, we have
  \[ \psi_P(y) = 2^p\Pr_{\x \sim \psi}[\x_P = y] = \sum_{z \in
    \pmo^{\bar{P}}}\frac{\psi(y \circ z)}{2^{n-p}} = \E_{\z \sim
    \mu_{\bar{P}}}[\psi(y \circ \z)].\]
where the last expectation is over the bits $\z$ assigned to $\bar{P}$
being chosen uniformly at random. Using the Fourier expansion of $\psi$,
\[ \E_{\z \sim \mu_{\bar{P}}}[\psi(y \circ \z)] = \sum_{S
  \subseteq [n]}\hatpsi(S)\E_{\z \sim \mu_{\bar{P}}} \chi_S(y
  \circ \z) = \sum_{S \subseteq P}\hatpsi(S) \chi_S(y).\]
\end{proof}

\extension*
\begin{proof}
  It follows that $\psi$ is a distribution since it is non-negative
  and $\pnorm{\psi}{1} = 1$.
  Since the uniform distribution on $\pmo^{(\bar{P})}$ is given by
  $\mu_{\bar{P}}(y) = 1$ for all $y
  \in \pmo^{\bar{P}}$, it follows that $\psi = \psi' \times
  \mu_{\bar{P}}$ is the product of $\psi'$ and $\mu_{\bar{P}}$.

  Claim (2) follows trivially from the definition of $\psi$.

  For claim (3), we show that if $C = (K, y)$ is a minimal
  skewed subcube under $\psi$, then $K \subseteq P$. Indeed, suppose $i
  \in K \setminus P$. Consider the parent subcube $D \supsetneq C$ obtained by
      {\em freeing} the coordinate $i$. Then
      $\Pr_{\x \in \psi}[\x \in C] = \Pr_{\x \in \psi}[\x \in D]/2$,
      whereas $\codim(C) = \codim(D) + 1$, hence
      \[ \skew_\psi(C) = 2^{\codim(C)}\Pr_{\x \in \psi}[\x \in C] -1 =
        \skew_\psi(D).\]
      This violates the definition of minimality. In the other
      direction, it is easy to see that a minimal skewed subcube under
      $\psi'$ is also a minimal skewed subcube under $\psi$.
\end{proof}

\restriction*
\begin{proof}
Recall that $\restr{\psi}{C}(x) = \psi(x)/\ip{\psi, \mu_C}$ for $x \in
C$. For $x \in \pmo^{[n]\setminus J}$ let $x \circ z$ denote the
string obtained  by setting coordinates in $J$ to $z$ and those in
$[n]\setminus J$ to $x$. Then
\begin{align*}
  \restr{\psi}{C}(x) &= \frac{\psi(x \circ z)}{\ip{\psi, \mu_C}}\\
    & =  \frac{1}{\ip{\psi, \mu_C}}\sum_{S \subset [n] \setminus J}\sum_{T \subset
      J}\chi_{S \cup T}(x \circ z)\hatpsi(S \cup T)\\
  & =  \frac{1}{\ip{\psi, \mu_C}}\sum_{S \subset [n] \setminus J}\sum_{T \subset
      J}\chi_S(x)\chi_T(z)\hatpsi(S \cup T)\\
      & = \sum_{S \subset [n] \setminus J}\chi_S(x)\frac{\sum_{T \subset
      J}\chi_T(z)\hatpsi(S \cup T)}{\ip{\psi, \mu_C}}
\end{align*}
\end{proof}

\lemsergey*

\begin{proof}
  Let $d  =2e +2$ for $e \geq 0$.
  We use the fact that BCH codes are $[n, n - el - 1, 2e+2)]$ codes
  \cite[Theorem 16.21]{AlonS}. We need to show that there are many
  minimum weight codewords. For this, let us consider the parity check
  matrix $H$ which has dimension $(el + 1) \times n$ so that
  $\C_{BCH} = \{x \in \mathbb{F}_2^n: Hx = 0\}$.

Now let us consider the mapping $x \rightarrow Hx$ for all $x: \wt(x)
= e +1$. This maps each $x$ to a vector $y \in \zo^{el
  +1}$. For each $y \in \zo^{el + 1}$, let $b_y = |\{x:\wt(x) = e + 1, Hx = 0\}|$ be
the number of vectors of weight $e + 1$ mapped to $y$. Then we have
$\sum_y b_y = \binom{n}{e+1}$, hence
\[ \sum_y b_y^2 \geq \frac{\left(\sum_y b_y\right)^2}{2^{el + 1}} =
  \frac{\binom{n}{e+1}^2}{2(n+1)^e}.\]

For $x_1 \neq x_2$ both of weight $e +1$, if $Hx_1 =
Hx_2$ then $H(x_1 + x_2) = 0$, hence $x_1 + x_2$ is a non-zero codeword of
weight at most $2e +2$, hence it is in fact a minimum weight
codeword. Since there are $\binom{2e +2}{e+ 1} < 2^{2e+ 2}$ ways to write each
vector of weight $2e + 2$ as such as sum, hence the number of vectors
of codewords of weight $2e +2$ is at least
\[ \frac{1}{2^{2e +2}} \sum_y\binom{b_y}{2} = \frac{1}{2^{2e +3}}
    \sum_y(b_y^2 - b_y) \geq \frac{1}{2^{2e +3}}
    \left(\frac{\binom{n}{e+1}^2}{2(n+1)^e} - \binom{n}{e+1}\right) =
    \Omega_e(n^{e + 2}).\]
\end{proof}

\algmain*
\algmainneg*

The algorithm is $\fsr(\emptyset, \emptyset)$ in the positive case and $\fsn(\emptyset, \emptyset)$ in the negative case with the nondeterminism replaced. In the positive case, one could replace the enumeration of Fourier coefficients (this is \cref{line:guess_s} in \cref{alg:find_skew_recursive} and \cref{line:guess_s_neg} in \cref{alg:find_skew_neg}) by a call to $\ffc$. However this would naively yield a running time bound of $O(n^{0.8k}) \cdot k^{O(k)} \left(\ln (e \inorm{\psi} / \eps \gamma)\right)^k / (\eps \gamma)^{2k + O(1)}$. We show the stronger bound claimed in the theorem by making the following modification. Instead of running $\ffc$ at every recursive call, we run it once at the top level to get the list $L$ of heavy coefficients for $\psi$, and `deduce' the heavy Fourier coefficients for each restricted distribution $\restr{\psi}{C}$ from the original list $L$. To do so efficiently, we require a data structure which we now explain. 

We preprocess $L$ by creating a graph $G_L$. Vertices of this graph are indexed by elements of the power set $2^{[n]}$. For each coefficient $S \in L$, and each subset $T \subset S$, add the directed edge $T \rightarrow S$. Furthermore, each $T$ stores $k$ lists, where the $i^{th}$ list contains all sets $S$ in the out-neighborhood of $T$ such that $|S \backslash T| = i$. Since $2^k|L|$ is a bound on both the total number of edges and the total number of vertices in the graph, creating the graph takes $O(2^k |L|)$ time. Creating the partitions of the out-neighborhoods also takes $O(2^k |L|)$ times since each edge in the graph need only be processed once.

We summarize the algorithm below for reference.
\begin{algorithm}[ht]
	\caption{$\pc(L)$}
	\label{alg:preprocess}
	\begin{algorithmic}[1]
		\State $V, E \leftarrow \emptyset$
		\For{$S \in L$}
		\For{$T \subseteq S$}
		\State $V \leftarrow V \cup \{S, T\}$
		\State $E \leftarrow E \cup \{(T \rightarrow S)\}$
		\EndFor
		\EndFor
		\For{edge $(T \rightarrow S) \in E$}
		\State Let $i = |S \backslash T|$, and add $S$ to the $i^{th}$ list stored at $T$.
		\EndFor
		\State \Return $G_L = (V, E)$
	\end{algorithmic}
\end{algorithm}

Next, given $G_L$ the preprocessed form of $L$, a target subcube $C = (J,z)$ and a threshold $\tau$, we show how to output all Fourier coefficients $\widehat{\restr{\psi}{C}}(S)$ such that
$\widehat{\restr{\psi}{C}}(S) \geq \tau$.

\begin{algorithm}[ht]
\caption{$\dsc(G_L, C, \tau)$}
\label{alg:deduce_subcube_coeffs}
\begin{algorithmic}[1]
	%			\State Estimate $\ip{\psi, \mu_C}$.
	\State Let $J$ be the coordinates fixed by $C$.
	\State $L' \leftarrow \emptyset$
	\For{$T \subseteq J$}
	\For{$S$ out-neighbor of $T$ in $G_L$ with $|S \backslash T| \leq k - |J|$}
	\State Check by sampling if $\left|\widehat{\restr{\psi}{C}}(S\backslash J)\right| \geq 3\tau/4$ to accuracy $\tau/4$. If so, add $S\backslash J$ to $L'$.
	\EndFor
	\EndFor
	\State \Return $L'$.
\end{algorithmic}
\end{algorithm}

\begin{lemma}
	\label{lem:deducecoeffs}
	Let $L = \{ S \subseteq [n]: |S| \leq k, |\hatpsi(S)|
	\geq   \tau/4^k\}$, and $G_L$ be the output of $\textsc{Preprocess}(L)$. Then $\dsc(G_L,C,\tau)$ returns the list of
	Fourier coefficients of $\restr{\psi}{C}$ of degree at
	most $k$ and magnitude at least $\tau$. Furthermore, $\dsc(G_L,C,\tau)$ runs in time 
	\[\poly(n) \cdot O(|L'|) \leq \poly(n) \cdot  2^{O(k)} \cdot \frac{(\ln(e \inorm{\psi}))^{k - |J|} }{\tau^2}.\]
\end{lemma}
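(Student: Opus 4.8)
The plan is to prove two properties of $\dsc$ — completeness (every degree-$\le d$ Fourier coefficient of $\restr{\psi}{C}$ of magnitude $\ge\tau$ is returned) and soundness (nothing of true magnitude below $\tau/2$ is returned) — and then to read off the running time from the data structure $G_L$ together with the level-$k$ inequality. Write $C=(J,z)$ and $d=k-|J|$ for the degree budget on the free coordinates. Throughout I use that every subcube on which $\dsc$ is invoked inside the outer algorithm has passed the test in \cref{line:test} (or its negative-skew analogue), so $\ip{\psi,\mu_C}\ge 2^{-d}\ge 2^{-k}$; equivalently, $\Pr_{\x\sim\psi}[\x\in C]=2^{-|J|}\ip{\psi,\mu_C}\ge 2^{-k}$.

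The starting point is \cref{lem:restrict}: for $S'\subseteq[n]\setminus J$,
\[ \widehat{\restr{\psi}{C}}(S') \;=\; \frac{1}{\ip{\psi,\mu_C}}\sum_{T\subseteq J}\hatpsi(S'\cup T)\,\chi_T(z). \]
For completeness, I would take any $S'$ with $|S'|\le d$ and $|\widehat{\restr{\psi}{C}}(S')|\ge\tau$ and apply the triangle inequality to this sum of $2^{|J|}$ terms: some $T^\ast\subseteq J$ has $|\hatpsi(S'\cup T^\ast)|\ge \tau\,\ip{\psi,\mu_C}\,2^{-|J|}\ge\tau\,2^{-k}\ge\tau/4^{k}$, and since $|S'\cup T^\ast|\le d+|J|=k$ the set $S^\ast:=S'\cup T^\ast$ lies in $L$. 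Hence $G_L$ contains the edge $T^\ast\to S^\ast$, and $S^\ast$ sits in one of the lists indexed $1$ through $d$ at $T^\ast$ (since $|S^\ast\setminus T^\ast|=|S'|\le d$), so $\dsc$ examines the candidate $(T^\ast,S^\ast)$ and tests exactly $S^\ast\setminus J=S'$. Estimating $\widehat{\restr{\psi}{C}}(S')=\E_{\x\sim\restr{\psi}{C}}[\chi_{S'}(\x)]$ to additive error $\tau/4$ — which needs $\poly(n,2^k,1/\tau)$ draws from $\psi$, rejection-sampled to simulate $\restr{\psi}{C}$, efficient because $\Pr_{\x\sim\psi}[\x\in C]\ge 2^{-k}$ — gives an estimate exceeding $3\tau/4$ with high probability, so $S'$ is added to $L'$. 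A single union bound over the polynomially many candidates examined (across all calls to $\dsc$) keeps every estimate accurate.

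For soundness and the list-size bound, any $S'$ put in $L'$ had estimate $\ge 3\tau/4$, hence true magnitude $\ge\tau/2$ with high probability, and satisfies $S'\subseteq[n]\setminus J$, $|S'|\le d$; by Parseval the number of such $S'$ is at most $W^{\le d}(\restr{\psi}{C})/(\tau/2)^2$, and applying \cref{thm:level_k} to $\restr{\psi}{C}$ together with \cref{fact:inorm} and $\ip{\psi,\mu_C}\ge 2^{-d}$ (so $\inorm{\restr{\psi}{C}}\le 2^{d}\inorm{\psi}$) bounds this by $\tfrac{4e^2}{\tau^2}\big(\ln(e\cdot 2^{d}\inorm{\psi})\big)^{d}\le 2^{O(k)}(\ln(e\inorm{\psi}))^{k-|J|}/\tau^2$. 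This pins down $L'$ as exactly the desired list and bounds $|L'|$. For the running time, the preprocessing $\pc$ lets $\dsc$ walk, for each $T\subseteq J$, only the out-neighbour lists at $T$ indexed $1$ through $d$, so it inspects $O(2^{|J|}|L|)$ candidate pairs, each handled in $\poly(n)$ time; plugging the level-$k$ bounds on $|L|$ and on the surviving candidates $|L'|$ into this yields the claimed $\poly(n)\cdot O(|L'|)\le\poly(n)\cdot 2^{O(k)}(\ln(e\inorm{\psi}))^{k-|J|}/\tau^2$.

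The step I expect to be the main obstacle is completeness: one must verify that the $4^{k}$ slack deliberately built into the definition of $L$ is enough to absorb \emph{both} the factor $2^{|J|}$ lost to the triangle inequality over $T\subseteq J$ \emph{and} the worst-case shrinkage of $\ip{\psi,\mu_C}$ down to about $2^{-k}$, so that a heavy coefficient of the restriction is always witnessed by a coefficient that survives the thresholding used to build $L$ — this is precisely why $L$ uses threshold $\tau/4^{k}$ rather than $\tau$. A secondary point is keeping the sampling cheap (again through $\ip{\psi,\mu_C}\ge 2^{-k}$) and setting the failure probabilities so a single union bound covers all the coefficient estimates made across the at most $k^{O(k)}$ invocations of $\dsc$ in the outer algorithm.
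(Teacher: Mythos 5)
Your correctness argument (completeness and soundness) is essentially the paper's: both go through \cref{lem:restrict}, apply the triangle inequality over the $2^{|J|}$ sets $T\subseteq J$, and use $\ip{\psi,\mu_C}\ge 2^{-k}$ so that the $\tau/4^k$ threshold defining $L$ absorbs both losses; your version is if anything more explicit about where the two factors of $2^k$ come from and about the sampling/union-bound details that the paper elides. That part is fine.

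The gap is in the running-time analysis. You bound the number of candidate pairs inspected by $O(2^{|J|}|L|)$ and then "plug in the level-$k$ bound on $|L|$" — but that bound is $|L|\le 16^k e^2(\ln(e\inorm{\psi}))^{k}/\tau^2$, with exponent $k$, so your chain of inequalities delivers $\poly(n)\cdot 2^{O(k)}(\ln(e\inorm{\psi}))^{k}/\tau^2$, not the claimed $(\ln(e\inorm{\psi}))^{k-|J|}/\tau^2$. Your sharper count with exponent $k-|J|$ applies only to the \emph{survivors} $L'$, which does not control the time spent inspecting candidates that fail the test. The exponent matters downstream: the amortization in the proof of \cref{thm:alg_main} charges the cost of each $\dsc$ call against a branching factor of order $(\ln(e\inorm{\psi}))^{k-|J|}$, and an extra $(\ln(e\inorm{\psi}))^{|J|}$ per call would compound. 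The missing ingredient is \cref{cor:level_k}: every pair $(T,S)$ the algorithm touches has $T\subseteq J$, $S\in L$, and $|S\setminus J|\le|S\setminus T|\le k-|J|$, so the number of distinct such $S$ is at most $W^{\le k-|J|}(\psi,J)\cdot(4^k/\tau)^2\le 2^{|J|}e^2(\ln(e\inorm{\psi}))^{k-|J|}\cdot 16^k/\tau^2$, and each is visited at most $2^{|J|}$ times. (A second, cosmetic point: bounding $|L'|$ via \cref{thm:level_k} applied to $\restr{\psi}{C}$ with $\inorm{\restr{\psi}{C}}\le 2^{k}\inorm{\psi}$ yields $(\ln(e2^{k}\inorm{\psi}))^{k-|J|}$, which is $k^{O(k)}(\ln(e\inorm{\psi}))^{k-|J|}$ rather than $2^{O(k)}(\cdots)$; harmless here, but \cref{cor:level_k} avoids it.)
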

\begin{proof}
	
	The output $L'$ consists only of sets $S'$ such that $S' \cup J \in L$, $S' \cap J = \emptyset$ and $|S'| \leq k - |J|$. Furthermore it contains all $S'$ meeting these criteria such that $\left |\widehat{\restr{\psi}{C}}(S')\right| \geq \tau$, note that for any set $H \in L$ such that $|H| \leq k$, if $T$ is chosen to be $H \cap J$, then $|H \backslash J| = |H \backslash T| \leq k - |J|$ and thus the algorithm will add $ H \backslash J$ to the output.
	
	To obtain the claim of the lemma, we need to argue that for every Fourier coefficient $\widehat{\restr{\psi}{C}}(S)$ of absolute value at least $\tau$, the set $S$ must appear in $L'$. We need the following consequence of Lemma
	\ref{lem:restrict}: let $\psi$ be a distribution on $\pmo^n$
	and let $C = (J,z)$ be a subcube. Then
	\[\widehat{\restr{\psi}{C}}(S) = \sum_{T \subseteq J}
	\frac{\chi_{T}(z) \hatpsi(S\cup T)}{\ip{\psi, \mu_C}}.\]
	It follows that every coefficient
	$\widehat{\restr{\psi}{C}}(S)$ is the signed sum of at
	most $2^k$ coefficients $\hatpsi(R) = \hatpsi(S \cup T)$, which is then
	scaled by $1 / \ip{\psi, \mu_C}$ (at most $2^k$). Furthermore, if $|S| \leq k - |J|$, then
	each such coefficient has $|R| \leq |S| + |T| \leq k$. Thus
	if $\widehat{\restr{\psi}{C}}(S) \geq \tau$, there is
	at least one $R \subseteq J$ with $|R| \leq k$ such that $|\widehat{\psi}(R)| \geq \tau/4^k$, which means that $R \in L$.
	
	Finally, the running time claim follows from \cref{cor:level_k} and the fact that for every $S \in L$ we have $\hatpsi(S) \geq \tau / 4^k$.
\end{proof}

%\begin{lemma}
%	Let $L = \{ S \subseteq [n]: |S| \leq k, |\hatpsi(S)| \geq   \tau/4^k\}$. Then there is an algorithm that given as input $G_L$ and $J \subseteq [n]$ with $|J| < k$, outputs the list $L'$ of all coefficients $S'$ such that $S'\cup J \in L$, $S' \cap J = \emptyset$ and $|S'| \leq k - |J|$. Furthermore this algorithm runs in time $|L'| \leq 2^{|J|} e^2 (\ln(e \inorm{\psi}))^{k - |J|} / (\tau^2 16^k)$.
%\end{lemma}
%
%\begin{proof}
%	The algorithm examines all $2^{|J|}$ subsets $T\subseteq J$. Using $G_L$, for each such subset $T$ the algorithm enumerates all $S \in L$ such that $|S \backslash T| \leq k - |J|$, and adds $S \backslash J$ to the output.
%	
%	The output $L'$ clearly consists only of sets $S'$ such that $S' \cup J \in L$, $S' \cap J = \emptyset$ and $|S'| \leq k - |J|$. To see that it contains all such $S'$, note that for any set $H \in L$ such that $|H| \leq k$, if $T$ is chosen to be $H \cap J$, then $|H \backslash J| = |H \backslash T| \leq k - |J|$ and thus the algorithm will add $ H \backslash J$ to the output.
%	
%
%\end{proof}

We are now ready to prove our main theorem. We start with the positive case.

\begin{proof}[Proof of \cref{thm:alg_main}]
	We start by running $\ffc(\psi, k, \rho^+, \lambda)$ once, with $\rho^+ = \eps\sqrt{\gamma} /16^k$. This outputs a list $L^+$ containing all $S$ where $|S| \leq k$ and
	$\hatpsi(S) \geq \eps\sqrt{\gamma} / 4^k$. Subsequently, we compute $G_{L^+} \leftarrow \pc(L^+)$ using the output. This set up phase has running time $R^+$ bounded by
	\[R^+ \leq \tilde O \left(n^{k\left(\frac{\omega}{3 -
			\lambda}\right)}\right) + \frac{k^{O(k)} \tilde O\left (n^{k/3} \right)\ln(e
		\inorm{\psi})^k}{(\eps\sqrt{\gamma})^{4/\lambda}}. \]
	
%	Similarly, for the negative case we run $\ffc(\psi, k, \rho^-, \lambda)$ once, with $\rho^- = \eps\gamma /16^k$, which outputs This outputs a list $L^-$ containing all $S$ where $|S| \leq k$ and
%	$\hatpsi(S) \geq \eps\gamma / 4^k$, and has running time $R^-$ bounded by
%	\[R^- \leq O\left(n^{k\left(\frac{\omega}{3 -
%			\lambda}\right)}\right) + \frac{k^{O(k)}n^{k/3}\ln(e
%		\inorm{\psi})^k}{(\eps\gamma)^{4/\lambda}}. \]
	
	Next we run $\fsr(\emptyset, \emptyset)$ but we replace the nondeterministic enumeration of Fourier coefficients (this is \cref{line:guess_s} in \cref{alg:find_skew_recursive}) by the call $\dsc(G_{L^+}, C, \tau^+)$ where $\tau^+ := \eps \sqrt{\gamma} / \left(k_t \cdot \binom{k_t}{|S_t|}\right)$. We also replace the nondeterministic choice of $z$ (\cref{line:guess_z} in \cref{alg:find_skew_recursive}) by simple enumeration over all possible choices. Correctness follows from \cref{lem:complete} together with \cref{lem:ffc,lem:deducecoeffs} and it remains to show the running time bound.
	
	By \cref{lem:deducecoeffs}, at every subcube $C$ the algorithm spends time at most
	\[\poly(n) \cdot 2^{O(k)} \cdot \frac{(\ln(e \inorm{\psi}))^{k - |J|}}{(\tau^+)^2}  = \poly(n) \cdot 2^{O(k)} \cdot \frac{(\ln(e \inorm{\psi}))^{k - |J|}}{(\eps \sqrt{\gamma})^2} \] 
	to run $\dsc$. 
	
	On the other hand, the proof of \cref{lem:comb_bound} requires 
that the branching factor of $\fsr$ be bounded as in \cref{lem:guess_s}. 
Since this bound is at least $(\ln(e \inorm{\psi}))^{k-|J|} / (\eps \sqrt{\gamma})$ 
(and we may assume WLOG that the branching factor is \textit{at least} this threshold), 
we may amortize the cost of each call to $\dsc$ by charging to each child call the average running time per child. 
The time spent per child is $\poly(n) \cdot 2^{O(k)}$, and we argued in \cref{lem:comb_bound} that the total number of recursive calls to $\fsr$ is at most
	\[k^{O(k)} \left(\frac{\ln(e \inorm{\psi})}{\eps^2\gamma}\right)^k.\] 
	Thus we may bound the running time of $\fsr(\emptyset, \emptyset)$ by this expression as well.
	
	To conclude, the final running time of the algorithm in the positive skew case is:
	\begin{align*}
		&R^+ + \poly(n, k^k) \cdot k^{O(k)} \left(\frac{\ln(e \inorm{\psi})}{\eps^2\gamma}\right)^k \\
		&\leq \tilde O \left(n^{k\left(\frac{\omega}{3 - \lambda}\right)}\right) + \frac{k^{O(k)} \tilde O \left (n^{k/3} \right )\ln(e \inorm{\psi})^k}{(\eps\sqrt{\gamma})^{4/\lambda}} +\poly(n, k^k) \cdot k^{O(k)} \left(\frac{\ln(e \inorm{\psi})}{\eps^2\gamma}\right)^k \\
		&\leq \tilde O \left(n^{k\left(\frac{\omega}{3 - \lambda}\right)}\right)+  
		k^{O(k)} \cdot \left(\ln(e \inorm{\psi})\right)^k \left( \frac{\tilde O\left(n^{k/3}\right)}{(\eps \sqrt{\gamma})^{4/\lambda}} + \frac{\poly(n)}{(\eps \sqrt{\gamma})^{2k}} \right). \qedhere
	\end{align*}
\end{proof}

The negative case is identical, but with a different setting of parameters.

\begin{proof}[Proof of \cref{thm:alg_main_neg}]
In this case we run $\ffc(\psi, k, \rho^-, \lambda)$ with $\rho^- = \eps \gamma / 16^k$. This outputs the list $L^-$, and we set $G_{L^-} \leftarrow \pc(L^-)$. This all has running time $R^-$ bounded by
\[R^- \leq \tilde O\left(n^{k\left(\frac{\omega}{3 - \lambda}\right)}\right) + \frac{k^{O(k)}\tilde O\left(n^{k/3}\right)\ln(e\inorm{\psi})^k}{(\eps\gamma)^{4/\lambda}}.\]
Now we run $\fsn(\emptyset, \emptyset)$, with Fourier coefficient enumeration (\cref{line:guess_s_neg} in \cref{alg:find_skew_neg}) replaced with $\dsc(G_{L^-}, C, \tau^-)$, and we set $\tau^- = \eps \gamma / \left(k_t \cdot \binom{k_t}{|S_t|}\right)$. The analysis is identical to the positive case, and the final running time is:
\[\tilde O \left(n^{k\left(\frac{\omega}{3 - \lambda}\right)}\right)+  
k^{O(k)} \cdot \left(\ln(e \inorm{\psi})\right)^k \left( \frac{\tilde O\left(n^{k/3}\right)}{(\eps \gamma)^{4/\lambda}} + \frac{\poly(n)}{(\eps \gamma)^{2k}} \right). \qedhere\]
\end{proof}

\end{document}